\documentclass[journal,draftcls,onecolumn]{IEEEtran}
\usepackage{graphicx}
\usepackage{tabularx}
\usepackage[linesnumbered,ruled,vlined,algo2e]{algorithm2e}
\SetKw{KwDownTo}{down to}
\usepackage{booktabs}
\usepackage{threeparttable}
\usepackage{makecell}
\usepackage{multirow}
\usepackage{array}
\newcolumntype{M}[1]{>{\centering\arraybackslash}m{#1}}
\newcolumntype{C}{>{\centering\arraybackslash}X}
\usepackage[caption=false,font=normalsize,labelfont=sf,textfont=sf]{subfig}
\usepackage{textcomp}
\usepackage{stfloats}
\usepackage{url}
\usepackage{verbatim}
\usepackage{cite}
\usepackage{amsthm}
\usepackage{enumitem}
\usepackage[dvipsnames]{xcolor}
\usepackage{hyperref}
\hypersetup{
    colorlinks=true,
    urlcolor=blue,
    linkcolor=blue,
    citecolor=blue,
    bookmarksopen=true,
    bookmarksnumbered=true
}

\newtheorem{theorem}{Theorem}

\newtheorem{proposition}{Proposition}
\newtheorem{corollary}{Corollary}
\theoremstyle{definition}
\newtheorem{definition}{Definition}

\newtheorem{remark}{Remark}

\usepackage[
n,
operators,
advantage,
sets,
adversary,
landau,
probability,
notions,	
logic,
ff,
mm,
primitives,
events,
complexity,
asymptotics,
keys]{cryptocode}

\newcommand{\splitatcommas}[1]{%
	\begingroup
	\begingroup\lccode`~=`, \lowercase{\endgroup
		\edef~{\mathchar\the\mathcode`, \penalty0 \noexpand\hspace{0pt plus 1em}}%
	}\mathcode`,="8000 #1%
	\endgroup
}

\DeclareMathOperator{\dft}{DFT} %extra operator for writting DFT as in math mode
\DeclareMathOperator{\fft}{AFFT} %extra operator for writting DFT as in math mode

\begin{document}

\title{On the Additive FFT Techniques over Binary Extension Fields}

 \author{Susanta Samanta,~Mohammadtaghi Badakhshan,~and~Guang Gong%
 \thanks{S. Samanta and G. Gong are with the Department of Electrical and Computer Engineering, University of Waterloo, Waterloo, ON N2L 3G1, Canada (e-mail: ssamanta@uwaterloo.ca; ggong@uwaterloo.ca).}%
 \thanks{M. Badakhshan is with LayerZero Labs (e-mail: mtbadakhshan@gmail.com).}%
 }

\maketitle
\thispagestyle{plain}
\pagestyle{plain}

\begin{abstract}
    The Bailey's four-step FFT algorithm (1989) organizes a large Fourier transform into independent column FFTs, a diagonal twiddle-factor multiplication, and independent row FFTs. This decomposition can expose parallelism, improve memory locality, and support hierarchical implementations. Motivated by Bailey's matrix formulation, we develop additive FFT techniques for polynomial evaluation over affine subspaces of binary extension fields. Our key insight is that the Taylor expansion with respect to vanishing polynomials of subspaces provides a structural counterpart to Bailey's matrix formulation. It decomposes an additive FFT (AFFT) into independent sub-AFFTs associated with the columns and rows of a matrix. Based on this framework, we first present a general-basis AFFT that applies to any ordered basis and any split of the dimension, providing a unified baseline for measuring the gains from specialization. We then specialize the framework to the Cantor special basis and obtain two AFFT algorithms. The first supports an arbitrary decomposition of the AFFT dimension and exploits the Cantor special basis structure to perform the Taylor expansion stage without finite field multiplications. The second uses a decomposition that preserves the binomial form of the relevant subspace polynomials. It requires exactly $\frac{1}{2}n\log_2 n$ multiplications, together with a closed-form addition count determined by the binary representation of $m$. Our implementation results show that this algorithm is faster than the LCH AFFT over a Cantor special basis in 37 of the 42 configurations tested across two hardware platforms. This performance advantage stems from its fully recursive structure, which provides memory locality by design and avoids separate basis-conversion and evaluation stages. We further generalize the butterfly phase of the LCH additive FFT in the novel polynomial basis to support an arbitrary decomposition of the AFFT dimension, while retaining $n\log_2 n$ additions and $\frac{1}{2}n\log_2 n$ multiplications independently of the split. Finally, in a separate analysis, we formalize the notion of \emph{partial Cantor special bases} and identify parameter regimes in which both the von zur Gathen--Gerhard algorithm and our general-basis AFFT require fewer additions and multiplications than the first Gao--Mateer algorithm. Notably, the range of parameters for which our general-basis AFFT provides this improvement is substantially wider than that for the von zur Gathen--Gerhard algorithm.
\end{abstract}

\begin{IEEEkeywords}
Polynomial evaluation, FFT, binary extension field, affine subspace, vanishing polynomial, additive FFT, arithmetic complexity, Taylor expansion.
\end{IEEEkeywords}

\section{Introduction}\label{Sec:Introduction}
Evaluating a polynomial of degree less than $n$ at $n$ distinct points, commonly called \emph{multipoint evaluation}, is a central computational primitive across coding theory, cryptography, and signal processing. In the classical discrete Fourier transform (DFT) over a field that contains a primitive $n$th root of unity $\omega$, the evaluation domain consists of the consecutive powers $1, \omega, \ldots, \omega^{n-1}$. When $n$ is a smooth integer, the Cooley--Tukey fast Fourier transform (FFT)~\cite{CooleyTukey1965} carries out this evaluation in $O(n \log n)$ arithmetic operations by leveraging the multiplicative structure inherent in the roots of unity. From an algebraic perspective, the divide-and-conquer structure of the Cooley--Tukey algorithm corresponds to iterated factorizations of the polynomial $x^n-1$. Specifically, given a decomposition $n=n_1n_2$, one writes $x^n-1 = \prod_{j=0}^{n_2-1} \left(x^{n_1}-\omega^{j n_1}\right)$. Under this interpretation, the input polynomial is reduced modulo each factor, and the resulting remainders are then recursively reduced modulo further subfactors until only linear terms of the form $x-\omega^i$ remain. Because the remainder of a polynomial $f(x)$ upon division by $x-\omega^i$ equals the evaluation $f(\omega^i)$, the final set of residues yields exactly the desired point evaluations.

Over a binary extension field $\mathbb{F}_{2^k}$, however, the evaluation points of interest form an affine subspace of $\mathbb{F}_{2^k}$ rather than a multiplicative subgroup. Since the multiplicative group $\mathbb{F}_{2^k}^{\star}$ has odd order $2^k-1$, it contains no nontrivial elements of even order. In particular, it cannot provide the roots of unity required for a classical DFT of length $n=2^m$. Additive fast Fourier transforms (AFFTs) overcome this obstruction by exploiting the additive structure of the evaluation domain. Let $W_m = \langle \beta_0,\beta_1,\ldots,\beta_{m-1}\rangle \subseteq \mathbb{F}_{2^k}$ be an $m$-dimensional subspace of $\mathbb{F}_{2^k}$ and let $\theta \in \mathbb{F}_{2^k}$. Given a polynomial $f(x)\in \mathbb{F}_{2^k}[x]$ of degree less than $n=2^m$, the corresponding additive DFT evaluates $f$ at all points of the affine subspace $\theta+W_m$. An AFFT computes these evaluations efficiently by exploiting the additive structure of $W_m$.

Additive FFTs can be used in zero-knowledge succinct non-interactive arguments of knowledge (zkSNARKs) over binary extension fields. A zkSNARK enables a prover to convince a verifier that it knows a witness $w$ satisfying a statement $x$, using a proof that is succinct, efficiently verifiable, non-interactive, and zero-knowledge; that is, the proof reveals no information about $w$ beyond the validity of the statement $x$. Such protocols typically reduce the computation to polynomial relations and require the prover to interpolate and evaluate the resulting polynomials over large affine subspaces. These operations are precisely the tasks addressed by additive FFTs and can account for a substantial portion of the prover's running time. Examples of such protocols include Ligero~\cite{Ames2017Ligero}, STARK~\cite{STARK2018}, Aurora~\cite{Aurora2019}, Fractal~\cite{Chiesa2020Fractal}, and Polaris~\cite{Polaris2022}. They also enable fast encoding and decoding of Reed--Solomon codes~\cite{Gao2003,LCH-basis2014,LCH-FFT2016,LCH-conv2016} and fast multiplication of binary polynomials~\cite{LCH-Fast_Mult2018,LCH-Frobenius2018_2,LCH-Frobenius2018}. The performance of these applications depends not only on the exact numbers of additions and multiplications, but also on the choice of polynomial basis and the overhead incurred by basis conversions. For example, in~\cite{BSG2026}, the authors show that, when integrated into the Aurora prover, their optimized Cantor algorithm outperforms the LCH algorithm for smaller inputs. The slower performance of the LCH-based implementation is likely attributable to the memory-access overhead associated with basis conversion. This comparison illustrates that the interaction between an FFT algorithm and the underlying protocol structure can substantially affect overall system efficiency, beyond what standalone benchmarks capture.

The concept of the AFFT emerged in the late 1980s. It was first introduced by Wang and Zhu in 1988~\cite{WangZhu1988}, followed independently by Cantor in 1989~\cite{Cantor1989FFT}. These methods focus on evaluating polynomials at the roots of the vanishing polynomial associated with a given subspace or affine subspace. In his work~\cite{Cantor1989FFT}, Cantor proposed an FFT algorithm for evaluating a polynomial $f(x)$ of degree less than $n = 2^m$ over an $m$-dimensional affine subspace of the field (or subfield) $\mathbb{F}_{2^k}$, where $k = 2^\ell$ for some $\ell$. Later, von zur Gathen and Gerhard~\cite{zurGathenFFT} extended this approach to support any arbitrary $k$, though this generalization resulted in increased computational complexity. Gao and Mateer~\cite{Gao2010FFT} later introduced two alternative additive FFT algorithms based on Taylor expansion. Their first algorithm, applicable to arbitrary values of $k$, demonstrated improved efficiency compared to the method proposed by von zur Gathen and Gerhard. The second algorithm, designed for AFFT of length $2^m$ (where $m$ is a power of 2), maintained the same number of multiplications as Cantor's approach but optimized the number of additions.

In 2014, Lin et al.~\cite{LCH-basis2014} introduced a novel polynomial basis constructed using subspace polynomials over $\mathbb{F}_{2^k}$ for FFT computations. Their evaluation method processes polynomial coefficients in the LCH basis, achieving a complexity of $O(n \log n)$ for both additions and multiplications. The algorithmic advantages of this novel polynomial basis were initially showcased through fast encoding and decoding algorithms for Reed--Solomon codes, which were subsequently refined in~\cite{LCH-FFT2016,LCH-conv2016}. Further in~\cite{LCH-conv2016}, Lin et al. tackled the problem of converting between the LCH and monomial bases. When the subspace $W_m$ is generated by a Cantor special basis, their algorithms perform this conversion using $O(n\log n\log\log n)$ additions and no multiplications, with further refinements presented in~\cite{COXON2021}. These methods have been applied to fast binary polynomial multiplication~\cite{LCH-Fast_Mult2018,LCH-Frobenius2018_2,LCH-Frobenius2018}. The Cantor special basis also plays a key role in optimizing other AFFTs. For instance, the second FFT algorithm by Gao and Mateer~\cite{Gao2010FFT}, designed for lengths $n = 2^m$ with $m = 2^{t}$ for some $t$, requires only $\frac{1}{2} n \log n$ multiplications. Moreover, when the first
Gao--Mateer algorithm is implemented using a Cantor basis, the polynomial-scaling operation required in the general basis setting can be avoided at each recursive step~\cite{BernsteinChou2014,BSG2026}.

The lack of sufficiently smooth roots of unity has also motivated FFT-like algorithms over alternative algebraic evaluation domains. The elliptic-curve FFT (ECFFT)~\cite{ECFFT1_2023} constructs evaluation sets from suitable cosets of elliptic-curve subgroups and uses isogenies of smooth degree to obtain the recursive maps underlying the transform. The circle FFT~\cite{CircleFFT2024} instead evaluates functions on the circle curve $x^2+y^2=1$ over suitable finite fields. More generally, the G-FFT of Li and Xing~\cite{GFFT2024} derives FFT algorithms from automorphism subgroups of the rational function field $\mathbb{F}_q(x)$. Its affine-group instances recover the multiplicative and additive FFT settings, corresponding to smooth transform lengths dividing $q-1$ and $q$, respectively, while non-affine cyclic subgroups yield transforms for smooth lengths dividing $q+1$. These constructions enlarge the range of algebraic domains that support FFT-like computations. Their fastest bounds rely on polynomial bases adapted to the relevant subgroup or recursive algebraic
structure. In particular, the affine G-FFT applied directly to polynomials in the standard monomial basis has complexity $O(n\log^2 n)$. In this work, we restrict our attention to the direct evaluation of polynomials represented in the standard monomial basis over affine
$\mathbb{F}_2$-subspaces of binary extension fields.

\paragraph*{\textbf{Our Contributions}}

This paper develops an additive counterpart of the Bailey's four-step FFT~\cite{Bailey1990} for polynomial evaluation. The central observation is that a Taylor expansion with respect to a subspace vanishing polynomial provides the algebraic decomposition needed to organize an AFFT into independent column and row transforms. More precisely, for a decomposition $m=m_1+m_2$, the Taylor coefficients are arranged in a $2^{m_2}\times 2^{m_1}$ matrix, after which the evaluation is completed by independent column AFFTs followed by independent row AFFTs. 

The resulting matrix decomposition provides a unified framework for the algorithms developed in this paper. At the general-basis level, it supports an arbitrary dimension split and thereby permits the dimensions of the column and row sub-AFFTs to be selected according to the computational architecture. Its specialization to a Cantor special basis leads to Algorithms~\ref{Algo:AFFT-Cantor-any-m1} and~\ref{Algo:AFFT-Cantor-fixed-m1}. 
% Algorithm~\ref{Algo:AFFT-Cantor-any-m1} retains the freedom to choose any split, whereas Algorithm~\ref{Algo:AFFT-Cantor-fixed-m1} uses a prescribed recursive split that preserves the binomial form of the relevant subspace vanishing polynomial and consequently achieves a lower addition count. Both algorithms require exactly $\frac{1}{2}n\log_2 n$ multiplications. 
The arithmetic costs of the proposed algorithms and the existing Gao--Mateer~\cite{Gao2010FFT}, LCH~\cite{LCH-basis2014,LCH-FFT2016}, and Cantor AFFTs~\cite{Cantor1989FFT} are summarized in Table~\ref{Table:AFFT_cost-cmp}. 
% The exact operation counts for the Gao–Mateer and Cantor rows of Table~\ref{Table:AFFT_cost-cmp} are established in the work~\cite{BSG2026}; the entries proved in the present paper are exactly those for Algorithms~\ref{Algo:general-basis} and~\ref{Algo:AFFT-Cantor-fixed-m1}, via Theorems~\ref{Th:costAlgo1} and \ref{Theorem:cost-Algo-fixed-m1} respectively. 
In addition, the freedom provided by the general framework allows Algorithm~\ref{Algo:general-basis} to benefit from an available partial Cantor special basis, yielding parameter regimes in which it outperforms the first Gao--Mateer algorithm. The main contributions are as follows.

% which have the same multiplication count but offer different tradeoffs between splitting flexibility and addition cost. In addition, the freedom provided by the general framework allows Algorithm~\ref{Algo:general-basis} to benefit from an available partial Cantor special basis, yielding parameter regimes in which it outperforms the first Gao--Mateer algorithm~\cite{Gao2010FFT}. ({\color{blue} refer Table I}) The main contributions are as follows.

\begin{enumerate}
    \item \textbf{A general-basis AFFT with an arbitrary dimension split.} We present an AFFT (Algorithm~\ref{Algo:general-basis}), an additive counterpart of the Bailey's four-step FFT, for polynomial evaluation over an affine subspace generated by any ordered basis. For any decomposition $m=m_1+m_2$, the algorithm first computes the Taylor expansion of the input polynomial with respect to the subspace vanishing polynomial $Z_{W_{m_1}}(x)$, arranges the resulting coefficients in a matrix, and then performs independent column and row AFFTs. We prove that, for $n=2^m$, the algorithm requires
    \(
    \frac{1}{4}n(\log_2 n)^2 +\frac{3}{4}n\log_2 n
    \)
    additions and the same number of multiplications, independently of the choice of $m_1$ and $m_2$. In the general-basis setting, its addition count matches that of the first Gao--Mateer algorithm~\cite{Gao2010FFT}, although its multiplication count is asymptotically higher by a factor of $\mathcal{O}(\log n)$. Nevertheless, the Bailey-style matrix decomposition provides Algorithm~\ref{Algo:general-basis} several structural and practical advantages. In particular, its split-invariant complexity permits the dimensions of the column and row sub-AFFTs to be selected according to implementation considerations, including parallelism, cache utilization, and memory organization, without changing the arithmetic cost. Moreover, its uniform matrix-decomposition structure serves as the common foundation for the specializations to Cantor special basis  (Algorithms~\ref{Algo:AFFT-Cantor-any-m1} and~\ref{Algo:AFFT-Cantor-fixed-m1}), whose operation counts are competitive with those of the best known AFFTs (see Section~\ref{Sec:AFFT-Cantor}).

    \item \textbf{A specialization for a Cantor special basis.} We specialize the matrix-decomposition framework to affine subspaces generated by a Cantor special basis. In this setting, the coefficients of the relevant subspace vanishing polynomials lie in $\mathbb{F}_2$, so the Taylor expansion stage requires no finite field multiplications. Algorithm~\ref{Algo:AFFT-Cantor-any-m1} allows any split $m=m_1+m_2$, whereas Algorithm~\ref{Algo:AFFT-Cantor-fixed-m1} recursively chooses $m_1$ as the largest power of two smaller than $m$ to preserve the binomial form of the vanishing polynomial and reduce additions. Both algorithms requires exactly 
    \( \frac{1}{2}n\log_2 n \)
    multiplications, they differ in splitting flexibility and addition cost. For Algorithm~\ref{Algo:AFFT-Cantor-fixed-m1}, Theorem~\ref{Theorem:cost-Algo-fixed-m1} gives a closed-form addition count determined by the binary representation of $m$; when $m$ is a power of two, it becomes
    \(
    n\log_2 n +\frac{1}{4}n\log_2 n\log_2\log_2 n.
    \)
    Unlike the LCH AFFT~\cite{LCH-FFT2016}, this algorithm evaluates a polynomial given directly in the standard monomial basis, avoiding the memory-access overhead of the basis conversion. Its fully recursive structure, together with the uniform Taylor-expansion pattern shared by subproblems, also provides memory locality and predictable memory-access patterns by design, thereby reducing memory-access overhead. Our implementation results (Table~\ref{Table:benchmark-results}) show that the algorithm outperforms the LCH AFFT over a Cantor special basis in 37 of the 42 configurations tested across two hardware platforms. Moreover, it is faster than the LCH AFFT for every tested dimension on one of the two platforms.

    \item \textbf{An analysis of AFFTs over partial Cantor special bases.} We formalize the notion of a partial Cantor special basis, in which only a prefix of dimension $\ell \le m$ of the ordered basis satisfies the Cantor recursion. We show that this partial structure reduces the addition and multiplication counts of both the von zur Gathen--Gerhard AFFT~\cite{zurGathenFFT} and our proposed Algorithm~\ref{Algo:general-basis}. By contrast, the first Gao--Mateer algorithm~\cite{Gao2010FFT} cannot benefit from the same advantage because its evaluation basis changes during the recursion. We derive the corresponding operation counts and find that the two resulting regimes of very different width. The requirement $\ell \ge m-2$ of Theorem~\ref{Th:cost-mult-GG-partial} leaves the von zur Gathen and Gerhard AFFT with only two admissible values of $m$ in a given field, whereas Theorems~\ref{Th:cost-mult-algo1-partial} and~\ref{Th:add-cost-Algo1-partial} keep Algorithm~\ref{Algo:general-basis}, with $m_1=\ell$, ahead of the first Gao--Mateer AFFT over substantially wider ranges. For example, over $\mathbb{F}_{2^{48}}$ with $m_1=16$, it outperforms the first Gao--Mateer AFFT for all nine dimensions $17 \le m \le 25$ (Table~\ref{Tab:Algo-1_partial-cantor}).

    \item \textbf{A generalized butterfly phase for the LCH AFFT.} We formulate the butterfly phase of the LCH AFFT~\cite{LCH-FFT2016} in the novel polynomial basis for an arbitrary decomposition $m=m_1+m_2$. We prove in Theorem~\ref{Th:costGenLCH} that for every such decomposition, the generalized butterfly phase requires exactly $n\log_2 n$ additions and $\frac{1}{2}n\log_2 n$ multiplications. Thus, its arithmetic complexity is independent of the chosen decomposition.
\end{enumerate}

\paragraph*{\textbf{Organization of the Paper}}
The remainder of this paper is organized as follows. Section~\ref{Sec:Preliminaries} reviews the algebraic preliminaries required for the proposed AFFT techniques. Section~\ref{Sec:Algo-general} presents an AFFT over a general ordered basis using the Taylor expansion-based matrix decomposition and analyzes its arithmetic complexity. Section~\ref{Sec:AFFT-Cantor} specializes this framework to a Cantor special basis and derives the corresponding operation counts. Section~\ref{Sec:LCH-butterfly} generalizes the butterfly phase of the LCH AFFT in the novel polynomial basis to an arbitrary decomposition of the dimension. Section~\ref{Sec:partial-Cantor} studies AFFTs over partial Cantor special bases and identifies the parameter regimes in which the von zur Gathen--Gerhard algorithm and our general basis AFFT outperform the first Gao--Mateer algorithm. Section~\ref{Sec:Implementation} describes our implementation and reports benchmark results, including a performance comparison with the LCH AFFT. Finally, Section~\ref{Sec:conclusion} concludes the paper.

\begin{table*}[ht]
\caption{Comparison of finite field addition and multiplication counts for additive FFT algorithms of length $n=2^m$ over general and Cantor special bases. The total cost includes both basis conversion and evaluation.}
\label{Table:AFFT_cost-cmp}
\centering
\scriptsize
\setlength{\tabcolsep}{2pt}
\renewcommand{\arraystretch}{1.30}

\begin{threeparttable}
\begin{tabular}{
    @{}
    >{\raggedright\arraybackslash}p{1.5cm}
    >{\centering\arraybackslash}p{0.80cm}
    >{\centering\arraybackslash}p{0.48cm}
    >{\raggedright\arraybackslash}p{3.05cm}
    @{\hspace{8pt}}
    >{\raggedright\arraybackslash}p{4.65cm}
    @{\hspace{8pt}}
    >{\raggedright\arraybackslash}p{4.65cm}
    @{}
}
\toprule
Algorithm
& Basis
& Op.
& Basis conversion
& Evaluation
& Total cost\tnote{\ddag}
\\
\midrule

% Gao--Mateer: general basis
\multirow{4}{*}{%
  \shortstack[l]{Gao--Mateer\\\cite{BSG2026}}}
& \multirow{2}{*}{General}
& \#A
& $\displaystyle
   \frac{1}{4}n(\log_2 n)^2
   -\frac{1}{4}n\log_2 n$
& $\displaystyle n\log_2 n$
& $\displaystyle
   \frac{1}{4}n(\log_2 n)^2
   +\frac{3}{4}n\log_2 n$
\\

&
& \#M
& $\displaystyle n\log_2 n-n+1$
& $\displaystyle \frac{1}{2}n\log_2 n$
& $\displaystyle
   \frac{3}{2}n\log_2 n-n+1$
\\

\cmidrule(lr){2-6}

% Gao--Mateer: Cantor basis
& \multirow{2}{*}{Cantor}
& \#A
& $\displaystyle
   \frac{1}{4}n(\log_2 n)^2
   -\frac{1}{4}n\log_2 n$
& $\displaystyle n\log_2 n$
& $\displaystyle
   \frac{1}{4}n(\log_2 n)^2
   +\frac{3}{4}n\log_2 n$
\\

&
& \#M
& $0$
& $\displaystyle \frac{1}{2}n\log_2 n$
& $\displaystyle \frac{1}{2}n\log_2 n$
\\

\midrule

% LCH: general basis
\multirow{4}{*}{%
  \shortstack[l]{LCH\\\cite{LCH-conv2016}}}
& \multirow{2}{*}{General}
& \#A
& $\displaystyle O\!\left(n(\log_2 n)^2\right)$
& $\displaystyle n\log_2 n$
& $\displaystyle
   n\log_2 n
   +O\!\left(n(\log_2 n)^2\right)$
\\

&
& \#M
& $\displaystyle O(n\log_2 n)$
& $\displaystyle \frac{1}{2}n\log_2 n$
& $\displaystyle
   \frac{1}{2}n\log_2 n
   +O(n\log_2 n)$
\\

\cmidrule(lr){2-6}

% LCH: Cantor basis
& \multirow{2}{*}{Cantor}
& \#A
& $\displaystyle
   O\!\left(
      n\log_2 n\log_2\log_2 n
   \right)$
& $\displaystyle n\log_2 n$
& $\displaystyle
   n\log_2 n
   +O\!\left(
      n\log_2 n\log_2\log_2 n
   \right)$
\\

&
& \#M
& $0$
& $\displaystyle \frac{1}{2}n\log_2 n$
& $\displaystyle \frac{1}{2}n\log_2 n$
\\

\midrule

% Cantor algorithm
\multirow{2}{*}{%
  \shortstack[l]{Cantor\\\cite{BSG2026}}}
& \multirow{2}{*}{Cantor}
& \#A
& \multirow{2}{*}{N/A}
& $\displaystyle
   \frac{1}{2}n\log_2 n
   +\frac{1}{2}n
    \sum_{r=0}^{\log_2 n-1}
       2^{\mathrm{wt}(r)}$
& \multirow{2}{*}{\shortstack{Same as evaluation}}
\\

&
& \#M
&
& $\displaystyle \frac{1}{2}n\log_2 n$
& 
\\

\midrule

% Proposed general-basis AFFT
\multirow{2}{*}{%
  \shortstack[l]{Proposed \\
  (Algorithm~\ref{Algo:general-basis})}}
& \multirow{2}{*}{General}
& \#A
& \multirow{2}{*}{N/A}
& $\displaystyle
   \frac{1}{4}n(\log_2 n)^2
   +\frac{3}{4}n\log_2 n$
& \multirow{2}{*}{\shortstack{Same as evaluation}}
\\[8pt]

&
& \#M
&
&  $\displaystyle
   \frac{1}{4}n(\log_2 n)^2
   +\frac{3}{4}n\log_2 n$
& 
\\

\midrule

% Proposed Cantor-basis AFFT
\multirow{2}{*}{%
  \shortstack[l]{Proposed\\
  (Algorithm~\ref{Algo:AFFT-Cantor-fixed-m1})}}
& \multirow{2}{*}{Cantor}
& \#A
& \multirow{2}{*}{N/A}
& $\displaystyle
   n\log_2 n
   +n\!\left[
      \sum_{i=1}^{w}p_i2^{p_i-2}
      +\frac{1}{2}
       \sum_{i=1}^{w}
          \left(m\bmod 2^{p_i}\right)
   \right]$\tnote{\dag}
& \multirow{2}{*}{\shortstack{Same as evaluation}}
\\\addlinespace[8pt]

&
& \#M
& 
& $\displaystyle \frac{1}{2}n\log_2 n$
& 
\\

\bottomrule
\end{tabular}

\begin{tablenotes}
\footnotesize

\item[\dag]
Here, $m=2^{p_1}+2^{p_2}+\cdots+2^{p_w}$, where $p_1>p_2>\cdots>p_w\geq 0$ and $w=\mathrm{wt}(m)$. When $m=2^t$, the addition count simplifies to $ n\log_2 n +\frac{1}{4}n\log_2 n\,\log_2\log_2 n$.

\item[\ddag]
The total cost reports additions and multiplications separately and includes both basis conversion and evaluation. For algorithms that operate directly on the input polynomial representation, no separate basis-conversion stage is required.

\item[]
The exact operation counts for the Gao--Mateer~\cite{Gao2010FFT} and Cantor~\cite{Cantor1989FFT} algorithms were established in~\cite{BSG2026}. 
% The entries proved in the present paper are those for Algorithms~\ref{Algo:general-basis} and~\ref{Algo:AFFT-Cantor-fixed-m1}, as established in Theorems~\ref{Th:costAlgo1} and~\ref{Theorem:cost-Algo-fixed-m1}, respectively.

\end{tablenotes}
\end{threeparttable}
\end{table*}

\section{Preliminaries}\label{Sec:Preliminaries}
In this section, we collect the algebraic tools required by our algorithms. Let $\mathbb{F}_{2^k}$ be the finite field of order $2^k$. We know that there exists a vector space isomorphism from $\mathbb{F}_{2^k}$ to $\mathbb{F}_2^k$ defined by $\mathbf{x}=(x_0\beta_0+x_1\beta_1+ \cdots +x_{k-1}\beta_{k-1}) \mapsto (x_0,x_1, \ldots,x_{k-1})$, where $\set{\beta_0,\beta_1,\ldots,\beta_{k-1}}$ is a basis of $\mathbb{F}_{2^k}$. The polynomial ring over $\mathbb{F}_{2^k}$ in the variable $x$ is denoted by $\mathbb{F}_{2^k}[x]$. 

An \emph{$m$-dimensional subspace} of $\mathbb{F}_{2^k}$, equivalently an $m$-dimensional $\mathbb{F}_2$-vector space is a subset
\[
W_m = \langle \beta_0, \ldots, \beta_{m-1}\rangle = \bigl\{\sum_{i=0}^{m-1} c_i \beta_i : c_i \in \set{0,1} \bigr\}
\]
We order the elements of the subspace $W_m$ by $\{\eta_0=0, \eta_1, \eta_2, \dots, \eta_{2^m-1}\}$ where
\[
    \eta_j = \sum_{i=0}^{m-1} x_i \beta_i \quad \text{and} \quad j=\sum_{i=0}^{m-1} x_i 2^i, x_i\in \set{0,1}.
\]
Thus, we have $|W_m|=2^m$. For $\theta \in \mathbb{F}_{2^k}$, we call the set $\theta + W = \{\theta + w : w \in W\}$ a \emph{$m$-dimensional affine subspace} of $\mathbb{F}_{2^k}$. Two affine subspace either identical or disjoint and if $m < k-1$ and $\theta$ is any linear combination of $\{\beta_{m+1}, \beta_{m+2}, \ldots, \beta_{k-1}\}$, then we can decompose the affine subspace $\theta+W_{m+1}$ into two disjoint affine subspace as
\[
\theta+W_{m+1} = (\theta+W_m) \cup (\theta+\beta_m + W_m).
\]

\begin{definition}[Vanishing polynomial]\label{Def:vanishing}
    Let $W_m$ be a $m$-dimensional subspace of $\mathbb{F}_{2^k}$. The \emph{vanishing polynomial} of $W_m$ is given by $Z_{W_m}(x) = \prod_{\omega \in W_m} (x + \omega)$.
\end{definition}
The vanishing polynomial of an $m$-dimensional subspace is a monic $\mathbb{F}_2$-linearized polynomial. Hence, it can be written as
\[
Z_{W_m}(x) = x^{2^m} + c_{m-1}x^{2^{m-1}} + \cdots + c_1 x^2 + c_0 x,
\]
where $c_0,\ldots,c_{m-1} \in \mathbb{F}_{2^k}$. In particular, $Z_{W_m}(x)$ has degree $2^m$ and contains at most $m+1$ nonzero terms. Since $W_{m+1}= W_m \cup (\beta_{m}+W_m)$, we have 
\[
\mathbb{Z}_{W_{m+1}}(x)= (\mathbb{Z}_{W_{m}}(x))^2- \mathbb{Z}_{W_{m}}(\beta_{m})\cdot \mathbb{Z}_{W_{m}}(x).
\]

\begin{definition}[Taylor expansion]\label{Def:TaylorExp}
Let $h(x) \in \mathbb{F}_{2^k}[x]$ be a monic polynomial of degree $d$, and let $f(x) \in \mathbb{F}_{2^k}[x]$ have degree less than $nd$ for some positive integer $n$. The \emph{Taylor expansion of $f(x)$ with respect to $h(x)$} is the unique representation
\[
f(x) = \sum_{i=0}^{n-1} f_i(x) \cdot (h(x))^i,
\]
where each $f_i(x) \in \mathbb{F}_{2^k}[x]$ has $\deg(f_i) < d$.
\end{definition}

Now we will discuss the evaluation of a univariate polynomial $f(x)$ over the subspace $W_m$.

\begin{definition}[Additive Discrete Fourier Transform]
    The evaluation of $f(x)$ at the points $\eta_0, \eta_1, \ldots, \eta_{2^m-1}$ is given by $ \hat{f} = \left( f(\eta_0), f(\eta_1), \ldots, f(\eta_{2^m-1}) \right)$. This set of evaluations is referred to as the additive discrete Fourier transform (DFT) of $f(x)$ over the subspace $W_m$.
\end{definition}
We sometimes refer to the vector $\hat{f}$ as the \textit{discrete Fourier transform of length $n=2^m$} for the function $f(x)$, denoted by $\dft(f, W_m)$. \textit{The additive FFT} (AFFT) is an efficient method for computing $\dft(f, W_m)$, which we will denote as $\fft(f, W_m)$.

\noindent Consider the function $S:\mathbb{F}_{2^k} \rightarrow \mathbb{F}_{2^k}$ defined by $S(x)=x^2+x$, and let the following sequence of functions be defined recursively:
\[S^{0}(x)=x \quad \text{ and } \quad S^{m}(x)=S(S^{m-1}(x)).\]
A nonrecursive formula for $S^m(x)$ is 
\[
S^{m}(x)=\sum_{i=0}^{m} \binom{m}{i} x^{2^i},
\]
where $\binom{m}{i}$ denotes the binomial coefficient reduced modulo 2. For $m=2^{t}$, we have $S^{2^{t}}=x^{2^{2^{t}}}+x$.

\begin{definition}[Cantor Special Basis]\label{Def:Cantor-basis}
    The \emph{Cantor special basis} of the $m$ dimensional subspace $W_{m}$ of $\mathbb{F}_{2^k}$ is the ordered set $\set{ \beta_{0},\beta_{1},\ldots,\beta_{m-1}}$ satisfying
    \[
    S(\beta_{i})=\beta_{i-1} \quad \text{for} \quad i=1,\ldots,m-1 \quad \text{with} \quad \beta_{0}=1.
    \]
\end{definition}

A natural question is for which fields and dimensions such a basis exists. The following proposition gives a precise existential criterion.

\begin{proposition}\cite[Appendix]{Gao2010FFT}\label{Prop:partial-cantor}
    The field $\mathbb{F}_{2^k}$ admits a Cantor special basis of dimension $\ell$ if and only if $2^{\lceil \log_2 \ell \rceil}$ divides $k$.
\end{proposition}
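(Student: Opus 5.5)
Let $t=\lceil \log_2 \ell\rceil$, so that $2^{t-1}<\ell\le 2^t$ (with $t=0$ for $\ell=1$). The plan is to reduce the question to the behaviour of the $\mathbb{F}_2$-linear map $S(x)=x^2+x$ on $\mathbb{F}_{2^k}$ and to use the closed form $S^{2^t}(x)=x^{2^{2^t}}+x$ recorded above. A Cantor special basis of dimension $\ell$ is the same thing as a chain $\beta_0=1,\ \beta_1,\dots,\beta_{\ell-1}$ with $S(\beta_i)=\beta_{i-1}$, that is, an element $\beta_{\ell-1}\in\mathbb{F}_{2^k}$ with $S^{\ell-1}(\beta_{\ell-1})=1$. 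Linear independence is automatic. Indeed $S^{i}(\beta_{j})=\beta_{j-i}$, and in particular $S^{i+1}(\beta_i)=S(1)=0$ while $S^{i}(\beta_i)=1\neq 0$. So if $\sum c_i\beta_i=0$ with largest index $j$ having $c_j=1$, then applying $S^{j}$ gives $1=0$, a contradiction. Hence the statement is equivalent to: \emph{$1\in S^{\ell-1}(\mathbb{F}_{2^k})$ if and only if $2^{t}\mid k$.}

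For sufficiency, assume $2^t\mid k$. Then $\mathbb{F}_{2^{2^t}}\subseteq\mathbb{F}_{2^k}$, so it suffices to find $\beta$ in the subfield $K=\mathbb{F}_{2^{2^t}}$ with $S^{\ell-1}(\beta)=1$. On $K$ we have $S^{2^t}(x)=x^{2^{2^t}}+x=0$, so $S$ is a nilpotent linear operator on the $2^t$-dimensional $\mathbb{F}_2$-space $K$. Its kernel is $\{0,1\}$, which is one-dimensional, so $S|_K$ is a single nilpotent Jordan block of size $2^t$. Therefore $\ker S^{j}$ has dimension $j$ and $\operatorname{im} S^{j}=\ker S^{2^t-j}$. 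Taking $j=\ell-1$, we get $1\in\ker S\subseteq \ker S^{2^t-\ell+1}=\operatorname{im}S^{\ell-1}$, which holds since $2^t-\ell+1\ge 1$. This yields $\beta_{\ell-1}$, and the chain is obtained by applying $S$ repeatedly.

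For necessity, suppose $S^{\ell-1}(\beta)=1$ for some $\beta\in\mathbb{F}_{2^k}$. Then $S^{\ell}(\beta)=0$, so $S^{2^t}(\beta)=0$ because $\ell\le 2^t$. That is, $\beta^{2^{2^t}}=\beta$, so $\beta\in\mathbb{F}_{2^{2^t}}\cap\mathbb{F}_{2^k}=\mathbb{F}_{2^{\gcd(2^t,k)}}$. Write $\gcd(2^t,k)=2^s$ with $s\le t$; we must show $s=t$. Suppose instead $s<t$, so $2^s\le 2^{t-1}<\ell$. On the subfield $L=\mathbb{F}_{2^{2^s}}$ the operator $S$ satisfies $S^{2^s}=0$ by the same closed form. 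Hence $S^{\ell-1}(\beta)=0$ because $\ell-1\ge 2^s$, contradicting $S^{\ell-1}(\beta)=1$. Thus $2^t\mid k$. The case $\ell=1$ ($t=0$) is trivial in both directions.

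The main obstacle I expect is the structural fact in the sufficiency step, namely that $S$ restricted to $\mathbb{F}_{2^{2^t}}$ is a single Jordan block. I will justify it from two facts: $\dim\ker S=1$, since $x^2+x=0$ only for $x\in\{0,1\}$, and nilpotency, from $S^{2^t}=0$. The dimension counts $\dim\ker S^{j}=j$ then follow from the general fact that for a nilpotent operator, $\dim\ker S^{j+1}-\dim\ker S^{j}\le\dim\ker S$. I will also double-check the identity $S^{2^t}(x)=x^{2^{2^t}}+x$, which follows from Lucas' theorem since $\binom{2^t}{i}$ is odd only for $i\in\{0,2^t\}$.
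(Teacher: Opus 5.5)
Your proof is correct and complete. The paper gives no argument of its own for this proposition; it imports the result from the appendix of Gao--Mateer, so there is no in-text proof to compare step by step, and your write-up serves as a self-contained proof. All three parts check out, including the edge case $\ell=1$:
\begin{itemize}
\item the reduction to $1\in S^{\ell-1}(\mathbb{F}_{2^k})$, with linear independence obtained for free by applying $S^{j}$;
\item the Jordan-block argument for sufficiency;
\item the $\gcd$/nilpotency argument for necessity.
\end{itemize}

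The step you flagged as the main obstacle can be bypassed entirely. Take any root $\beta$ of $S^{\ell-1}(x)+1$ in an algebraic closure $\overline{\mathbb{F}}_2$. Your necessity computation applies verbatim: $S^{\ell}(\beta)=S(1)=0$, hence $\beta^{2^{2^t}}+\beta=S^{2^t}(\beta)=0$. So $\beta\in\mathbb{F}_{2^{2^t}}$, and $2^t\mid k$ immediately puts $\beta$ in $\mathbb{F}_{2^k}$. For $t\ge 1$, combine this with your observation that $\beta\notin\mathbb{F}_{2^{2^{t-1}}}$, which holds because $\ell-1\ge 2^{t-1}$. Together these say every solution has degree exactly $2^t$ over $\mathbb{F}_2$, which packages both directions into a single fact.

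If you prefer to stay inside $K=\mathbb{F}_{2^{2^t}}$, there is a second shortcut. Lucas' theorem makes every $\binom{2^t-1}{i}$ odd, so $S^{2^t-1}=\mathrm{Tr}_{K/\mathbb{F}_2}$ on $K$. Surjectivity of the trace then gives $1\in\operatorname{im}S^{2^t-1}\subseteq\operatorname{im}S^{\ell-1}$ directly.

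Your Jordan-block route costs some extra linear algebra. In exchange it yields the finer statement $\operatorname{im}S^{j}=\ker S^{2^t-j}$ on $K$, so that the Cantor subspaces $W_i$ are exactly the kernels of $S^{i}$ there. That is not needed for the proposition, but it matches how the paper uses $Z_{W_i}=S^{i}$.
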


Under a Cantor special basis, the vanishing polynomials take a particularly simple form. More specifically, if $\{\beta_0 = 1, \beta_1, \ldots, \beta_{m-1}\}$ is a Cantor special basis, then the vanishing polynomial of $W_i = \langle \beta_0, \ldots, \beta_{i-1}\rangle$ is
\[
Z_{W_i}(x) = S^i(x),
\]
i.e. the $i$-fold composition of the mapping $S$. In particular, $Z_{W_i}(x)$ is linearized with all coefficients in $\mathbb{F}_{2}$, so polynomial division by $Z_{W_i}$ requires \emph{zero} $\mathbb{F}_{2^k}$-multiplications.

\section{A New Additive FFT over General Basis}\label{Sec:Algo-general}
In this section, we present our first main algorithm for the additive fast Fourier transform (AFFT) over an arbitrary basis of $\mathbb{F}_{2^k}$. The algorithm recursively reduces multipoint evaluation to smaller evaluation problems by applying a Taylor expansion with respect to subspace vanishing polynomials.

\noindent Let
\[
W_m = \langle \beta_0, \beta_1, \ldots, \beta_{m-1} \rangle,
\]
and let $f(x) \in \mathbb{F}_{2^k}[x]$ be a polynomial of degree less than
\(
n = 2^m.
\)
Our goal is to evaluate $f$ over the affine space $\theta + W_m$. Given a decomposition $m = m_1 + m_2$, where $m_1, m_2 \geq 1$, the algorithm proceeds as follows:
\begin{enumerate}
    \item \textbf{Decomposition:}
    Compute the Taylor expansion of $f(x)$ with respect to the vanishing polynomial $Z_{W_{m_1}}(x)$. The resulting coefficient polynomials are arranged as a $2^{m_2} \times 2^{m_1}$ matrix.

    \item \textbf{Column evaluations:}
    Evaluate each of the $2^{m_1}$ column polynomials, each of degree less than $2^{m_2}$, over an appropriate projected affine space.

    \item \textbf{Row evaluations:}
    For each of the $2^{m_2}$ row polynomials, evaluate a polynomial of degree less than $2^{m_1}$ over the corresponding affine space $\theta_i + W_{m_1}$.
\end{enumerate}

For $m = m_1 + m_2$, we can decompose the subspace $W_m$ into $2^{m_2}$ disjoint affine subspaces 
\[
W_m = \bigcup_{i=0}^{2^{m_2}-1} (\theta_i + W_{m_1}),
\]
where $\theta_i = \sum_{j=0}^{m_2-1} c_j \beta_{m_1+j}$ with $(c_{m_2-1}, \ldots, c_0)_2$ being the binary representation of $i$. In particular, $\theta_0 = 0$. For any element $\theta_i + \omega \in \theta_i + W_{m_1}$, where $\omega \in W_{m_1}$, the $\mathbb{F}_2$-linearity of $Z_{W_{m_1}}(x)$ and the fact that $\omega \in \ker(Z_{W_{m_1}})$ imply that
\[
Z_{W_{m_1}}(\theta_i + \omega) = Z_{W_{m_1}}(\theta_i) + Z_{W_{m_1}}(\omega) = Z_{W_{m_1}}(\theta_i).
\]

Hence, $Z_{W_{m_1}}(x)$ is constant on each coset $\theta_i + W_{m_1}$, with value $Z_{W_{m_1}}(\theta_i)$. Algorithm~\ref{Algo:general-basis} describes the resulting evaluation procedure. We establish the following result to prove its correctness.

% Consider the subspace $W_m=\langle \beta_{0},\beta_{1},\ldots,\beta_{m-1} \rangle$ and we want to evaluate the polynomial $f(x)$ of degree less than $n=2^m$ over the affine space $\theta+W_m$.

\begin{theorem}\label{Th:injectivity}
Let $m=m_1+m_2$, and let
\[
W_m
=
\left\langle
\beta_0,\beta_1,\ldots,\beta_{m-1}
\right\rangle
\]
be an $m$-dimensional $\mathbb F_2$-vector space. Define
\[ 
W_{m_1}
=
\left\langle
\beta_0,\beta_1,\ldots,\beta_{m_1-1}
\right\rangle
\quad
\text{and}
\quad
U
=
\left\langle
\beta_{m_1},\beta_{m_1+1},\ldots,\beta_{m_1+m_2-1}
\right\rangle
\]
Then $W_m = W_{m_1}\oplus U$. Let $Z_{W_{m_1}}(x)$ be the vanishing polynomial of $W_{m_1}$. Then the restriction $ Z_{W_{m_1}}\big|_U:U\to Z_{W_{m_1}}(U)$ is an injective $\mathbb F_2$-linear map. Consequently, $Z_{W_{m_1}}(U)$
is an $m_2$-dimensional $\mathbb F_2$-linear subspace. Moreover, the set
\[
\left\{
Z_{W_{m_1}}(\beta_{m_1}),
Z_{W_{m_1}}(\beta_{m_1+1}),
\ldots,
Z_{W_{m_1}}(\beta_{m_1+m_2-1})
\right\}
\]
is a basis of $Z_{W_{m_1}}(U)$ over $\mathbb F_2$.
\end{theorem}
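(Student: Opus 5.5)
The plan is to prove the four claims in order, relying only on linear algebra over $\mathbb F_2$ and the fact, noted in the preliminaries, that $Z_{W_{m_1}}(x)$ is an $\mathbb F_2$-linearized polynomial.

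\textbf{Direct sum.} The elements $\beta_0,\ldots,\beta_{m-1}$ form a basis of the $m$-dimensional space $W_m$, so they are linearly independent over $\mathbb F_2$. We have $W_{m_1}+U=W_m$, since together the two spanning sets give every $\beta_j$. Suppose $w\in W_{m_1}\cap U$. Writing $w$ in both spanning sets and subtracting gives a linear relation among all the $\beta_j$. Independence forces every coefficient to vanish, so $w=0$. Hence $W_m=W_{m_1}\oplus U$.

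\textbf{Linearity.} Because $Z_{W_{m_1}}(x)=x^{2^{m_1}}+\sum_{i<m_1}c_ix^{2^i}$ and the Frobenius map is additive in characteristic $2$, we get $Z_{W_{m_1}}(a+b)=Z_{W_{m_1}}(a)+Z_{W_{m_1}}(b)$. Scalar multiples need no separate check: over $\mathbb F_2$ they are either $0$ or the element itself, and $Z_{W_{m_1}}(0)=0$. So $Z_{W_{m_1}}$ is an $\mathbb F_2$-linear map on $\mathbb F_{2^k}$, and so is its restriction to $U$. Consequently the image $Z_{W_{m_1}}(U)$ is an $\mathbb F_2$-subspace.

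\textbf{Injectivity.} This is the key step. The kernel of $Z_{W_{m_1}}$ on all of $\mathbb F_{2^k}$ is the set of its roots. By Definition~\ref{Def:vanishing}, $Z_{W_{m_1}}(x)=\prod_{\omega\in W_{m_1}}(x+\omega)$ is a product of linear factors with roots exactly $W_{m_1}$, and $\mathbb F_{2^k}$ is an integral domain. So the kernel is precisely $W_{m_1}$. The kernel of the restriction to $U$ is then $U\cap W_{m_1}=\{0\}$ by the direct sum, which gives injectivity. The only subtle point is that the kernel is exactly $W_{m_1}$ and not something larger; the product form settles this immediately.

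\textbf{Dimension and basis.} By rank--nullity, or simply because injective linear maps preserve dimension, $\dim Z_{W_{m_1}}(U)=\dim U=m_2$. Here $\dim U=m_2$ because $\beta_{m_1},\ldots,\beta_{m-1}$ are independent. An injective linear map sends a basis of $U$ to a linearly independent set, and that set spans the image by linearity. Therefore $\{Z_{W_{m_1}}(\beta_{m_1+j})\}_{j=0}^{m_2-1}$ is a basis of $Z_{W_{m_1}}(U)$. Explicitly, if $\sum_j c_jZ_{W_{m_1}}(\beta_{m_1+j})=0$, then $Z_{W_{m_1}}\bigl(\sum_j c_j\beta_{m_1+j}\bigr)=0$, so $\sum_j c_j\beta_{m_1+j}=0$ by injectivity, and hence all $c_j=0$.
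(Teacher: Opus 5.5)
Your proposal is correct and follows the same route as the paper. Both arguments split the basis to get $W_m=W_{m_1}\oplus U$, identify $\ker Z_{W_{m_1}}=W_{m_1}$, deduce that the restriction to $U$ has kernel $U\cap W_{m_1}=\{0\}$, and carry the basis of $U$ through the injective linear map; you additionally fill in details the paper leaves implicit, such as the additivity of Frobenius, the product-form justification of the kernel, and the independence argument for the direct sum.
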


\begin{proof}
Since $W_m= \left\langle
\beta_0,\beta_1,\ldots,\beta_{m-1}
\right\rangle$, and since $m=m_1+m_2$, we have
\[
W_m
=
\left\langle
\beta_0,\ldots,\beta_{m_1-1}
\right\rangle
\oplus
\left\langle
\beta_{m_1},\ldots,\beta_{m_1+m_2-1}
\right\rangle.
\]
Hence $W_m=W_{m_1}\oplus U$, where $W_{m_1} = \left\langle \beta_0,\ldots,\beta_{m_1-1} \right\rangle$ and $U= \left\langle
\beta_{m_1},\beta_{m_1+1},\ldots,\beta_{m_1+m_2-1}
\right\rangle$. In particular, we have $U\cap W_{m_1}=\{0\}$.

Now let $Z_{W_{m_1}}(x)$ be the vanishing polynomial of $W_{m_1}$. Since $Z_{W_{m_1}}$ vanishes exactly on $W_{m_1}$, we have
$\ker(Z_{W_{m_1}})=W_{m_1}$. Moreover, $Z_{W_{m_1}}$ is an $\mathbb F_2$-linearized polynomial, so it defines an $\mathbb F_2$-linear map on $W_m$.

Restricting to $U$, we get $Z_{W_{m_1}}\big|_U:U\to Z_{W_{m_1}}(U)$. Its kernel is
\[
\ker\left(Z_{W_{m_1}}\big|_U\right)
=
U\cap \ker Z_{W_{m_1}}
=
U\cap W_{m_1}
=
\{0\}.
\]
Therefore $Z_{W_{m_1}}\big|_U$ is injective. Since $U=
\left\langle
\beta_{m_1},\beta_{m_1+1},\ldots,\beta_{m_1+m_2-1}
\right\rangle$, the vectors
\[
\beta_{m_1},\beta_{m_1+1},\ldots,\beta_{m_1+m_2-1}
\]
form a basis of $U$. Because $Z_{W_{m_1}}\big|_U$ is injective and $\mathbb F_2$-linear, their images
\[
Z_{W_{m_1}}(\beta_{m_1}),
Z_{W_{m_1}}(\beta_{m_1+1}),
\ldots,
Z_{W_{m_1}}(\beta_{m_1+m_2-1})
\]
are linearly independent in $Z_{W_{m_1}}(U)$. Therefore, they form a basis of $Z_{W_{m_1}}(U)$. Hence $\dim Z_{W_{m_1}}(U)=\dim(U)=m_2$.

\end{proof}

\noindent Theorem~\ref{Th:injectivity} shows that the restriction of $Z_{W_{m_1}}$ to $U$ is an $\mathbb{F}_2$-linear isomorphism onto its image. Consequently, 
\[
Z_{W_{m_1}}(U)
=
\left\langle
Z_{W_{m_1}}(\beta_{m_1}),\ldots,
Z_{W_{m_1}}(\beta_{m-1})
\right\rangle
\]
is an $m_2$-dimensional subspace. Thus, the distinct cosets of $W_{m_1}$ in $W_m$ are mapped to distinct points of $Z_{W_{m_1}}(U)$. This observation allows the evaluation over $\theta+W_m$ to be organized as a row-column computation.

The Taylor expansion of $f(x)$ with respect to $Z_{W_{m_1}}(x)$ can be written as
\begin{equation}
\label{eq:Taylor-matrix-decomposition}
\begin{aligned}
f(x)
&=
\sum_{i=0}^{2^{m_2}-1}
\left(
\sum_{j=0}^{2^{m_1}-1} g_{i,j}x^j
\right) (Z_{W_{m_1}}(x))^i        \\
&=
\sum_{j=0}^{2^{m_1}-1}
\left(
\sum_{i=0}^{2^{m_2}-1} g_{i,j}Z_{W_{m_1}}(x)^i
\right)x^j.
\end{aligned}
\end{equation}

Arrange the coefficients $g_{i,j}$ in a $2^{m_2}\times 2^{m_1}$ matrix $M$ such that $M[i,j] = g_{i,j}$. 

For $0\leq j<2^{m_1}$, define the each column $j$ by
\[
C_j(y)= \sum_{i=0}^{2^{m_2}-1}M[i,j]y^i.
\]
Equation~\eqref{eq:Taylor-matrix-decomposition} then becomes
\[
f(x)=\sum_{j=0}^{2^{m_1}-1} C_j\bigl(Z_{W_{m_1}}(x)\bigr)x^j.
\]

We next relate the evaluation points of the column polynomials to the cosets of $W_{m_1}$ in $W_m$. Since $W_m=W_{m_1}\oplus U$, every element of $W_m$ can be written uniquely as $\omega+u$, where $\omega\in W_{m_1}$ and $u\in U$. Consequently,
\[
\theta+W_m=
\bigcup_{u\in U}
\bigl( \theta+u+W_{m_1} \bigr).
\]

Enumerate the elements of $U$ as
\[
u_k= \sum_{\ell=0}^{m_2-1}c_\ell\beta_{m_1+\ell},
\qquad \text{where} \quad
(c_{m_2-1},\ldots,c_0)_2=k,
\]
for $0\leq k<2^{m_2}$, and define $\theta_k=\theta+u_k$. With this notation, the affine space is partitioned into the $2^{m_2}$ disjoint cosets
\[
\theta+W_m =
\bigcup_{k=0}^{2^{m_2}-1}
\bigl(\theta_k+W_{m_1}\bigr).
\]

Since $Z_{W_{m_1}}$ is $\mathbb{F}_2$-linear,
\[
Z_{W_{m_1}}(\theta_k) = Z_{W_{m_1}}(\theta) + Z_{W_{m_1}}(u_k).
\]
It follows that the points $Z_{W_{m_1}}(\theta_k)$ form the projected affine space
\[
U'= Z_{W_{m_1}}(\theta) + Z_{W_{m_1}}(U) = Z_{W_{m_1}}(\theta)
+
\left\langle
Z_{W_{m_1}}(\beta_{m_1}),
\ldots,
Z_{W_{m_1}}(\beta_{m_1+m_2-1})
\right\rangle.
\]
The injectivity established in Theorem~\ref{Th:injectivity} ensures that these $2^{m_2}$ points are distinct.

Evaluating each column polynomial $C_j$ over $U'$ replaces the entry in row $k$ and column $j$ by
\[
M[k,j]= C_j\bigl(Z_{W_{m_1}}(\theta_k)\bigr).
\]
Consequently, the $k$-th row of the updated matrix defines the polynomial
\[
R_k(x)= 
\sum_{j=0}^{2^{m_1}-1}
C_j\bigl(Z_{W_{m_1}}(\theta_k)\bigr)x^j.
\]

For every $\omega\in W_{m_1}$, the vanishing property and linearity of $Z_{W_{m_1}}$ imply that
\[
\begin{aligned}
Z_{W_{m_1}}(\theta_k+\omega)
&=
Z_{W_{m_1}}(\theta_k) + Z_{W_{m_1}}(\omega) =
Z_{W_{m_1}}(\theta_k).
\end{aligned}
\]
Thus, it follows that
\[
\begin{aligned}
R_k(\theta_k+\omega)
&=
\sum_{j=0}^{2^{m_1}-1}
C_j\bigl(Z_{W_{m_1}}(\theta_k)\bigr)(\theta_k+\omega)^j \\
&=
\sum_{j=0}^{2^{m_1}-1} C_j\bigl(Z_{W_{m_1}}(\theta_k+\omega)\bigr)(\theta_k+\omega)^j \\
&= f(\theta_k+\omega).
\end{aligned}
\]

Therefore, evaluating $R_k$ over $\theta_k+W_{m_1}$ produces exactly the values of $f$ on that coset.
% 
% Finally, the direct-sum decomposition $W_m=W_{m_1}\oplus U$ gives
% \[
% \theta+W_m =
% \bigcup_{k=0}^{2^{m_2}-1}
% \bigl(\theta_k+W_{m_1}\bigr).
% \]
Hence, concatenating the row evaluations produces the evaluation of $f$ over the entire affine space $\theta+W_m$. Algorithm~\ref{Algo:general-basis} formalizes this column-row evaluation procedure.

% Then
% \begin{itemize}
%     \item Each \emph{column} $M[*,j]$ defines a polynomial $C_j(y) = \sum_{i=0}^{2^{m_2}-1} M[i,j]\, y^i$ to be evaluated at the points of $Z_{W_{m_1}}(\theta) + \langle Z_{W_{m_1}}(\beta_{m_1}), \ldots, Z_{W_{m_1}}(\beta_{m_1+m_2-1}) \rangle$
%     \item After column evaluations update $M$, each \emph{row} $M[i,*]$ defines a polynomial $R_i(x) = \sum_{j=0}^{2^{m_1}-1} M[i,j]\, x^j$ to be evaluated over the coset $\theta_i + W_{m_1}$.
% \end{itemize}

\begin{algorithm2e}[ht]
    \caption{Additive FFT of length $n = 2^m$ over a general basis}\label{Algo:general-basis}
    \SetAlgoLined
    {\scriptsize
    \KwIn{$f(x) \in \mathbb{F}_{2^k}[x]$ of degree $< n = 2^m$ and the affine subspace $\theta+W_m=\theta+ \langle \beta_0,\beta_1,\ldots,\beta_{m-1} \rangle$.}
    \KwOut{$\fft(f, \theta+W_m)$.}

    \If{$m=1$}
    {
        \Return $(f(\theta),f(\theta+\beta_0))$\;
    }
    Let $m = m_1 + m_2$ and $U=\left\langle \beta_{m_1},\beta_{m_1+1},\ldots,\beta_{m_1+m_2-1} \right\rangle$\;
    
    Do the Taylor Expansion of $f(x)$  w.r.t. $Z_{W_{m_1}}(x)$ 
    \begin{equation*}
        \begin{aligned}
            f(x)
            &=\sum_{i=0}^{2^{m_2}-1} \left( \sum_{j=0}^{2^{m_1}-1} g_{i,j} x^j \right) (Z_{W_{m_1}}(x))^i
        \end{aligned}
    \end{equation*}

    Initialize a $2^{m_2} \times 2^{m_1}$ matrix $M$ such that $M[k_2, k_1] = g_{k_2, k_1}$\;

    Let $U' = Z_{W_{m_1}}(\theta) + Z_{W_{m_1}}(U) = Z_{W_{m_1}}(\theta) + \langle Z_{W_{m_1}}(\beta_{m_1}), \ldots, Z_{W_{m_1}}(\beta_{m_1+m_2-1}) \rangle$\;
    
    \tcc{Compute column evaluations over the affine space $U'$}
    \For{$k_1 = 0, \dots, 2^{m_1}-1$}{
        Define $C_{k_1}(y) = \sum_{k_2=0}^{2^{m_2}-1} M[k_2, k_1] y^{k_2}$\;
        
        % Let the column polynomial be $C_{k_1}(x) = \sum_{k_2=0}^{2^{m_2}-1} M[k_2, k_1] x^{k_2}$\;
        $M[*, k_1] \leftarrow \fft(C_{k_1}(y), U')$\;
    }

    \tcc{Compute row polynomials over the affine spaces}
    \For{$k_2 = 0, \dots, 2^{m_2}-1$}{
        % Let $(c_{m_2-1}, \ldots, c_0)_2$ be the binary representation of $k_2$\;
        Let $\theta_{k_2} = \theta + \sum_{\ell=0}^{m_2-1} c_{\ell} \beta_{m_1+\ell}$, where $(c_{m_2-1}, \ldots, c_0)_2=k_2$\;
        
        Define $R_{k_2}(x) = \sum_{k_1=0}^{2^{m_1}-1} M[k_2, k_1] x^{k_1}$\;
        
        $E_{k_2} \leftarrow \fft(R_{k_2}(x), \theta_{k_2} + W_{m_1})$\;
    }

    \Return $E_0 \,||\, E_1 \,||\, \dots \,||\, E_{2^{m_2}-1}$
    }
\end{algorithm2e}

\begin{remark}
Algorithm~\ref{Algo:general-basis} is an additive analogue of the Bailey's four-step FFT~\cite{Bailey1990}. The Taylor expansion with respect to $Z_{W_{m_1}}(x)$ plays a role analogous to Bailey's matrix formulation. It arranges the coefficients of $f(x)$ into a $2^{m_2}\times 2^{m_1}$ matrix and decomposes the evaluation problem into independent column and row sub-AFFTs. The column sub-AFFTs are evaluated over the projected affine space
\(
Z_{W_{m_1}}(\theta)+Z_{W_{m_1}}(U),
\)
whereas the row sub-AFFTs are evaluated over the cosets $\theta_k+W_{m_1}$ for $0\leq k<2^{m_2}$. In contrast to Bailey's classical algorithm, no separate diagonal multiplication by twiddle factors is needed. Instead, the Taylor expansion and the projected evaluation points encode the interaction between the two stages.
\end{remark}

% $\theta_i+W_{m_1}$, where $0\le i 
% \le 2^{m_2}-1$ and $\theta_i$ is linear combination of $\set{\beta_{m_1},\beta_{m_1+1},\ldots,\beta_{m_1+m_2-1}}$ with $\theta_0=0$. 

% The algorithm is presented in Algorithm~\ref{Algo:general-basis}.

% Therefore, for each $\theta_i+\omega \in \theta_i+W_{m_1}$, we have $Z_{W_{m_1}}(\theta_i+\omega)=Z_{W_{m_1}}(\theta_i)$ which implies that for each $x\in \theta_i+W_{m_1}$, we have $Z_{W_{m_1}}(x) \in Z_{W_{m_1}}(U)$, where $Z_{W_{m_1}}(U)$ is an $m_2$-dimensional $\mathbb F_2$-linear subspace with 
% \[
% U
% =
% \left\langle
% \beta_{m_1},\beta_{m_1+1},\ldots,\beta_{m_1+m_2-1}
% \right\rangle.
% \]

% If we do the Taylor expansion of the input $f(x)$ with respect to the vanishing polynomial $Z_{W_{m_1}}(x)$, then we can write $f(x)$ as
% \[f(x) = \sum_{i=0}^{n_2-1} \left( \sum_{j=0}^{n_1-1} g_{i,j} x^j \right) (Z_{W_{m_1}}(x))^i, \]
% where $n_1=2^{m_1}$ and $n_2=2^{m_2}$. We can rewrite $f(x)$ as follows
% \[f(x) = \sum_{j=0}^{n_1-1} \left( \sum_{i=0}^{n_2-1} g_{i,j} (Z_{W_{m_1}}(x))^i \right) x^j. \]

\noindent We now turn to the arithmetic complexity of Algorithm~\ref{Algo:general-basis}. The cost decomposes naturally into two parts, namely the Taylor expansion step, which divides $f(x)$ by powers of $Z_{W_{m_1}}(x)$, and the sub-FFT evaluations in the column and row passes. We analyze each in turn and show that the total cost is invariant under the choice of split $m=m_1+m_2$.

\subsection{Detailed Cost Analysis}\label{Sec:costNewAFFT} 
We first analyze the cost of the Taylor expansion subroutine in the algorithm, where Algorithm~\ref{Algo:RecursiveTE} computes the Taylor expansion of $f(x)$ with respect to the vanishing polynomial $Z_{W_{m_1}}(x)$.

\begin{algorithm2e}[ht]
    \caption{Computing $\mathrm{TE}(f(x),h(x)=Z_{W_{m_1}}(x))$}
    \label{Algo:RecursiveTE}
    \SetAlgoLined
    {\scriptsize
    \KwIn{A polynomial $f(x)$ with $\deg(f)<2^{m_1+m_2}$, integers $m_1,m_2$, and the polynomial $h(x)=Z_{W_{m_1}}(x)$.}
    \KwOut{The Taylor Expansion representation of $f$ with respect to $h=Z_{W_{m_1}}$.}

    $e \leftarrow 2^{m_2-1}$\;

    % \If{$e<1$}{
    %     \Return $f(x)$\;
    % }

    Compute the division
    $$
        f(x)=r(x)+q(x)h^e,
        \qquad \deg(r)<\deg(h^e) \quad \text{and} \quad \deg(q)<\deg(h^e).
    $$

    \If{$m_2=1$}{
    \Return $(r,q)$
    }

    \Return $\mathrm{TE}(r,h,m_2-1)\; || \; \mathrm{TE}(q,h,m_2-1)$\;
    }
\end{algorithm2e}

We now quantify the cost of the Taylor expansion subroutine. The key observation is that every power of the vanishing polynomial $h(x)=Z_{W_{m_1}}(x)$ retains at most $m_1+1$ nonzero terms, so each division step costs only $m_1$ multiplications regardless of the recursion depth.

\begin{proposition}\label{Prop:costTE}
   Algorithm~\ref{Algo:RecursiveTE} requires $m_1m_2 2^{m-1}$ multiplications and $m_1m_22^{m-1}$ additions, where $m=m_1+m_2$.
\end{proposition}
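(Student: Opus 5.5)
The plan is to first bound the cost of a single division step in Algorithm~\ref{Algo:RecursiveTE}, and then sum over the recursion tree.

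\emph{Cost of one division.} Since $h(x)=Z_{W_{m_1}}(x)$ is a linearized polynomial over a field of characteristic two, $h(x)^{e}$ with $e=2^{s}$ is obtained by raising every coefficient to the power $2^s$ and every exponent to $2^{m_1}\cdot$-scaled powers:
\[
h(x)^{2^s}=x^{2^{m_1+s}}+\sum_{t=0}^{m_1-1}c_t^{2^s}x^{2^{t+s}}.
\]
Hence $h^{e}$ is monic of degree $2^{m_1}e$ and has at most $m_1$ nonzero non-leading terms. Consider dividing a polynomial $f$ with $\deg f<2\cdot 2^{m_1}e$ by $h^e$. For each of the $2^{m_1}e$ coefficients of the quotient, processed from the top down, we subtract the coefficient times the $m_1$ lower-order terms. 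Each such subtraction costs $m_1$ multiplications and $m_1$ additions. All shifted positions $j-2^{m_1}e+2^{t+s}$ stay within the current degree range, so no extra work appears. The leading coefficient needs no multiplication because $h^e$ is monic. A single division of input size $N=2^{m_1}\cdot 2e$ therefore costs $m_1\cdot N/2$ multiplications and $m_1\cdot N/2$ additions. I would present this per-coefficient count carefully, and note that it is an upper bound when some $c_t=0$. The exact count stated in the proposition corresponds to the generic case where all $m_1$ lower terms are present.

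\emph{Summing over the recursion.} The recursion starts with $m_2$ and halves $e$ at each level. At depth $d$ (with $d=0,\ldots,m_2-1$) there are $2^{d}$ subproblems, each of size $2^{m-d}$. Each costs $m_1 2^{m-d-1}$ multiplications and the same number of additions, so every level contributes $m_1 2^{m-1}$. Since there are $m_2$ levels (the recursion stops after the $m_2=1$ division), the total is $m_1m_2 2^{m-1}$ multiplications and $m_1m_22^{m-1}$ additions. Equivalently, one can set up the recurrence $T(m_2)=2T(m_2-1)+m_1 2^{m_1+m_2-1}$ with $T(1)=m_12^{m_1}$ and solve it by induction on $m_2$.

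\emph{Main obstacle.} The delicate step is the per-division count. It requires justifying that $h^{e}$ has exactly $m_1$ non-leading terms; the number does not grow with $e$, thanks to the Frobenius identity. It also requires checking that each quotient coefficient triggers exactly $m_1$ multiply-adds, including the topmost coefficients, so that there is no boundary discount or overflow.
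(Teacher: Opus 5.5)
Your proposal is correct and follows essentially the same route as the paper: the paper also charges $m_1$ multiplications and $m_1$ additions per quotient coefficient (since $h^e$ is monic with at most $m_1+1$ nonzero terms and the leading term cancels), and then observes that each of the $m_2$ recursion levels contributes $m_1 2^{m-1}$. Your Frobenius identity $h(x)^{2^s}=x^{2^{m_1+s}}+\sum_{t} c_t^{2^s}x^{2^{t+s}}$ supplies the justification for the sparsity of $h^e$ that the paper only asserts. Your remark that the count is exact only when all $c_t\neq 0$, and otherwise an upper bound, is consistent with the paper's ``at most $m_1+1$ nonzero terms'' phrasing.
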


\begin{proof}
    At the first step, the first division is by $h^{2^{m_2}-1}$ and $\deg(h^{2^{m_2}-1})=2^{m-1}$. Since $h(x)$ is the vanishing polynomial of the subspace $W_{m_1}$, both $h(x)$ and every power $h^e$ contain at most $m_1+1$ nonzero terms. Since the highest-degree term in $h^{2^{m_2}-1}$ has a coefficient of one, each iteration of dividing $f(x)$ by $h^{2^{m_2}-1}$ requires $m_1$ multiplications. Additionally, at each iteration, the highest-degree term cancels out, allowing us to omit the corresponding addition. Consequently, dividing $f(x)$ by $h^{2^{m_2}-1}$ requires a total of $2^{m-1} \cdot m_1$ multiplications and additions.

    More generally, at recursion level $i$, where $0 \leq i \leq m_2-1$, there are $2^i$ subproblems. Each subproblem involves a polynomial of degree less than $2^{m-i}$ and a divisor $h^{2^{m_2-1-i}}$, whose degree is
    \[
    \deg\!\left(h^{2^{m_2-1-i}}\right)=2^{m_1}\cdot 2^{m_2-1-i}=2^{m-1-i}.
    \]
    Hence, each subproblem requires $2^{m-1-i}m_1$ multiplications and the same number of additions. Summing over all recursion levels, the total cost is
    \[
    \sum_{i=0}^{m_2-1} 2^i \cdot 2^{m-1-i} m_1
    = m_1m_22^{m-1}.
    \]
    Therefore, Algorithm~\ref{Algo:RecursiveTE} performs $m_1m_22^{m-1}$ multiplications and $m_1m_22^{m-1}$ additions.
\end{proof}

We now combine the cost of the Taylor expansion subroutine with that of the recursive sub-FFT calls to obtain the total arithmetic complexity of Algorithm~\ref{Algo:general-basis}.

\begin{theorem}\label{Th:costAlgo1}
    Algorithm~\ref{Algo:general-basis} evaluates a polynomial of degree less than $2^m$ over an $m$-dimensional affine subspace using
    \[
    \frac{1}{4}2^m m^2 + \frac{3}{4}2^m m = \frac{1}{4}n(\log_2 n)^2 + \frac{3}{4}n\log_2 n
    \]
    multiplications and the same number of additions, regardless of the choice of split $m=m_1+m_2$.
\end{theorem}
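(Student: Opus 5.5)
The plan is strong induction on $m$, using a cost recurrence read off from the three stages of Algorithm~\ref{Algo:general-basis}. Let $T_M(m)$ and $T_A(m)$ be the numbers of multiplications and additions used on an $m$-dimensional affine subspace, and let $F(m)=\frac{1}{4}2^m m^2+\frac{3}{4}2^m m$ be the claimed count. I will show $T_M(m)=T_A(m)=F(m)$ for every admissible choice of splits, including the splits made inside the recursive calls.

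\textbf{Cost convention.} As in the accounting for Proposition~\ref{Prop:costTE}, I treat the evaluation points as precomputed data: the shifts $\theta_k$, the values $Z_{W_{m_1}}(\theta)$ and $Z_{W_{m_1}}(\beta_{m_1+\ell})$, and the constants $\theta$, $\beta_0$ in the base case. Under this convention, the only field operations charged are those in the Taylor expansion, the sub-AFFTs, and the base case. I will state this explicitly at the start of the proof.

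\textbf{Base case $m=1$.} The input is $f(x)=a+bx$. Computing $f(\theta)=a+b\theta$ costs one multiplication and one addition. Then $f(\theta+\beta_0)=f(\theta)+b\beta_0$ costs one more of each. Hence $T_M(1)=T_A(1)=2$, and indeed $F(1)=\frac{1}{4}\cdot 2+\frac{3}{4}\cdot 2=2$.

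\textbf{Recurrence for $m\ge 2$.} Fix any split $m=m_1+m_2$ with $m_1,m_2\ge 1$. The three stages cost the following.
\begin{enumerate}
\item The Taylor expansion costs $m_1m_22^{m-1}$ multiplications and the same number of additions, by Proposition~\ref{Prop:costTE}.
\item The column stage consists of $2^{m_1}$ independent AFFTs of dimension $m_2$. Each is over the affine space $U'$, which has dimension $m_2$ by Theorem~\ref{Th:injectivity}. This stage costs $2^{m_1}T(m_2)$.
\item The row stage consists of $2^{m_2}$ independent AFFTs of dimension $m_1$, over the cosets $\theta_k+W_{m_1}$. This stage costs $2^{m_2}T(m_1)$.
\end{enumerate}
The stages only read and overwrite entries of $M$, and the final concatenation is free. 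Therefore both $T_M$ and $T_A$ satisfy
\[
T(m)=m_1m_22^{m-1}+2^{m_1}T(m_2)+2^{m_2}T(m_1).
\]

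\textbf{Induction step.} Since $m_1,m_2<m$, the induction hypothesis gives $T(m_i)=F(m_i)$, whatever splits the recursive calls use. Substituting, and using $2^{m_1}2^{m_2}=2^m$, we obtain
\[
T(m)=2^m\Bigl[\tfrac{1}{2}m_1m_2+\tfrac{1}{4}(m_1^2+m_2^2)+\tfrac{3}{4}(m_1+m_2)\Bigr]
=2^m\Bigl[\tfrac{1}{4}(m_1+m_2)^2+\tfrac{3}{4}(m_1+m_2)\Bigr]=F(m).
\]
This expression is independent of $m_1$ and $m_2$, which gives split-invariance. Because the recurrence and the base value are identical for multiplications and additions, the two counts coincide.

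\textbf{Main obstacle.} The algebra is routine; the delicate point is the cost accounting. I need to make sure that computing $U'$ and the $\theta_k$ is not silently charged, so it must sit under the precomputation convention above. I also need to check that the base-case count of $2$ matches what the recursion actually performs. In particular, one might worry that $f(\theta)$ computed at a row call with $\theta=\theta_0$ could sometimes be free. Under the stated convention it is charged uniformly, and uniform charging is exactly what makes the recurrence close to $F(m)$.
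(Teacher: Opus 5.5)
Your proof is correct and follows the paper's argument: the same cost recurrence, the same base value $2$ at $m=1$, and strong induction over arbitrary splits. The only difference is cosmetic. The paper solves the Taylor-expansion part, $\tau(m)=\tau(m_1)+\tau(m_2)+\frac{1}{2}m_1m_2$ giving $\frac{1}{4}2^m(m^2-m)$, separately from the sub-AFFT part, $S(m)=2^m m$, and then adds them, whereas you solve the combined recurrence in one step; your explicit precomputation convention matches what the paper assumes implicitly.
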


\begin{proof}
The cost decomposes into two parts, the Taylor expansion step and the sub-FFT evaluations.

\noindent Since the numbers of additions and multiplications coincide (by Proposition~\ref{Prop:costTE}), let $A_{TE}(m)$ denote this common quantity. The algorithm produces $2^{m_1}$ column subproblems and $2^{m_2}$ row subproblems, $A_{TE}(m)$ satisfies the recurrence
\begin{equation}\label{Eqn:TE-cost}
    A_{TE}(m) = 2^{m_1} A_{TE}(m_2) + 2^{m_2} A_{TE}(m_1) + \frac{1}{2} 2^m m_1 m_2.
\end{equation}

Define $\tau(r)=A_{TE}(r)/2^r$. Dividing both sides by $2^{m}$, we obtain
\[\tau(m)= \tau(m_1)+ \tau(m_2)+ \frac{1}{2}m_1m_2. \]

% If we split $m=m_1+m_2$ with $m_1=m_2=\frac{m}{2}$, then the recursion simply reduces to
% \[\tau(m) = 2 \tau(m/2) +  \frac{1}{8} {m^2} \]
% with $\tau(0)=0$. This gives
% \[
% \tau(m)= \frac{1}{4}(m^2-m).
% \]

% However, the parameter $m$ may be split into any pair of integers $m_1,m_2 \geq 1$ such that $m_1+m_2=m$. The overall complexity remains unchanged. We prove this by strong induction on $m$. For the base case, $\tau(1)=0$. So the statement is true for $m=1$. Assume that for some $m\geq 2$, the formula holds for all positive integers $k<m$. That is, assume $\tau(k)=\frac{1}{4}(k^2-k)$ for all $1\leq k < m$. We must prove that $\tau(m)=\frac{1}{4}(m^2-m)$.

We show $\tau(m)=\frac{1}{4}(m^2-m)$ for all $m\geq 1$ by strong induction, independent of the split. The base case $\tau(1)=0=\frac{1}{4}(1-1)$ is immediate. Assume $\tau(k)=\frac{1}{4}(k^2-k)$ for all $1\leq k < m$. For any split $m=m_1+m_2$ with $m_1,m_2\geq 1$, we have
% 
% Take any split $m=m_1+m_2$ with $m_1 \geq 1$ and $m_2\geq 1$. Since $m_1<m$ and $m_2<m$, the induction hypothesis applies to both $\tau(m_1)$ and $\tau(m_2)$. Hence,
% \[\tau(m_1)=\frac{1}{4}(m_1^2-m_1),
% \qquad
% \tau(m_2)=\frac{1}{4}(m_2^2-m_2).\]
% 
% Substituting into the recurrence gives
\begin{equation*}
    \begin{aligned}
        \tau(m)&= \tau(m_1)+ \tau(m_2)+ \frac{1}{2}m_1m_2\\
        &= \frac{1}{4}(m_1^2-m_1) + \frac{1}{4}(m_2^2-m_2)+ \frac{1}{2}m_1m_2\\
        &= \frac{1}{4}(m^2-m).
    \end{aligned}
\end{equation*}

Hence, by strong induction, $\tau(m)=\frac{1}{4}(m^2-m)$ for all $m \geq 1$. Consequently, the total cost in Algorithm~\ref{Algo:general-basis} associated with the Taylor expansion is $\frac{1}{4}2^m(m^2-m)$ additions and multiplications.

\noindent We now determine the cost arising from the sub-FFTs. Since the numbers of additions and multiplications coincide, let $S(m)$ denote this common quantity. The algorithm performs $2^{m_1}$ column FFTs of size $2^{m_2}$ and $2^{m_2}$ row FFTs of size $2^{m_1}$. Therefore,
\[S(m) = 2^{m_1} S(m_2) + 2^{m_2} S(m_1), \]
with $S(1)=2$. By the same strong induction argument, this recurrence leads $S(m)=2^m m$ for every split.
% \[S(m)=2^m m. \]

\noindent Combining both contributions, the total number of additions and multiplications performed by Algorithm~\ref{Algo:general-basis} is given by
\[
\frac{1}{4}2^m(m^2-m)+ 2^m m= \frac{1}{4}2^m m^2+ \frac{3}{4}2^m m.
\]

This proves the stated operation count.
\end{proof}

\begin{corollary}\label{Coro:cost-Algo1-subspace}
In the setting of Theorem~\ref{Th:costAlgo1}, when the evaluation domain is the subspace $W_m$, Algorithm~\ref{Algo:general-basis} requires $n-1$ fewer multiplications and $n-1$ fewer additions than over a general affine subspace $\theta+W_m$.
\end{corollary}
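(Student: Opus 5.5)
The plan is to find exactly where the shift $\theta$ enters the arithmetic of Algorithm~\ref{Algo:general-basis}, and then count the savings with a recurrence that mirrors the one for $S(m)$ in the proof of Theorem~\ref{Th:costAlgo1}. The Taylor expansion stage (Algorithm~\ref{Algo:RecursiveTE}) only divides by powers of $Z_{W_{m_1}}(x)$. Its cost, given by Proposition~\ref{Prop:costTE}, does not depend on $\theta$, so the contribution $\frac14 2^m(m^2-m)$ is unchanged. Every saving must therefore come from the base case $m=1$.

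First I will fix the base-case accounting implicit in $S(1)=2$. For $f(x)=g_0+g_1x$ over $\theta+\langle\beta_0\rangle$, one computes $f(\theta)=g_0+g_1\theta$ with one multiplication and one addition. One then computes $f(\theta+\beta_0)=f(\theta)+g_1\beta_0$ with one more multiplication and one more addition. When $\theta=0$, we get $f(0)=g_0$ for free, and $f(\beta_0)=g_0+g_1\beta_0$ costs one multiplication and one addition. So a zero-shift base case saves exactly one multiplication and one addition.

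Next I will track which recursive calls have zero shift when the top-level domain is $W_m$, i.e. $\theta=0$.
\begin{itemize}
\item The column AFFTs are over $U'=Z_{W_{m_1}}(0)+Z_{W_{m_1}}(U)=Z_{W_{m_1}}(U)$, which is a linear subspace by Theorem~\ref{Th:injectivity}. Hence all $2^{m_1}$ column calls are again zero-shift problems of dimension $m_2$.
\item Among the row AFFTs, only $k_2=0$ has $\theta_0=0$. The other $2^{m_2}-1$ rows are genuinely affine and cost the full amount.
\end{itemize}

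Let $D(m)$ denote the saving, which is the same for additions and for multiplications. This gives
\[
D(m)=2^{m_1}D(m_2)+D(m_1),\qquad D(1)=1.
\]
By strong induction on $m$, exactly as in Theorem~\ref{Th:costAlgo1}, I will show $D(m)=2^m-1$ for every split:
\[
2^{m_1}(2^{m_2}-1)+2^{m_1}-1=2^m-1=n-1.
\]

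The main subtlety is making sure the base-case convention matches the one that yields $S(1)=2$ in Theorem~\ref{Th:costAlgo1}. I also need the zero-shift property to propagate correctly: a subproblem over a linear subspace produces, recursively, linear column domains and exactly one linear row coset. Once that is checked, the induction is routine.
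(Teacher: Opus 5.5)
Your proposal is correct and is essentially the paper's proof: you use the same saving recurrence $D(m)=2^{m_1}D(m_2)+D(m_1)$ with $D(1)=1$, solved by strong induction to give $D(m)=2^m-1$. You also make explicit three points the paper leaves implicit: the Taylor stage is independent of $\theta$; the base-case accounting behind $S(1)=2$; and the column domain is the linear subspace $Z_{W_{m_1}}(U)$, which the paper loosely writes as $W_{m_2}$, a form that is only literally correct in the Cantor setting.
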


\begin{proof}
When $\theta=0$, all $2^{m_1}$ column AFFTs are evaluated over $W_{m_2}$, and exactly one row AFFT is evaluated over $W_{m_1}$.
Thus, if $D(m)$ denotes the number of saved operations, then it satisfies the recurrence
\[
D(m)=2^{m_1}D(m_2)+D(m_1),
\]
with $D(1)=1$. Therefore, by strong induction, $D(m)=2^m-1$. Hence, evaluation over the subspace $W_m$ saves exactly $n-1$ multiplications and $n-1$ additions.
\end{proof}

\begin{remark}[Inverse AFFT]\label{Remark:inverse-AFFT}
The inverse of Algorithm~\ref{Algo:general-basis} is obtained by inverting each stage in reverse order. Specifically, one first applies the inverse row sub-AFFTs over the cosets $\theta_k + W_{m_1}$, then the inverse column sub-AFFTs over the projected affine space $Z_{W_{m_1}}(\theta) + Z_{W_{m_1}}(U)$, and finally the inverse Taylor expansion, which recombines the entries of $M$ into the coefficients of $f(x)$. The resulting polynomial is the unique polynomial of degree less than $2^m$ that interpolates the given evaluations over $\theta+W_m$. The Taylor stage is bijective by the uniqueness in Definition~\ref{Def:TaylorExp}, and the inverse sub-AFFTs are themselves inverse AFFTs of dimensions $m_1$ and $m_2$. Consequently, the inverse transform requires the same number of finite field additions and multiplications as the forward transform. The same holds for Algorithms~\ref{Algo:AFFT-Cantor-any-m1} and~\ref{Algo:AFFT-Cantor-fixed-m1}.
\end{remark}

\noindent The cost of Algorithm~\ref{Algo:general-basis} is dominated by the Taylor expansion step, which requires $O(n(\log_2 n)^2)$ multiplications. This cost arises from the divisions by vanishing polynomials whose coefficients introduce field multiplications. It is therefore natural to ask whether a suitable choice of basis can remove this quadratic overhead. The answer is affirmative. If the affine space is constructed from a Cantor special basis (see Definition~\ref{Def:Cantor-basis}). Then the vanishing polynomial reduces to $S^{m_1}(x)$, and, when $m_1$ is a power of two, it takes the simple form $S^{m_1}(x)=x^{2^{m_1}}+x$. Division by this binomial is multiplication-free and requires only one addition at each iteration. Exploiting this additional structure, the next section presents a specialization of Algorithm~\ref{Algo:general-basis} whose multiplication count is $\frac{1}{2}n\log_2 n$.

\section{New Additive FFTs Based on Cantor Special Basis}\label{Sec:AFFT-Cantor}
We now specialize Algorithm~\ref{Algo:general-basis} to the setting where the evaluation subspace admits a Cantor special basis. In the general basis setting of Section~\ref{Sec:Algo-general}, the Taylor expansion step dominates the overall cost. By Proposition~\ref{Prop:costTE}, computing the Taylor expansion with respect to $Z_{W_{m_1}}(x)$ requires $m_1m_2 2^{m-1}$ multiplications. In contrast, for a Cantor special basis, we have $Z_{W_{m_1}}(x)=S^{m_1}(x)$, and all coefficients of $S^{m_1}(x)$ lie in $\mathbb{F}_2$. Consequently, the Taylor expansion step incurs no multiplication in $\mathbb{F}_{2^k}$. We formalize this observation below.

\begin{remark}\label{Remark:cantorVanishing}
    Let $\{\beta_0=1,\beta_1,\ldots,\beta_{m-1}\}$ be a Cantor special basis of a $m$ dimensional subspace $W_{m}$ of $\mathbb{F}_{2^k}$, and let $m=m_1+m_2$.
    \begin{enumerate}[label=(\roman*)]
        \item Since $S^{m_1}(x)\in\mathbb{F}_2[x]$, the Taylor expansion of any polynomial $f(x)\in\mathbb{F}_{2^k}[x]$ with respect to $S^{m_1}(x)$ requires no multiplications in $\mathbb{F}_{2^k}$.
        
        \item By the explicit formula $S^{m_1}(x)=\sum_{i=0}^{m_1}\binom{m_1}{i}x^{2^i}$ and Lucas' theorem, $S^{m_1}(x)$ has exactly $2^{\mathrm{wt}(m_1)}$ nonzero terms~\cite{BSG2026}, where $\mathrm{wt}(m_1)$ denotes the Hamming weight of $m_1$. Each polynomial-division step in the Taylor expansion costs $2^{\mathrm{wt}(m_1)}-1$ additions. This quantity is minimized when $\mathrm{wt}(m_1)=1$, i.e., $m_1=2^t$ for some positive integer $t$, giving $S^{m_1}(x)=x^{2^{m_1}}+x$ and exactly one addition per step.
    \end{enumerate}
\end{remark}

\noindent Both algorithms below achieve zero multiplications in the Taylor expansion. They differ in the addition cost, which is governed by the number of nonzero terms in $S^{m_1}(x)$. The first algorithm permits an arbitrary split; the second restricts $m_1$ to a power of two, attaining the minimum addition cost by Remark~\ref{Remark:cantorVanishing}(ii).

\noindent Let $W_m=\langle \beta_{0},\beta_{1},\ldots,\beta_{m-1} \rangle$, where $\set{\beta_{0},\beta_{1},\ldots,\beta_{m-1}}$ is a special Cantor basis. We want to evaluate the polynomial $f(x)$ of degree less than $n=2^m$ over the affine space $\theta+W_m$. For a given split $m=m_1+m_2$, we can decompose the subspace $W_m$ into $2^{m_2}$ disjoint affine subspaces $\theta_i+W_{m_1}$, where
\[
\theta_i = \sum_{j=0}^{m_2-1} i_j \beta_{m_1+j}, \quad i=\sum_{j=0}^{m_2-1}i_j 2^j, \quad i_j \in \{0,1\},
\]
and $\theta_0=0$. Also, the recursion gives $S^{m_1}(\beta_{m_1+j})=\beta_{j}$ for $0\le j \le {m_2}-1$, which implies $S^{m_1}(\theta_i)\in W_{m_2}$. Thus, $S^{m_1}$ maps each affine subspace to a single point in $W_{m_2}$.

Taylor expansion of the input $f(x)$ with respect to the vanishing polynomial $S^{m_1}(x)$ gives
\begin{equation*}
\begin{aligned}
f(x)
&=
\sum_{i=0}^{2^{m_2}-1}
\left(
\sum_{j=0}^{2^{m_1}-1} g_{i,j}x^j
\right) (S^{m_1}(x))^i        \\
&=
\sum_{j=0}^{2^{m_1}-1}
\left(
\sum_{i=0}^{2^{m_2}-1} g_{i,j}S^{m_1}(x)^i
\right)x^j
\end{aligned}
\end{equation*}
and we store the coefficients $g_{i,j}$ in a $2^{m_2}\times 2^{m_1}$ matrix $M$. By Remark~\ref{Remark:cantorVanishing}, this step is multiplication-free.

\begin{algorithm2e}[ht]
    \caption{Additive FFT of length $n = 2^m$ over a Cantor special basis with arbitrary split of $m$}\label{Algo:AFFT-Cantor-any-m1}
    \SetAlgoLined
    {\scriptsize
    \KwIn{$f(x) \in \mathbb{F}_{2^k}[x]$ of degree $< n = 2^m$ and the affine subspace $\theta+W_m=\theta+ \langle \beta_0,\beta_1,\ldots,\beta_{m-1} \rangle$, where $\{\beta_0=1,\beta_1,\ldots,\beta_{m-1}\}$ is a Cantor special basis.}
    \KwOut{$\fft(f, \theta+W_m)$.}

    \If{$m=1$}
    {
        \Return $(f(\theta),f(\theta+1))$\;
    }
    Let $m = m_1 + m_2$ and $U=\left\langle \beta_{m_1},\beta_{m_1+1},\ldots,\beta_{m_1+m_2-1} \right\rangle$\;
    
    Do the Taylor Expansion of $f(x)$  w.r.t. $S^{m_1}(x)$ 
    \begin{equation*}
        \begin{aligned}
            f(x)
            &=\sum_{i=0}^{2^{m_2}-1} \left( \sum_{j=0}^{2^{m_1}-1} g_{i,j} x^j \right) (S^{m_1}(x))^i
        \end{aligned}
    \end{equation*}

    Initialize a $2^{m_2} \times 2^{m_1}$ matrix $M$ such that $M[k_2, k_1] = g_{k_2, k_1}$\;

    Let $U' = S^{m_1}(\theta) + W_{m_2}=S^{m_1}(\theta)+  \langle \beta_0, \ldots, \beta_{m_2-1} \rangle $\;
    
    \tcc{Compute column evaluations over the affine space $U'$}
    \For{$k_1 = 0, \dots, 2^{m_1}-1$}{
        Define $C_{k_1}(y) = \sum_{k_2=0}^{2^{m_2}-1} M[k_2, k_1] y^{k_2}$\;
        
        % Let the column polynomial be $C_{k_1}(x) = \sum_{k_2=0}^{2^{m_2}-1} M[k_2, k_1] x^{k_2}$\;
        $M[*, k_1] \leftarrow \fft(C_{k_1}(y), U')$\;
    }

    \tcc{Compute row polynomials over the affine spaces}
    \For{$k_2 = 0, \dots, 2^{m_2}-1$}{
        % Let $(c_{m_2-1}, \ldots, c_0)_2$ be the binary representation of $k_2$\;
        Let $\theta_{k_2} = \theta + \sum_{\ell=0}^{m_2-1} c_{\ell} \beta_{m_1+\ell}$, where $(c_{m_2-1}, \ldots, c_0)_2=k_2$\;
        
        Define $R_{k_2}(x) = \sum_{k_1=0}^{2^{m_1}-1} M[k_2, k_1] x^{k_1}$\;
        
        $E_{k_2} \leftarrow \fft(R_{k_2}(x), \theta_{k_2} + W_{m_1})$\;
    }

    \Return $E_0 \,||\, E_1 \,||\, \dots \,||\, E_{2^{m_2}-1}$
    }
\end{algorithm2e}

\begin{proposition}\label{Prop:cost-Algo1-mult}
Algorithm~\ref{Algo:AFFT-Cantor-any-m1} evaluates a polynomial of degree less than $2^m$ over an $m$-dimensional affine subspace using $\frac{1}{2}n \log_2 n$ multiplications, regardless of the choice of split $m=m_1+m_2$.
\end{proposition}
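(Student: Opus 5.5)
Our plan mirrors the proof of Theorem~\ref{Th:costAlgo1}, with the Taylor-expansion contribution now absent. Let $M(m)$ denote the number of $\mathbb{F}_{2^k}$-multiplications used by Algorithm~\ref{Algo:AFFT-Cantor-any-m1} on an input of degree less than $2^m$ over an $m$-dimensional affine subspace $\theta+W_m$ generated by a Cantor special basis.

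We first account for the operations that are not recursive calls. By Remark~\ref{Remark:cantorVanishing}(i), $S^{m_1}(x)\in\mathbb{F}_2[x]$, so the Taylor expansion step uses no multiplications. Next, the shift $S^{m_1}(\theta)$ and the new coset representatives $\theta_{k_2}$ are additions of known field elements. The column subspace is $W_{m_2}=\langle\beta_0,\ldots,\beta_{m_2-1}\rangle$, which follows from $S^{m_1}(\beta_{m_1+j})=\beta_j$. This subspace is again generated by a Cantor special basis, and so are the row cosets $\theta_{k_2}+W_{m_1}$. Hence all recursive calls are legitimate instances of the same algorithm, and the matrix arrangement costs nothing. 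This gives the recurrence
\[
M(m)=2^{m_1}M(m_2)+2^{m_2}M(m_1),
\]
since the algorithm performs $2^{m_1}$ column AFFTs of dimension $m_2$ and $2^{m_2}$ row AFFTs of dimension $m_1$.

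For the base case $m=1$, we must return $(f(\theta),f(\theta+1))$ with $f=a_0+a_1x$. We compute $f(\theta)=a_0+a_1\theta$ using one multiplication, and then $f(\theta+1)=f(\theta)+a_1$ using one addition only, because $\beta_0=1$. Thus $M(1)=1$. This is the one point where the Cantor structure enters at the leaf level, and it is the step I expect needs the most care: unlike $S(1)=2$ in Theorem~\ref{Th:costAlgo1}, the saving here comes from $\beta_0=1$ and must be stated explicitly. The convention should also be checked against the subspace case $\theta=0$, where the count would be smaller.

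Finally, we set $\mu(r)=M(r)/2^r$. Dividing the recurrence by $2^m$ gives $\mu(m)=\mu(m_1)+\mu(m_2)$ with $\mu(1)=\tfrac12$. Strong induction over an arbitrary split, as in the proof of Theorem~\ref{Th:costAlgo1}, then yields $\mu(m)=m/2$. Hence $M(m)=\tfrac12 2^m m=\tfrac12 n\log_2 n$, independent of the choice of $m_1,m_2$.
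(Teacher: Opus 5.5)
Your proposal is correct and matches the paper's proof: the Taylor stage is multiplication-free because $S^{m_1}(x)\in\mathbb{F}_2[x]$, which leaves the recurrence $M(m)=2^{m_1}M(m_2)+2^{m_2}M(m_1)$ with $M(1)=1$, and strong induction solves it for every split. Your extra justifications, namely that $M(1)=1$ relies on $\beta_0=1$ and that the column and row subproblems are again Cantor-basis instances, are details the paper leaves implicit. One small correction: $S^{m_1}(\theta)$ is not literally a sum of known elements but a precomputed twiddle value, and the paper does not count its cost either.
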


\begin{proof}
Over a Cantor special basis, the Taylor expansion stage requires no finite field multiplications. Hence, all multiplications arise from the recursive sub-AFFTs. Their number satisfies
\[
M(m)=2^{m_1}M(m_2)+2^{m_2}M(m_1), \qquad M(1)=1.
\]
Using strong induction,
\[
\begin{aligned}
M(m)
&=2^{m_1}\cdot m_2 2^{m_2-1} +2^{m_2} \cdot m_1 2^{m_1-1}\\
&=(m_1+m_2)2^{m-1} =\frac{1}{2}n\log_2 n.
\end{aligned}
\]
Therefore, the multiplication count is independent of the decomposition $m=m_1+m_2$.
\end{proof}

\begin{remark}\label{Remark:cost-Algo1-addi}
    Although the Taylor expansion in Algorithm~\ref{Algo:AFFT-Cantor-any-m1} requires no multiplications for any decomposition $m=m_1+m_2$, its addition count depends on the choice of $m_1$. More precisely, each division step requires $2^{\operatorname{wt}(m_1)}-1$ additions, where $\operatorname{wt}(m_1)$ denotes the Hamming weight of $m_1$. Thus, the addition complexity is determined by the number of nonzero terms in the subspace vanishing polynomial $S^{m_1}(x)$. In particular, if $m_1=m-1$ and $m_2=1$ are chosen recursively, Algorithm~\ref{Algo:AFFT-Cantor-any-m1} yields the same arithmetic cost as the Cantor AFFT~\cite{Cantor1989FFT} reported in Table~\ref{Table:AFFT_cost-cmp}.
\end{remark}

\noindent The preceding remark motivates choosing $m_1$ to have the smallest possible Hamming weight. In particular, choosing $m_1=2^t$ yields $\mathrm{wt}(m_1)=1$, so that $S^{m_1}(x)=x^{2^{m_1}}+x$ and each division step requires only one addition. To preserve this binomial structure throughout the recursion, we choose $m_1=2^{\floor{\log_2(m-1)}}$, namely, the largest power of two strictly smaller than $m$. For instance, when $m=15$, this choice gives $m_1=8$. Algorithm~\ref{Algo:AFFT-Cantor-fixed-m1} formalizes the resulting recursive power-of-two splitting strategy.

% If we do the Taylor expansion of the input $f(x)$ with respect to the vanishing polynomial $S^{m_1}(x)$, then we can write $f(x)$ as
% \[f(x) = \sum_{i=0}^{n_2-1} \left( \sum_{j=0}^{n_1-1} g_{i,j} x^j \right) (S^{m_1}(x))^i, \]
% where $n_1=2^{m_1}$ and $n_2=2^{m_2}$. We can rewrite $f(x)$ as follows
% \[f(x) = \sum_{j=0}^{n_1-1} \left( \sum_{i=0}^{n_2-1} g_{i,j} (S^{m_1}(x))^i \right) x^j. \]

\begin{algorithm2e}[htb]
    \caption{Additive FFT of length $n = 2^m$ over a Cantor special basis with a power-of-two split of $m$}\label{Algo:AFFT-Cantor-fixed-m1}
    \SetAlgoLined
    {\scriptsize
    \KwIn{$f(x) \in \mathbb{F}_{2^k}[x]$ of degree $< n = 2^m$ and the affine subspace $\theta+W_m=\theta+ \langle \beta_0,\beta_1,\ldots,\beta_{m-1} \rangle$, where $\{\beta_0=1,\beta_1,\ldots,\beta_{m-1}\}$ is a Cantor special basis.}
    \KwOut{$\fft(f, \theta+W_m)$.}

    \If{$m=1$}
    {
        \Return $(f(\theta),f(\theta+1))$\;
    }
    Set $m_1$ as the largest power of two strictly less than $m$, and let $m_2 = m - m_1$\;
    
    Do the Taylor Expansion of $f(x)$  w.r.t. $S^{m_1}(x)$ 
    \begin{equation*}
        \begin{aligned}
            f(x)
            &=\sum_{i=0}^{2^{m_2}-1} \left( \sum_{j=0}^{2^{m_1}-1} g_{i,j} x^j \right) (S^{m_1}(x))^i
        \end{aligned}
    \end{equation*}

    Initialize a $2^{m_2} \times 2^{m_1}$ matrix $M$ such that $M[k_2, k_1] = g_{k_2, k_1}$\;

    Let $U' = S^{m_1}(\theta) + W_{m_2}=S^{m_1}(\theta)+  \langle \beta_0, \ldots, \beta_{m_2-1} \rangle $\;
    
    \tcc{Compute column evaluations over the affine space $U'$}
    \For{$k_1 = 0, \dots, 2^{m_1}-1$}{
        Define $C_{k_1}(y) = \sum_{k_2=0}^{2^{m_2}-1} M[k_2, k_1] y^{k_2}$\;
        
        % Let the column polynomial be $C_{k_1}(x) = \sum_{k_2=0}^{2^{m_2}-1} M[k_2, k_1] x^{k_2}$\;
        $M[*, k_1] \leftarrow \fft(C_{k_1}(y), U')$\;
    }

    \tcc{Compute row polynomials over the affine spaces}
    \For{$k_2 = 0, \dots, 2^{m_2}-1$}{
        % Let $(c_{m_2-1}, \ldots, c_0)_2$ be the binary representation of $k_2$\;
        Let $\theta_{k_2} = \theta + \sum_{\ell=0}^{m_2-1} c_{\ell} \beta_{m_1+\ell}$, where $(c_{m_2-1}, \ldots, c_0)_2=k_2$\;
        
        Define $R_{k_2}(x) = \sum_{k_1=0}^{2^{m_1}-1} M[k_2, k_1] x^{k_1}$\;
        
        $E_{k_2} \leftarrow \fft(R_{k_2}(x), \theta_{k_2} + W_{m_1})$\;
    }

    \Return $E_0 \,||\, E_1 \,||\, \dots \,||\, E_{2^{m_2}-1}$
    }
\end{algorithm2e}

\subsection{Detailed Cost Analysis} 
We now establish the arithmetic complexity of Algorithm~\ref{Algo:AFFT-Cantor-fixed-m1}.  Since the Taylor expansion with respect to $S^{m_1}(x)=x^{2^{m_1}}+x$ is multiplication-free, all multiplications arise from the recursive sub-FFTs. The addition cost consists of the contributions from the recursive sub-FFTs and the Taylor expansion steps.

\begin{theorem}\label{Theorem:cost-Algo-fixed-m1}
Let $n=2^m$ and write $m= 2^{p_1}+2^{p_2}+\cdots + 2^{p_w}$ with $p_1> p_2> \cdots > p_w \geq 0$ (so $w=\mathrm{wt}(m)$). Algorithm~\ref{Algo:AFFT-Cantor-fixed-m1} evaluates a polynomial of degree less than $n$ over an $m$-dimensional affine subspace using
\[
\frac{1}{2}\,n\log_2 n
\]
multiplications, and
\begin{equation*}\label{eq:totalAddGenGM}
n\log_2 n \;+\; n\!\left[ \sum_{i=1}^{w} p_i\, 2^{p_i-2} \;+\; \frac{1}{2}\sum_{i=1}^{w} (m \bmod 2^{p_i}) \right]
\end{equation*}
additions. In particular, when $m=2^t$ (i.e.\ $w=1$), the addition count simplifies to
\[
n\log_2 n+\frac{1}{4}n\log_2 n\,\log_2\!\log_2 n.
\]
\end{theorem}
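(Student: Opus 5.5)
By Remark~\ref{Remark:cantorVanishing}, the Taylor expansion stage of Algorithm~\ref{Algo:AFFT-Cantor-fixed-m1} uses no multiplications. Moreover, Algorithm~\ref{Algo:AFFT-Cantor-fixed-m1} is Algorithm~\ref{Algo:AFFT-Cantor-any-m1} with one particular split. So the multiplication count $\frac{1}{2}n\log_2 n$ follows directly from Proposition~\ref{Prop:cost-Algo1-mult}, and nothing more is needed there. The real work is the additions. I will set up a recurrence for the addition count $A(m)$, normalize it by $2^m$, and solve it by induction on the Hamming weight $w=\mathrm{wt}(m)$.

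\emph{Step 1: cost of the Taylor expansion with respect to $S^{m_1}(x)=x^{2^{m_1}}+x$.} I will rerun the accounting of Proposition~\ref{Prop:costTE} using Algorithm~\ref{Algo:RecursiveTE}. Since we work in characteristic $2$, every power $h^e=(x^{2^{m_1}}+x)^e$ with $e=2^{m_2-1-i}$ is again a binomial, namely $x^{e2^{m_1}}+x^{e}$. Each elimination step of the division removes the leading term for free and adds the quotient coefficient into exactly one lower position. That is one addition and no multiplication per step. At recursion level $i$ there are $2^i$ subproblems, each with $2^{m-1-i}$ elimination steps. Summing over the $m_2$ levels gives $m_2 2^{m-1}$ additions. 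I expect this to be the step needing the most care: I must check that the two exponents of $h^e$ never coincide and that no extra additions are hidden, for instance when $m_1=1$. For the base case $m=1$, the evaluation $f(\theta)=f_0+f_1\theta$, $f(\theta+1)=f(\theta)+f_1$ costs $2$ additions and $1$ multiplication.

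\emph{Step 2: the recurrence.} The $2^{m_1}$ column AFFTs have dimension $m_2$ and the $2^{m_2}$ row AFFTs have dimension $m_1$. Hence
\[
A(m)=2^{m_1}A(m_2)+2^{m_2}A(m_1)+m_2 2^{m-1},\qquad A(1)=2 .
\]
Set $a(m)=A(m)/2^m$ and $b(m)=a(m)-m$. This gives $b(m)=b(m_1)+b(m_2)+\frac{1}{2}m_2$ with $b(1)=0$. It then suffices to prove
\[
b(m)=\sum_{i=1}^{w}p_i2^{p_i-2}+\tfrac12\sum_{i=1}^{w}(m\bmod 2^{p_i}).
\]

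\emph{Step 3: solving it.} First take $m=2^t$. The algorithm splits $m_1=m_2=2^{t-1}$, so $b(2^t)=2b(2^{t-1})+2^{t-2}$. Induction on $t$ gives $b(2^t)=t2^{t-2}$; this also covers $t=0,1$. Next take $w\ge 2$. Then $m_1=2^{p_1}$ and $m_2=m-2^{p_1}=2^{p_2}+\cdots+2^{p_w}=m\bmod 2^{p_1}$. Therefore
\[
b(m)=p_12^{p_1-2}+\tfrac12(m\bmod 2^{p_1})+b(m_2).
\]
I will apply the induction hypothesis to $m_2$, which has weight $w-1$. For $i\ge 2$ we have $m_2\bmod 2^{p_i}=m\bmod 2^{p_i}$, so the remaining terms match the claimed formula. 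In the power-of-two case the term $m\bmod 2^{p_1}=0$, which is consistent with that formula. Multiplying back by $2^m$ gives the stated addition count. For $m=2^t$ this is $n\log_2 n+nt2^{t-2}=n\log_2 n+\frac14 n\log_2 n\log_2\log_2 n$.
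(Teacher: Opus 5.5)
Your proposal is correct and follows essentially the same route as the paper: a multiplication-free Taylor expansion costing $2^{m-1}m_2$ additions, the split recurrence normalized by $2^m$, the power-of-two solution $t2^{t-2}$, and then peeling off $m_1=2^{p_1}$. The differences are only cosmetic. You derive the Taylor-expansion count directly from the binomials $h^e=x^{e2^{m_1}}+x^{e}$ rather than citing Gao--Mateer, you fold the sub-FFT and Taylor recurrences into one via $b(m)=A(m)/2^m-m$, and your induction on $\mathrm{wt}(m)$ (using $m_2\bmod 2^{p_i}=m\bmod 2^{p_i}$ for $i\ge 2$) is a cleaner version of the paper's informal unrolling.
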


\begin{proof}
The Taylor expansion $f(x)$ with respect to the vanishing polynomial $S^{m_1}(x)$ requires zero finite field multiplications. Thus, the total number of multiplications, denoted by $M(m)$ for an input of size $n = 2^m$, is determined entirely by the sub-FFTs. The algorithm performs $2^{m_1}$ column FFTs of size $2^{m_2}$ and $2^{m_2}$ row FFTs of size $2^{m_1}$. Therefore,
\[M(m) = 2^{m_1} M(m_2) + 2^{m_2} M(m_1), \]
with $M(1)=1$. By the same strong induction argument used in the proof of Theorem~\ref{Th:costAlgo1}, this recurrence yields
\[
M(m)= \frac{1}{2}n \log_2 n.
\]

\noindent Let $A(m)$ denote the total number of additions performed by the algorithm. This cost comprises the additions incurred by the recursive sub-FFTs and those required by the Taylor expansion step, denoted by $A_{\mathrm{TE}}$.

We first analyze the addition cost of the Taylor expansion step, denoted by $A_{TE}$. When $n/d$ is a power of two, the total number of additions required to compute the Taylor expansion of $f(x)$, with degree less than $n$, with respect to $x^d + x$ is $\frac{1}{2}n\log_2 (n/d)$~\cite{Gao2010FFT}.

Since $m_1$ is chosen to be a power of two, we have $S^{m_1}(x)=x^{2^{m_1}}+x$. Applying the above result with $n = 2^{m}$ and $d = 2^{m_1}$, the Taylor expansion of a polynomial of degree less than $2^{m}$ with respect to $S_{m_1}(x)$ requires
\[\frac{1}{2} 2^m \log_2(2^m / 2^{m_1})= \frac{1}{2} 2^m m_2\]
additions. The recursive calls consist of $2^{m_1}$ column FFTs of size $2^{m_2}$ and $2^{m_2}$ row FFTs of size $2^{m_1}$. Therefore, the Taylor expansion cost satisfies
\[A_{TE}(m) = 2^{m_1} A_{TE}(m_2) + 2^{m_2} A_{TE}(m_1) + \frac{1}{2} 2^m m_2. \]

Define $\tau(r)=A_{TE}(r)/2^r$ and dividing the recurrence by $2^m$ yields
\[\tau(m)= \tau(m_1)+ \tau(m_2)+ \frac{1}{2}m_2. \]

We first consider the case in which $m$ is a power of two, say $m=2^t$. The decomposition is then balanced, with $m_1=m_2=2^{t-1}$. Hence, the recurrence for $\tau$ becomes
\[\tau(2^t) = 2 \tau(2^{t-1}) + \frac{1}{2} 2^{t-1}.  \]
Together with the base case $\tau(1)=0$, corresponding to $t=0$, this recurrence yields
\[
\tau(2^t)=t2^{t-2}.
\]

We now derive a closed form for $\tau(m)$ when $m$ is not power of two. Write the binary expansion of $m$ as
\[m= 2^{p_1}+2^{p_2}+\cdots + 2^{p_w}, \]
where $p_1> p_2> \cdots > p_w \geq 0$. Then
\[
m_1=2^{p_1} \quad \text{and} \quad m_2=m-2^{p_1}.
\]

Repeatedly applying the recurrence for $\tau$ gives
\begin{equation*}
    \begin{aligned}
        \tau(m) 
        &= \tau(2^{p_1}) + \tau(m - 2^{p_1}) + \frac{1}{2}(m - 2^{p_1})\\
        &= \tau(2^{p_1})+ \tau \left( \sum_{i=2}^{w}p_i \right)+ \frac{1}{2}(m - 2^{p_1})\\
        &= \tau(2^{p_1})+ \tau(2^{p_2})+ \tau \left( \sum_{i=3}^{w}p_i \right)+ \frac{1}{2}(m - 2^{p_2})+ \frac{1}{2}(m - 2^{p_1})\\
        &= \sum_{i=1}^w \tau(2^{p_i})+ \frac{1}{2} \sum_{i=1}^w (m \bmod 2^{p_i})\\
        &= \sum_{i=1}^w \left( p_i 2^{p_i-2} \right) + \frac{1}{2} \sum_{i=1}^w (m \bmod 2^{p_i}).
    \end{aligned}
\end{equation*}
Consequently, the addition cost for the Taylor expansion is
\[A_{TE}(m)= 2^m \left[ \sum_{i=1}^w \left( p_i 2^{p_i-2} \right) + \frac{1}{2} \sum_{i=1}^w (m \bmod 2^{p_i}) \right]. \]

We now determine the number of additions arising from the sub-FFTs. The algorithm performs $2^{m_1}$ column FFTs of size $2^{m_2}$ and $2^{m_2}$ row FFTs of size $2^{m_1}$. Hence,
\[A_1(m) = 2^{m_1} A_1(m_2) + 2^{m_2} A_1(m_1), \]
with $A_1(1)=2$. By the same strong induction argument used for the multiplication count, this recurrence yields
\[A_1(m)=2^m m. \]

Combining the additions from the recursive sub-FFTs with those from the Taylor expansion steps, the total number of additions is
\begin{equation*}
    \begin{aligned}
        A(m) 
        &=2^m m + 2^m \left[ \sum_{i=1}^w \left( p_i 2^{p_i-2} \right) + \frac{1}{2} \sum_{i=1}^w (m \bmod 2^{p_i}) \right]\\
        &= n\log_2 n \;+\; n\!\left[ \sum_{i=1}^{w} p_i\, 2^{p_i-2} \;+\; \frac{1}{2}\sum_{i=1}^{w} (m \bmod 2^{p_i}) \right]
    \end{aligned}
\end{equation*}

When $m=2^t$, we have $w=1$, $p_1=t$, and $m\bmod 2^{p_1}=0$. It follows that
\[
\tau(m)=t2^{t-2}=\frac{1}{4}m\log_2 m.
\]
Therefore,
\begin{equation*}
    \begin{aligned}
        A(m)
        &= 2^m m + 2^m \cdot \frac{1}{4}m\log_2 m\\
        &= n\log_2 n+\frac{1}{4}n\log_2 n\,\log_2\!\log_2 n.
    \end{aligned}
\end{equation*}

This proves the stated operation count.
\end{proof}

\begin{corollary}\label{Coro:cost-Algo4-subspace}
In the setting of Theorem~\ref{Theorem:cost-Algo-fixed-m1}, when the evaluation domain is the subspace $W_m$, Algorithm~\ref{Algo:AFFT-Cantor-fixed-m1} requires $n-1$ fewer multiplications and $n-1$ fewer additions than over a general affine subspace $\theta+W_m$.
\end{corollary}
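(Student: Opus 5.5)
We mirror the proof of Corollary~\ref{Coro:cost-Algo1-subspace}, writing $D(m)$ for the number of operations saved when the evaluation domain is $W_m$ rather than a general affine subspace $\theta+W_m$. We treat additions and multiplications together, since we will show the savings are equal.

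\textbf{Base case.} At the base level $m=1$, evaluating over $\theta+W_1$ returns $(f(\theta),f(\theta+1))$. Writing $f(x)=a+bx$, this costs one multiplication $b\theta$ and two additions: $a+b\theta$, then adding $b$. With $\theta=0$ we get $(a,a+b)$, which needs no multiplication and one addition. Hence $D(1)=1$ for multiplications, and also for additions, since the counts $M(1)=1$ and $A_1(1)=2$ drop to $0$ and $1$.

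\textbf{The Taylor expansion is unaffected by $\theta$.} The Taylor stage depends only on $f$ and on $S^{m_1}(x)=x^{2^{m_1}}+x$. Its cost is therefore identical in both settings, and it contributes no savings.

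\textbf{Recurrence.} When $\theta=0$ we have $S^{m_1}(\theta)=0$, so $U'=W_{m_2}$. All $2^{m_1}$ column sub-AFFTs are then over the linear subspace $W_{m_2}$, each saving $D(m_2)$. Among the rows, $\theta_{k_2}=\sum_\ell c_\ell\beta_{m_1+\ell}$ is nonzero for $k_2\neq 0$, and these row calls are genuine affine evaluations. Only the row with $k_2=0$ has $\theta_0=0$, saving $D(m_1)$. This gives
\[
D(m)=2^{m_1}D(m_2)+D(m_1),
\]
where $m_1$ is the largest power of two below $m$. The recurrence holds for any split, so the specific choice of $m_1$ in Algorithm~\ref{Algo:AFFT-Cantor-fixed-m1} does not matter.

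\textbf{Solving the recurrence.} We use strong induction on $m$. Assuming $D(m_i)=2^{m_i}-1$,
\[
D(m)=2^{m_1}(2^{m_2}-1)+2^{m_1}-1=2^m-1=n-1.
\]

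\textbf{Main obstacle.} The delicate point is the accounting convention. We must confirm that the costs in Theorem~\ref{Theorem:cost-Algo-fixed-m1} are counted with the base case costing one multiplication and two additions, so that $\theta=0$ genuinely saves exactly one of each at the leaves. We must also confirm that no other $\theta$-dependent work, such as computing $S^{m_1}(\theta)$ or the shifts $\theta_{k_2}$, is charged in the operation count; these are treated as precomputed, as in Corollary~\ref{Coro:cost-Algo1-subspace}.
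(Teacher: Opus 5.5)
Your proposal is correct and follows the paper's proof: the same recurrence $D(m)=2^{m_1}D(m_2)+D(m_1)$ with $D(1)=1$, justified by $U'=W_{m_2}$ and $\theta_0=0$, and solved by strong induction to give $2^m-1=n-1$. Your explicit base-case count and your observation that the Taylor stage does not depend on $\theta$ spell out what the paper leaves implicit. One further point is also left unstated in the paper: every descendant of a row call with $\theta_{k_2}\neq 0$ keeps a nonzero shift, because $S^{a}$ is injective on $\langle\beta_a,\beta_{a+1},\ldots\rangle$, so no further savings arise below those calls.
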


\begin{proof}
    When $\theta=0$, all $2^{m_1}$ column AFFTs are evaluated over $W_{m_2}$, and exactly one row AFFT is evaluated over $W_{m_1}$. Thus, if $D(m)$ denotes the number of saved operations, then it satisfies the recurrence
    \[
    D(m)=2^{m_1}D(m_2)+D(m_1),
    \]
    with $D(1)=1$. Therefore, using strong induction, we obtain $D(m)=2^m-1$. Therefore, evaluation over the subspace $W_m$ saves exactly $n-1$ multiplications and $n-1$ additions. 
\end{proof}

\begin{remark}\label{Remark:distinction-GM}
    The splitting rule of Algorithm~\ref{Algo:AFFT-Cantor-fixed-m1} coincides with the balanced decomposition used in the second Gao--Mateer algorithm~\cite{Gao2010FFT} when $m$ is a power of two. Indeed, if $m=2^t$, then 
    \[
    m_1=2^{\floor{\log_2(m-1)}}= 2^{t-1}=\frac{m}{2} \quad \text{and} \quad m_2=m-m_1=\frac{m}{2}.
    \]
    For arbitrary $m$, it generalizes this splitting rule by choosing $m_1$ as the largest power of two strictly smaller than $m$ and setting $m_2=m-m_1$. The resulting decomposition is not necessarily balanced, while the total multiplication count remains $\frac{1}{2}n\log_2 n$. When $m$ is a power of two, the second Gao--Mateer algorithm requires $ n\log_2 n+\frac{1}{2}n\log_2 n\log_2\log_2 n$ additions~\cite{Gao2010FFT}, whereas Algorithm~\ref{Algo:AFFT-Cantor-fixed-m1} requires $n\log_2 n+\frac{1}{4}n\log_2 n\log_2\log_2 n$. Thus, Algorithm~\ref{Algo:AFFT-Cantor-fixed-m1} requires fewer additions than the second Gao--Mateer algorithm.
\end{remark}

\begin{remark}\label{Remark:Cantor-Algo4}
    Cantor's AFFT \cite{Cantor1989FFT} and Algorithm~\ref{Algo:AFFT-Cantor-fixed-m1} both operate over a Cantor special basis and require exactly $\frac{1}{2}n\log_2 n$ multiplications. They differ only in the number of additions, and this difference is attributable entirely to the splitting rule. By Remark~\ref{Remark:cantorVanishing}, each iteration of the division costs $2^{\mathrm{wt}(m_1)}-1$ additions, so the total is governed by the Hamming weights of the successive splits. From Table~\ref{Table:AFFT_cost-cmp}, we know that the addition cost for Cantor algorithm is $\frac{1}{2}n\log_2 n + \frac{1}{2}n\sum_{r=0}^{\log_2 n-1} 2^{\mathrm{wt}(r)}$. When $m=\log_2 n=2^t$, this simplifies to $\tfrac{1}{2}n\log_2 n + \tfrac{1}{2}n\,3^{\log_2\log_2 n}= \tfrac{1}{2}n\log_2 n + \tfrac{1}{2}n (\log_2 n)^{\log_2 3}$. In contrast, by Theorem~\ref{Theorem:cost-Algo-fixed-m1}, Algorithm~\ref{Algo:AFFT-Cantor-fixed-m1} requires $n\log_2 n+\frac{1}{4}n\log_2 n\,\log_2\!\log_2 n$ additions when $m=2^t$. Consequently, Algorithm~\ref{Algo:AFFT-Cantor-fixed-m1} has lower asymptotic addition complexity. Although the above closed forms assume $m=2^t$, the same advantage holds for every $m\geq 4$, since Algorithm~\ref{Algo:AFFT-Cantor-fixed-m1} always chooses $m_1$ with $\mathrm{wt}(m_1)=1$, thereby minimizing the addition cost at each step.
\end{remark}

\begin{remark}\label{Remark:distinction-LCH}
    Over a Cantor special basis, both Algorithm~\ref{Algo:AFFT-Cantor-fixed-m1} and the LCH AFFT~\cite{LCH-FFT2016} require $\frac{1}{2}n\log_2 n$ multiplications. Nevertheless, the two algorithms differ substantially in their treatment of the input polynomial. Algorithm~\ref{Algo:AFFT-Cantor-fixed-m1} operates directly on the polynomial in the standard monomial basis. In contrast, the LCH algorithm first converts the input from the monomial basis to the novel polynomial basis and then performs the evaluation using its butterfly procedure. Consequently, Algorithm~\ref{Algo:AFFT-Cantor-fixed-m1} avoids the memory-access overhead associated with the basis-conversion stage of the LCH algorithm. The performance improvement observed over the LCH algorithm is reported in Table~\ref{Table:benchmark-results}.
\end{remark}

% \begin{remark}\label{Remark:cost-Algo4-subspace}
% The complexity in Theorem~\ref{Theorem:cost-Algo-fixed-m1} applies to evaluation over a affine space $\theta+W_m$. When working with subspace, i.e., when $\theta=0$, some additions and multiplications can be avoided. Indeed, the evaluation space for every column AFFT becomes $W_{m_2}$. one row AFFT is evaluated over the subspace $W_{m_1}$. The other $2^{m_2}-1$ row AFFTs are evaluated over affine spaces with $\theta_{k_2}\neq 0$. Let $D(m)$ be the number of multiplications, and also the number of additions, saved in dimension $m$. Then it satisfies the recurrence
% \[
% D(m)=2^{m_1}D(m_2)+D(m_1),
% \]
% with $D(1)=1$. Therefore, using strong induction, we obtain $D(m)=2^m-1$. Therefore, evaluation over the subspace $W_m$ saves exactly $n-1$ multiplications and $n-1$ additions. 
% \end{remark}

% \noindent Algorithms~\ref{Algo:general-basis} and \ref{Algo:AFFT-Cantor-fixed-m1} both operate directly on polynomials represented in the standard monomial basis and use Taylor expansion to decompose the evaluation problem. The next section considers the corresponding decomposition problem for polynomials represented in the novel polynomial basis introduced by Lin, Chung, and Han~\cite{LCH-basis2014}.

\section{Generalized Butterfly Phase of the LCH Additive FFT}\label{Sec:LCH-butterfly}

The arbitrary matrix decomposition introduced in Algorithm~\ref{Algo:general-basis} suggests a corresponding generalization of the evaluation phase of the LCH AFFT~\cite{LCH-basis2014,LCH-FFT2016}. Specifically, the column-row viewpoint can be transferred to polynomials represented in the novel polynomial basis, allowing the LCH butterfly phase to use any decomposition $m=m_1+m_2$.
This result is separate from Algorithms~\ref{Algo:AFFT-Cantor-any-m1} and \ref{Algo:AFFT-Cantor-fixed-m1}. They operate directly on coefficients in the standard monomial basis, whereas the generalized LCH butterfly developed in this section assumes that the input has already been converted to the novel polynomial basis~\cite{LCH-basis2014}. Thus, Algorithm~\ref{Algo:GenLCH} generalizes only the LCH evaluation phase; it does not eliminate the preceding polynomial-basis conversion.

The LCH additive FFT represents the input polynomial in the novel polynomial basis introduced by Lin, Chung, and Han~\cite{LCH-basis2014}. Although the algorithm is independent of the choice of basis for the underlying finite field, we assume a Cantor special basis throughout this section for simplicity.

\begin{definition}
Let $B=\{1,\beta_1,\ldots,\beta_{m-1}\}$ be a Cantor special basis of a $m$ dimensional subspace $W_{m}$ of $\mathbb{F}_{2^k}$, and let $S^i(x)$ denote the corresponding subspace polynomials. For each integer
\[
k=\sum_{i=0}^{m-1} b_i2^i,
\qquad b_i\in\{0,1\},
\]
define
\[
X_k(x)\coloneqq\prod_{i=0}^{m-1}\bigl(S^i(x)\bigr)^{b_i}.
\]
The family $\{X_0(x),X_1(x),\ldots,X_{n-1}(x)\}$ is called the novel polynomial basis with respect to $B$.
\end{definition}

Thus, $X_k(x)$ is the product of the polynomials $S^i(x)$ for which the $i$th bit of $k$ is equal to one. Since $\deg(S^i(x))=2^i$, it follows that $\deg(X_k(x))=k$ for every $k\in\{0,1,\ldots,n-1\}$. Hence, any polynomial $f(x)$ of degree less than $n$ admits a unique representation
\[
f(x)=\sum_{k=0}^{n-1} f_k X_k(x),
\]
where $f_0,f_1,\ldots,f_{n-1}\in\mathbb{F}_{2^m}$. The LCH additive FFT takes the coefficients of $f(x)$ in this novel polynomial basis as input and evaluates $f$ through a butterfly decomposition.

The proposed butterfly phase relies on the compositional structure of the vanishing polynomials associated with a Cantor special basis. In particular, these polynomials satisfy
\[
S^{i+j}(x)=S^i\bigl(S^j(x)\bigr).
\]
This identity allows us to factor the novel polynomial basis $X_k(x)$ to be decomposed into factors of smaller degree. 

Let $P(x)$ a polynomial of degree less than $n=2^m$, where $m=m_1+m_2$, in novel polynomial basis
\[
P(x) = \sum_{k=0}^{2^m-1} p_k X_k(x).
\]
Each index $k\in\{0,\ldots,2^m-1\}$ can be written uniquely as
\[k = k_1 + 2^{m_1}\cdot k_2, \]
where $k_1$ ranges from $0$ to $2^{m_1}-1$ and $k_2$ ranges from $0$ to $2^{m_2}-1$. Accordingly,
\[P(x) = \sum_{k_1=0}^{2^{m_1}-1} \sum_{k_2=0}^{2^{m_2}-1} p_{k_1 + 2^{m_1}\cdot k_2} X_{k_1 + 2^{m_1}\cdot k_2}(x).\]

The binary supports of $k_1$ and $2^{m_1}k_2$ are disjoint, and hence
\begin{equation*}
    \begin{aligned}
       X_{k_1 + 2^{m_1}\cdot k_2}(x)&= X_{k_1}(x)\cdot X_{2^{m_1}\cdot k_2}(x).
       % \\
       % &= X_{k_1}(x) \cdot X_{k_2}(s^{m_1}(x))
    \end{aligned}
\end{equation*}

Write $k_2 = \sum_{j=0}^{m_2-1} b'_j 2^j$. Then
\begin{equation*}
    \begin{aligned}
        2^{m_1} \cdot k_2 
        &= 2^{m_1} \sum_{j=0}^{m_2-1} b'_j 2^j = \sum_{j=0}^{m_2-1} b'_j 2^{j+m_1},
    \end{aligned}
\end{equation*}
which gives
% Thus, we have
% % \[X_{2^{m_1} \cdot k_2}(x) = \prod_{i=m_1}^{m_1+m_2-1} (S^i(x))^{b'_{i-m_1}}  \]
% 
\begin{equation*}
    \begin{aligned}
        X_{2^{m_1} \cdot k_2}(x) 
        % &= \prod_{i=m_1}^{m_1+m_2-1} (S^i(x))^{b'_{i-m_1}}\\
        &= \prod_{j=0}^{m_2-1} (S^{j+m_1}(x))^{b'_j}\\
        &= \prod_{j=0}^{m_2-1} \left( S^j(S^{m_1}(x)) \right)^{b'_j}.
    \end{aligned}
\end{equation*}

Therefore,
\[X_{k_1 + 2^{m_1}\cdot k_2}(x)= X_{k_1}(x)\cdot  X_{k_2}(S^{m_1}(x)).\]

Substituting this identity into the representation of $P(x)$ yields
\begin{equation*}
    \begin{aligned}
        P(x) &= \sum_{k_1=0}^{2^{m_1}-1} \sum_{k_2=0}^{2^{m_2}-1} p_{k_1 + 2^{m_1}\cdot k_2} X_{k_1}(x)\cdot  X_{k_2}(S^{m_1}(x))\\
        &= \sum_{k_1=0}^{2^{m_1}-1} \left[ \sum_{k_2=0}^{2^{m_2}-1} p_{k_1 + 2^{m_1}\cdot k_2} X_{k_2}(S^{m_1}(x)) \right] X_{k_1}(x).
    \end{aligned}
\end{equation*}

\noindent We can decompose the subspace $W_m$ into $2^{m_2}$ disjoint affine subspaces $\theta_i+W_{m_1}$, where $0\le i 
\le 2^{m_2}-1$ and $\theta_i$ is linear combination of $\set{\beta_{m_1},\beta_{m_1+1},\ldots,\beta_{m_1+m_2-1}}$ with $\theta_0=0$. Also, note that $S^{m_1}(\beta_{m_1+j})=\beta_{j}$ for $0\le j \le {m_2}-1$. Therefore, for each $\theta_i+\omega \in \theta_i+W_{m_1}$, we have $S^{m_1}(\theta_i+\omega)=S^{m_1}(\theta_i)$ which implies that for each $x\in \theta_i+W_{m_1}$, we have $S^{m_1}(x) \in W_{m_2}$.

With the factorization of $P(x)$ and the associated affine space structure in place, the generalized butterfly phase follows naturally. The algorithm first evaluates the $2^{m_1}$ column polynomials over the projected space $U$ and then evaluates each of the resulting $2^{m_2}$ row polynomials over its corresponding affine space. Algorithm~\ref{Algo:GenLCH} formalizes this procedure for an arbitrary decomposition $m=m_1+m_2$.

\begin{algorithm2e}[htb]
    \caption{Generalized butterfly phase of LCH additive FFT of length $n = 2^m$}\label{Algo:GenLCH}
    \SetAlgoLined
    {\scriptsize
    \KwIn{Polynomial $P(x) = \sum_{k=0}^{n-1} p_k X_k(x)$ in the novel polynomial basis, and the affine subspace $\theta+W_m=\theta+ \langle \beta_0,\beta_1,\ldots,\beta_{m-1} \rangle$, where $\{\beta_0=1,\beta_1,\ldots,\beta_{m-1}\}$ is a Cantor special basis.}
    \KwOut{$\fft(P, \theta+W_m)$.}
    
    \If{$m=1$}
    {
        \Return $(P(\theta),P(\theta+1))$\;
    }
    
    Write $P(x)$ as
    \[
        P(x) = \sum_{k_1=0}^{2^{m_1}-1} \left[ \sum_{k_2=0}^{2^{m_2}-1} p_{k_1 + 2^{m_1}\cdot k_2} X_{k_2}(S^{m_1}(x)) \right] X_{k_1}(x)
    \]

    Initialize a $2^{m_2} \times 2^{m_1}$ matrix $M$ such that $M[k_2, k_1] = p_{k_1 + 2^{m_1}\cdot k_2}$\;

    Let $U' = S^{m_1}(\theta) + W_{m_2} = S^{m_1}(\theta) + \langle \beta_0, \ldots, \beta_{m_2-1} \rangle$\;
    
    \tcc{Evaluate column polynomials over the affine space $U'$}
    \For{$k_1 = 0, \dots, 2^{m_1}-1$}{
        Define $C_{k_1}(y) = \sum_{k_2=0}^{2^{m_2}-1} M[k_2, k_1] X_{k_2}(y)$\;
        
        $M[*, k_1] \leftarrow \fft(C_{k_1}(y), U')$\;
    }

    \tcc{Evaluate row polynomials over the affine spaces}
    \For{$k_2 = 0, \dots, 2^{m_2}-1$}{
        Let $\theta_{k_2} = \theta + \sum_{\ell=0}^{m_2-1} c_{\ell} \beta_{m_1+\ell}$, where $(c_{m_2-1}, \ldots, c_0)_2=k_2$\;

        Define $R_{k_2}(x) = \sum_{k_1=0}^{2^{m_1}-1} M[k_2, k_1] X_{k_1}(x)$\;
        
        $E_{k_2} \leftarrow \fft(R_{k_2}(x), \theta_{k_2} + W_{m_1})$\;
    }

    \Return $E_0 \,||\, E_1 \,||\, \dots \,||\, E_{2^{m_2}-1}$\;
    }
\end{algorithm2e}

% \subsection{Cost analysis}\label{GenLCH:cost}

\noindent The arithmetic cost of Algorithm~\ref{Algo:GenLCH} is invariant under the choice of split $m=m_1+m_2$, as the following result shows.

\begin{theorem}\label{Th:costGenLCH}
    Algorithm~\ref{Algo:GenLCH} evaluates a polynomial of degree less than $n=2^m$, represented in the novel polynomial basis, over an $m$-dimensional affine subspace using exactly
    \[
    \frac{1}{2}n\log_2 n
    \quad \text{multiplications and}\quad
    n\log_2 n
    \quad\text{additions},
    \]
    independently of the decomposition $m=m_1+m_2$.
\end{theorem}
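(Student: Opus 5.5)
We argue by strong induction on $m$, following the split-invariance arguments used for Theorem~\ref{Th:costAlgo1} and Proposition~\ref{Prop:cost-Algo1-mult}. The key structural fact is that Algorithm~\ref{Algo:GenLCH} has no Taylor-expansion stage. The matrix $M$ is formed by reindexing the coefficients, $M[k_2,k_1]=p_{k_1+2^{m_1}k_2}$, which is justified by the factorization $X_{k_1+2^{m_1}k_2}(x)=X_{k_1}(x)\,X_{k_2}(S^{m_1}(x))$ derived above. Hence all arithmetic comes from the recursive column and row sub-AFFTs.

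We first establish the base case. For $m=1$ we have $P(x)=p_0X_0(x)+p_1X_1(x)=p_0+p_1x$, so
\[
P(\theta)=p_0+p_1\theta,\qquad P(\theta+1)=P(\theta)+p_1.
\]
This costs $M(1)=1$ multiplication and $A(1)=2$ additions, matching $\frac12\cdot 2\cdot 1$ and $2\cdot 1$. We also note that the shift $S^{m_1}(\theta)$ defining $U'$ and the coset offsets $\theta_{k_2}$ are precomputed constants depending only on $\theta$ and the basis. This matches the convention already used in Theorem~\ref{Theorem:cost-Algo-fixed-m1}, which counts only coefficient-dependent operations.

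For the inductive step, fix any split $m=m_1+m_2$ with $m_1,m_2\ge 1$. The algorithm performs $2^{m_1}$ column AFFTs of dimension $m_2$ over $U'$. Each column polynomial $C_{k_1}(y)=\sum_{k_2} M[k_2,k_1]X_{k_2}(y)$ is again expressed in the novel polynomial basis, so the recursive call is a valid instance of the same algorithm. It then performs $2^{m_2}$ row AFFTs of dimension $m_1$, again in the novel basis. This gives the recurrences
\[
M(m)=2^{m_1}M(m_2)+2^{m_2}M(m_1),\qquad A(m)=2^{m_1}A(m_2)+2^{m_2}A(m_1).
\]
Substituting the induction hypotheses $M(r)=r2^{r-1}$ and $A(r)=r2^r$ for $r<m$ yields
\[
M(m)=(m_1+m_2)2^{m-1},\qquad A(m)=(m_1+m_2)2^m.
\]
Since the sub-splits chosen inside the recursive calls are arbitrary, the count is independent of every decomposition.

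The main point needing care is correctness rather than the count: we must confirm that each recursive call is a genuine instance of the theorem. For the column calls, $U'=S^{m_1}(\theta)+W_{m_2}$ is an $m_2$-dimensional affine space with the Cantor basis $\{\beta_0,\ldots,\beta_{m_2-1}\}$. This relies on $S^{m_1}(\beta_{m_1+j})=\beta_j$, which lets $S^{m_1}$ map the coset representatives $\theta_{k_2}$ onto $U'$ in the correct order. For the row calls, $\theta_{k_2}+W_{m_1}$ is the appropriate affine space, and $S^{m_1}$ is constant on it. Evaluating $R_{k_2}$ there therefore reproduces $P$, by the same argument as for Algorithm~\ref{Algo:general-basis}.

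Once this is in place, the arithmetic recurrence closes exactly as in Proposition~\ref{Prop:cost-Algo1-mult}.
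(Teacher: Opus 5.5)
Your proposal is correct and follows the paper's proof: strong induction on $m$ using the split-independent recurrences $A(m)=2^{m_1}A(m_2)+2^{m_2}A(m_1)$ and $M(m)=2^{m_1}M(m_2)+2^{m_2}M(m_1)$, with base case $A(1)=2$, $M(1)=1$. You add three details the paper leaves implicit: the explicit base-case computation $P(\theta)=p_0+p_1\theta$, $P(\theta+1)=P(\theta)+p_1$, the observation that the reindexing into $M$ is arithmetic-free, and the check that each recursive call is a genuine novel-basis instance (the paper covers this in the derivation preceding the algorithm).
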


\begin{proof}
Let $A(m)$ and $M(m)$ denote the total number of additions and multiplications performed by the algorithm, respectively. We prove by strong induction on $m$ that
\[
A(m)=m2^m \qquad \text{and} \qquad M(m)=\frac{m}{2}2^m.
\]

For the base case $m=1$, the algorithm uses $A(1)=2$ additions and $M(1)=1$ multiplication. Hence,
\[
A(1)=1\cdot 2^1 \qquad \text{and} \qquad M(1)=\frac{1}{2}\cdot 1\cdot 2^1.
\]
% For the balanced split $m_1=m_2=m/2$, the cost of the algorithm satisfies the recurrences
% \begin{equation*}
% \begin{aligned}
%     A(m)&= 2^{m/2}A(m/2)+ 2^{m/2}A(m/2)\\
%     M(m)&= 2^{m/2}M(m/2)+ 2^{m/2}M(m/2),
% \end{aligned}
% \end{equation*}
% with base cases $A(1)=2$ and $M(1)=1$. Solving these recurrences yields that the total number of additions is $A(m)=2^m m $, and the total number of multiplications is $M= \frac{1}{2} 2^m m$.

% We now show, by strong induction, that $A(m)=m\cdot 2^m$ for an arbitrary decomposition $m=m_1+m_2$. The argument for $M(m)=\frac{m}{2}\cdot 2^m$ is identical.

Now let $m\geq 2$, and assume that, for every positive integer $k<m$, $A(k)=k2^k$ and $M(k)=\frac{1}{2}k2^k$.

Consider an arbitrary decomposition $m=m_1+m_2$, where $m_1,m_2\geq 1$. The algorithm performs $2^{m_1}$ recursive column FFTs of size $2^{m_2}$ and $2^{m_2}$ recursive row FFTs of size $2^{m_1}$. Therefore,
\[
A(m)=2^{m_1}A(m_2)+2^{m_2}A(m_1).
\]

Applying the induction hypothesis gives
\begin{align*}
A(m)
&=2^{m_1}\bigl(m_2 2^{m_2}\bigr)
  +2^{m_2}\bigl(m_1 2^{m_1}\bigr)\\
&=(m_1+m_2)2^{m_1+m_2} =m2^m.
\end{align*}

The same argument gives $M(m)=\frac{1}{2}m 2^m$. Therefore, the algorithm uses $n\log_2 n$ additions and $\frac{1}{2}n\log_2 n$ multiplications, independently of the decomposition $m=m_1+m_2$.
\end{proof}

% \noindent We note that that Algorithm~\ref{Algo:GenLCH} generalizes the butterfly phase of the LCH AFFT~\cite{LCH-FFT2016}, which is obtained by choosing $m_1=m-1$ and $m_2=1$.

\begin{remark}
Algorithm~\ref{Algo:GenLCH} generalizes the butterfly phase of the LCH AFFT, recovered by choosing $m_1=m-1$ and $m_2=1$. Although this generalization preserves the arithmetic cost of the LCH butterfly phase for an arbitrary decomposition, it does not eliminate the polynomial basis conversion required by the LCH framework. Consequently, Algorithm~\ref{Algo:AFFT-Cantor-fixed-m1} retains its principal implementation advantage by accepting coefficients in the standard monomial basis and avoiding the memory-access overhead associated with the conversion to the novel polynomial basis.
\end{remark}

\noindent The algorithms developed so far, namely Algorithm~\ref{Algo:general-basis} in the general basis setting and Algorithms~\ref{Algo:AFFT-Cantor-fixed-m1} and \ref{Algo:GenLCH} in the full Cantor special basis setting, assume either that the evaluation subspace has no additional structure or that it admits a complete Cantor special basis. However, a binary extension field may admit a partial Cantor special basis. More precisely, a prefix of the ordered basis satisfies the Cantor recursion (see Definition~\ref{Def:Cantor-basis}), whereas the remaining basis vectors are generic. The next section characterizes the parameter regimes in which this advantage arises.

\section{Additive FFTs in Partial Cantor special basis}\label{Sec:partial-Cantor}
The first Gao--Mateer~\cite{Gao2010FFT} and von zur Gathen--Gerhard~\cite{zurGathenFFT} algorithms evaluate a polynomial $f(x)$ over an $m$-dimensional subspace of $\mathbb{F}_{2^k}$, where $k$ is arbitrary. From Proposition~\ref{Prop:partial-cantor}, we know that $\mathbb{F}_{2^k}$ admits a Cantor special basis of dimension $\ell$ if and only if it contains $\mathbb{F}_{2^{2^{\ceil{\log_2 \ell}}}}$ as a subfield. Here, we focus on how a partial Cantor special basis changes the arithmetic costs of the von zur Gathen--Gerhard AFFT and Algorithm~\ref{Algo:general-basis}. For completeness, Appendix~\ref{Sec:GG-Cantor-AFFT} reviews Cantor's AFFT~\cite{Cantor1989FFT} and the von zur Gathen--Gerhard generalization.

If $\mathbb{F}_{2^k}$ does not admit a Cantor special basis of the full evaluation dimension $m$, the first Gao--Mateer additive FFT is generally the preferred method. We show, however, that the von zur Gathen--Gerhard algorithm and Algorithm~\ref{Algo:general-basis} can outperform it when a partial Cantor special basis is available. In particular, both algorithms can exploit this structure to reduce their numbers of additions and multiplications, whereas the Gao--Mateer algorithm cannot obtain the same reduction because its basis changes at each iteration. We characterize the parameter regimes in which these savings compensate for the higher general cost of the two algorithms, with particular emphasis on the case where the dimension of the evaluation subspace exceeds the dimension of the available Cantor special basis.

% We demonstrate that in certain scenarios, the von zur Gathen--Gerhard algorithm and Algorithm~\ref{Algo:general-basis} outperforms the first Gao--Mateer method. Our key contribution lies in identifying the role of partial Cantor special bases and showing that when such a basis is available, both the von zur Gathen--Gerhard algorithm and Algorithm~\ref{Algo:general-basis} can leverage it to reduce the number of additions and multiplications. In contrast, the Gao--Mateer algorithm cannot take advantage of this reduction due to its changing basis at each iteration. We characterize the conditions under which the von zur Gathen--Gerhard algorithm and Algorithm~\ref{Algo:general-basis} outperform the first
% Gao--Mateer method, particularly when the dimension of the evaluation subspace exceeds the dimension of the available Cantor special basis.

\subsection{Comparing Gao--Mateer and von zur Gathen--Gerhard AFFTs}\label{Sec:partial-GMvGG}
In this section, we first analyze the exact computational cost of the von zur Gathen--Gerhard AFFT~\cite{zurGathenFFT}. The vanishing polynomial $\mathbb{Z}_{W{m-1}}(x)$ is linearized, taking the form $\mathbb{Z}_{W{m-1}}(x) = c_0x + c_1x^2 + \cdots +c_{m-1}x^{2^{m-2}}+ x^{2^{m-1}}$, and consequently, $\mathbb{Z}_{W{m-1}}(x+\theta)$ contains $m+1$ coefficients. Since the highest-degree term in $\mathbb{Z}_{W_{m-1}}(x+\theta)$ has a coefficient of one, each iteration of dividing $f(x)$ by $\mathbb{Z}_{W_{m-1}}(x+\theta)$ requires $m$ multiplications. Additionally, at each iteration, the highest-degree term cancels out, allowing us to omit the corresponding addition. Consequently, dividing $f(x)$ by $\mathbb{Z}_{W_{m-1}}(x+\theta)$ requires a total of $2^{m-1} \cdot m$ multiplications and additions. 

Let $f(x) = q(x) \mathbb{Z}_{W_{m-1}}(x+\theta) + f_0(x)$. Then, we can rewrite it as  
\[
f(x) = q(x) \mathbb{Z}_{W_{m-1}}(x + \theta + \beta_{m-1}) + f_0(x) + q(x) \mathbb{Z}_{W_{m-1}}(\beta_{m-1}).
\]
Since the degree of $f(x)$ is less than $n=2^m$ and the degree of $\mathbb{Z}_{W_{m-1}}(x+\theta)$ is $2^{m-1}$, it follows that both $q(x)$ and $f_0(x)$ have degrees less than $2^{m-1}$. This allows us to define $f_1(x) = f_0(x) + q(x) \mathbb{Z}_{W_{m-1}}(\beta_{m-1})$. Thus, to compute $f_1(x)$, we require $2^{m-1}$ multiplications and additions. Thus, obtaining both $f_0(x)$ and $f_1(x)$ requires a total of $2^{m-1} \cdot m+ 2^{m-1}= 2^{m-1} \cdot (m+1)$ multiplications and additions. Similarly, for the next round we need $2 \cdot (2^{m-2}m)=2^{m-1}m$ multiplications and additions. More generally, at the $r$th iteration, we need $2^{m-1}(m+1-r)$ many multiplications and additions. Therefore, the total number of multiplications and additions in von zur Gathen--Gerhard FFT is given by
% 
% \begin{equation}\label{Eqn:zurGathen_M}
%     \begin{aligned}
%         &\sum_{r=0}^{m-1} 2^{m-1}(m+1-r)
%         = 2^{m-1} \big[(m+1)m- \frac{(m-1)m}{2} \big] \\
%         &= \frac{1}{4}2^m m^2+ \frac{3}{4}2^m m= \frac{1}{4} n (\log_2 n)^2 + \frac{3}{4} n \log_2 n.
%     \end{aligned}
% \end{equation}
% 
\begin{align}
&\sum_{r=0}^{m-1} 2^{m-1}(m+1-r)
= 2^{m-1} \big[(m+1)m- \frac{(m-1)m}{2} \big] \notag \\
&= \frac{1}{4}2^m m^2+ \frac{3}{4}2^m m= \frac{1}{4} n (\log_2 n)^2 + \frac{3}{4} n \log_2 n \label{Eqn:zurGathen_M}
\end{align}

% We can observe that the number of additions in both of the algorithms are same. 

Now, we will discuss the case when we have partial Cantor special basis in $\mathbb{F}_{2^k}$. Suppose that $f(x) \in \mathbb{F}_{2^k}[x]$ be a polynomial of degree less than $n=2^m$ and we want to evaluate $f(x)$ over the affine subspace $\theta+W_m=\theta+\langle \beta_0,\beta_1,\ldots, \beta_{\ell-1},\beta_{\ell},\ldots, \beta_{m-1} \rangle$, where $\set{\beta_0,\beta_1,\ldots, \beta_{\ell-1}}$ is a set of Cantor special basis in the field $\mathbb{F}_{2^k}$.

\begin{theorem}\label{Th:cost-mult-GG-partial}
Let $n=2^m$, where $m\geq 3$, , and let $\theta+W_m =\theta+\langle\beta_0,\ldots,\beta_{m-1}\rangle \subseteq\mathbb{F}_{2^k}$ be an $m$-dimensional affine subspace. Suppose that the first $\ell$ basis elements form a Cantor special basis of maximum dimension admitted by $\mathbb{F}_{2^k}$. Then the von zur Gathen--Gerhard AFFT requires
\[
2^{m-2}\left(m^2+3m-\ell^2-3\ell\right).
\]
finite field multiplications to evaluate a polynomial of degree less than $2^m$ over $\theta+W_m$. Moreover, this requires fewer multiplications than the first Gao--Mateer AFFT if and only if $\ell\geq m-2$.
\end{theorem}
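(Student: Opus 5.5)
The plan is to redo the level-by-level count of the von zur Gathen--Gerhard AFFT that led to \eqref{Eqn:zurGathen_M}, and to identify exactly which levels become multiplication-free when the bottom $\ell$ basis vectors form a Cantor special basis.

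\textbf{Step 1 (per-level cost).} In the general count, iteration $r$ ($0\le r\le m-1$) works with divisors $Z_{W_j}(x+\theta')$, where $j=m-1-r$. It costs $2^{m-1}(j+1)$ multiplications for the divisions and $2^{m-1}$ for forming $f_1=f_0+q\,Z_{W_j}(\beta_j)$, so $2^{m-1}(j+2)$ in total.

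\textbf{Step 2 (Cantor levels).} For the levels with $j\le \ell-1$, we have $Z_{W_j}=S^j\in\mathbb{F}_2[x]$ and $Z_{W_j}(\beta_j)=S^j(\beta_j)=\beta_0=1$.
\begin{itemize}
\item Forming $f_1$ then needs only additions.
\item Since $S^j$ is linearized, $S^j(x+\theta')=S^j(x)+S^j(\theta')$. Division by it reduces to division by $S^j(x)$ plus handling the constant $S^j(\theta')$.
\end{itemize}
Using the description of Cantor's algorithm recalled in Appendix~\ref{Sec:GG-Cantor-AFFT}, I will show that this constant is absorbed without field multiplications, because the shift is propagated through the recursion exactly as in Cantor's AFFT. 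This makes the whole level multiplication-free.

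\textbf{Step 3 (summing).} The saving over \eqref{Eqn:zurGathen_M} is
\[
\sum_{j=0}^{\ell-1}2^{m-1}(j+2)=2^{m-1}\Bigl(\tfrac{\ell(\ell-1)}{2}+2\ell\Bigr)=2^{m-2}(\ell^2+3\ell).
\]
Subtracting this from $2^{m-2}(m^2+3m)$ gives the claimed count.

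\textbf{Step 4 (comparison with Gao--Mateer).} The first Gao--Mateer algorithm uses $\tfrac32 n m-n+1=2^{m-2}(6m-4)+1$ multiplications (Table~\ref{Table:AFFT_cost-cmp}). It does not benefit from the partial basis, since its evaluation basis changes at each step. Set $d=m-\ell\ge 0$. Then the von zur Gathen--Gerhard AFFT uses fewer multiplications if and only if
\[
d(2m-d+3)<6m-4+2^{2-m}.
\]
\begin{itemize}
\item For $d=0,1,2$ the left side is $0$, $2m+2$, $4m+2$, respectively. Each is below $6m-4$ when $m\ge3$, so the inequality holds.
\item The left side is concave in $d$, so for $3\le d\le m-1$ it suffices to check the endpoints. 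At $d=3$ it equals $6m>6m-4+2^{2-m}$. At $d=m-1$ (that is, $\ell=1$) it equals $m^2+3m-4\ge 6m-4+2^{2-m}$ for $m\ge3$. Hence the inequality fails throughout this range.
\item The hypothesis $\beta_0=1$ forces $\ell\ge1$, so these cases are exhaustive.
\end{itemize}
This gives the equivalence with $\ell\ge m-2$. The maximality of $\ell$ is used only so that the Cantor levels are exactly $j<\ell$.

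\textbf{Main obstacle.} The delicate point is Step 2, namely justifying that the constant $S^j(\theta')$ in the shifted divisor costs no multiplications at the Cantor levels. Here I would lean on the appendix description of how Cantor's AFFT tracks the affine shift. If this fails and one multiplication per coefficient survives, the formula would instead become $2^{m-2}(m^2+3m-\ell^2-\ell)$. I would therefore check this carefully against the appendix before finalizing the count.
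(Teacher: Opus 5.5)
Your final formula is correct, but Step~2 cannot be carried out, and the count only comes out right because two errors cancel. The constant $S^j(\theta')$ in the shifted divisor is in general not in $\mathbb{F}_2$: it is $S^j(\theta)$ plus a combination of the $S^j(\beta_i)$ with $i>j$, and $S^j(\beta_i)=\beta_{i-j}$ or is generic. So every division step by $S^j(x)+S^j(\theta')$ costs one multiplication, and Cantor's recursion does not absorb it. This is exactly where the $\frac12 2^{\ell}\ell$ multiplications per Cantor AFFT call in Table~\ref{Table:AFFT_cost-cmp} come from. Already at $j=0$ you must compute $a+b\theta'$ from $a+bx$. Note also that your Step~2, taken with $\ell=m$, would make a full Cantor AFFT multiplication-free. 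The levels $j\le \ell-1$ therefore cost $2^{m-1}$ multiplications each, since only the update $f_1=f_0+q$ is free (because $S^j(\beta_j)=1$). That is $2^{m-1}\ell$ in total, not $0$.

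Your fallback $2^{m-2}(m^2+3m-\ell^2-\ell)$ is also wrong, because of a second error at level $j=\ell$ (iteration $r=m-1-\ell$). You treat that level as fully general. However, its divisor is $Z_{W_\ell}(x+\theta')=S^\ell(x)+S^\ell(\theta')$, whose non-constant coefficients already lie in $\mathbb{F}_2$. So the division there costs $2^{m-1}$ rather than $2^{m-1}(\ell+1)$. Only the update $f_1=f_0+q\,S^\ell(\beta_\ell)$ costs another $2^{m-1}$.

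This is where maximality of $\ell$ is really used, not merely to locate the Cantor levels. If $S^\ell(\beta_\ell)=1$, then $\{S^{\ell-i}(\beta_\ell)\}_{i=0}^{\ell}$ would be a Cantor special basis of dimension $\ell+1$. The correct tally, which is the paper's, is
\[
\sum_{j=\ell+1}^{m-1}2^{m-1}(j+2)\;+\;2^{m}\;+\;2^{m-1}\ell .
\]
Yours is $\sum_{j=\ell}^{m-1}2^{m-1}(j+2)+0$. The two agree only because $2^{m-1}(\ell+2)-2^m=2^{m-1}\ell$: your overcount at level $\ell$ exactly cancels your undercount over the $\ell$ Cantor levels.

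Step~4 is sound, and slightly more careful than the paper, which drops the $+1$ in the Gao--Mateer count and compares with $\le$. Two small corrections are needed there:
\begin{enumerate}
\item At $m=3$, $d=2$ you have $4m+2=6m-4$, so the left side is not below $6m-4$. The strict inequality holds only thanks to the $2^{2-m}$ term.
\item The case $d=0$ lies outside the formula's validity. There is no $\beta_\ell$, and the true count is Cantor's $\frac12 m2^m$, not $0$, although the comparison with Gao--Mateer still holds there.
\end{enumerate}
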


\begin{proof}
Let $f(x)\in\mathbb{F}_{2^k}[x]$ have degree less than $n=2^m$, and suppose that $f(x)$ is evaluated over $\theta+W_m$. We divide the multiplication cost of the von zur Gathen--Gerhard AFFT into three parts.

% In this case, after $m-1-(\ell+1)$th iteration, we have $\mathbb{Z}_{W_{i}}(x)=S^{i}(x) \text{ for } i=0,1,\ldots,\ell$. Thus, division by $\mathbb{Z}_{W_{i}}(x) \text{ for } i=0,1,\ldots,\ell$, only requires additions which gives the advantages in the von zur Gathen--Gerhard AFFT. Whereas, Gao--Mateer AFFT can't take that advantage as in each iteration, the evaluation is done over a different subspace, i.e., the basis is changed after each iteration in that algorithm.

% Now, we will compute the multiplication cost of the von zur Gathen--Gerhard algorithm. 

For the iterations indexed by $r=0,\ldots,m-1- (\ell+1)$, the algorithm operates on the general part of the ordered basis. At iteration $r$, it requires $2^{m-1}(m+1-r)$ finite field multiplications. Hence, the total multiplication cost
of these iterations is given by
\begin{equation*}
    \begin{aligned}
        &\sum_{r=0}^{m-1- (\ell+1)} 2^{m-1}(m+1-r)
        % &= 2^{m-1}[(m+1)(m-\ell-1)- \frac{(m-\ell-2)(m-\ell-1)}{2}]\\
        = 2^{m-2}(m-\ell-1)(m+ \ell+4).
    \end{aligned}
\end{equation*}

At the $(m-1-\ell)$th iteration, there are $2^{m-1-\ell}$ polynomials, each of degree less than $2^{\ell+1}$. The divisions in this step are performed by $S^{\ell}(x+\theta+\omega_i)$, for some $\omega_i \in \langle \beta_{\ell},\beta_{\ell+1}, \ldots, \beta_{m-1} \rangle$. Thus, each iteration of the division by $S^{\ell}(x+\theta+\omega_i)$ requires only one multiplication. Furthermore, since $S^{\ell}(\beta_{\ell}) \neq 1$, each of the $2^{m-\ell}$ divisions requires $2^{\ell}$ multiplications. As a result, the total number of multiplications needed for the $(m-1-\ell)$th iteration is
\[
2^{m-\ell}\cdot 2^\ell=2^m.
\]

After the $(m-1-\ell)$-th iteration, the algorithm reduces to the Cantor algorithm. At this stage, there are $2^{m-\ell}$ polynomials, each of degree less than $2^{\ell}$. Consequently, there are $2^{m-\ell}$ Cantor AFFT calls, with each call requiring a multiplication cost of $\frac{1}{2} 2^{\ell} \ell$ (see Table~\ref{Table:AFFT_cost-cmp}). This results in a total of 
\[
2^{m-\ell} \cdot \frac{1}{2} 2^{\ell} \ell = 2^{m-1} \ell
\]
multiplications. Therefore, the overall multiplication cost for the algorithm is given by  

\begin{equation}\label{Eqn:MixedFFT_M}
    \begin{aligned}
        & 2^{m-2}(m-\ell-1)(m + \ell + 4) + 2^{m} + 2^{m-1} \ell \\
        &= 2^{m-2}(m^2 + 3m - \ell^2 - 3\ell).
    \end{aligned}
\end{equation}
This proves the stated multiplication count.

Now, we want to show for which values of $\ell$, for which the multiplication cost in von zur Gathen--Gerhard algorithm will be less than the Gao--Mateer algorithm. For simplicity, we assume that the multiplication cost  in Gao--Mateer multiplication cost is $\frac{3}{2}2^m m - 2^m$. Thus,
\begin{equation*}
    \begin{aligned}
        & 2^{m-2}(m^2+3m-\ell^2-3\ell) \leq \frac{3}{2}2^m m - 2^m\\
        \iff & m^2+3m-\ell^2-3\ell\leq 6m - 4\\
        \iff &\ell^2+ 3\ell - m^2 +3m -4 \geq 0 \\
        \iff & \ell \geq m-2 \quad \text{for } m\geq 3
    \end{aligned}
\end{equation*}

Consequently, whenever $\ell\geq m-2$, the von zur Gathen--Gerhard AFFT requires strictly fewer multiplications than the first Gao--Mateer AFFT.
\end{proof}

\begin{remark}
    If the ordered basis of $W_m$ contains no nontrivial Cantor special prefix, then $\ell = 0$. Substituting $\ell=0$ into (\ref{Eqn:MixedFFT_M}), gives $\frac{1}{4} 2^m m^2 + \frac{3}{4} 2^m m$, which agrees with the multiplication cost in (\ref{Eqn:zurGathen_M}).
\end{remark}

\noindent For the multiplication comparison, the preceding inequality shows that the von zur Gathen--Gerhard AFFT requires fewer multiplications than the first Gao--Mateer AFFT whenever $\ell \geq m-2$, where $\ell$ is the maximum available dimension of the Cantor special basis. Table~\ref{Tab:GM_partial-cantor} reports the exact addition and multiplication counts for several finite fields and for the values of $m$ satisfying this condition.

% Table~\ref{Tab:GM_partial-cantor} compares the addition and multiplication costs of the von zur Gathen--Gerhard and first Gao--Mateer AFFTs for several finite fields. For each field, $\ell$ denotes the maximum available dimension of the Cantor special basis, and the listed values of $m$ satisfy $\ell\geq m-2$. Consequently, the von zur Gathen--Gerhard AFFT requires fewer multiplications than the exact first Gao--Mateer AFFT in every listed case. The table also shows that its addition count is lower for these parameters. Thus, for the cases listed in Table~\ref{Tab:GM_partial-cantor}, the partial Cantor structure reduces both arithmetic costs.

\begin{table}[htb]
    \centering
    \renewcommand{\arraystretch}{1.2}
    \caption{Comparison of the numbers of finite-field additions ($\#A$) and multiplications ($\#M$) required by the first Gao--Mateer and von zur Gathen--Gerhard AFFTs of length $2^m$. Here, $\ell$ is the maximum available dimension of the Cantor special basis over the finite field.}
    \label{Tab:GM_partial-cantor}
    \begin{tabular}{|c|c|c|cc|cc|}
        \hline
        \multirow{2}{*}{Finite field}
        & \multirow{2}{*}{$\ell$}
        & \multirow{2}{*}{$m$}
        & \multicolumn{2}{c|}{First Gao--Mateer}
        & \multicolumn{2}{c|}{von zur Gathen--Gerhard} \\
        \cline{4-7}
        & & & $\#A$ & $\#M$ & $\#A$ & $\#M$ \\
        \hline
        \multirow{2}{*}{$\mathbb{F}_{2^{10}}$}
        & \multirow{2}{*}{2}
        & 3  & 36 & 29 & 32 & 16 \\
        & & 4  & 112 & 81 & 104 & 72 \\
        \hline
        \multirow{2}{*}{$\mathbb{F}_{2^{12}}$}
        & \multirow{2}{*}{4}
        & 5  & 320 & 209 & 256 & 96 \\
        & & 6  & 864 & 513 & 736 & 416 \\
        \hline
        \multirow{2}{*}{$\mathbb{F}_{2^{24}}$}
        & \multirow{2}{*}{8}
        & 9  & 13824 & 6401 & 9728 & 2560 \\
        & & 10 & 33280 & 14337 & 25088 & 10752 \\
        \hline
        \multirow{2}{*}{$\mathbb{F}_{2^{48}}$}
        & \multirow{2}{*}{16}
        & 17 & 11141120 & 3211265 & 6553600 & 1179648 \\
        & & 18 & 24772608 & 6815745 & 15597568 & 4849664 \\
        \hline
    \end{tabular}
\end{table}

The results in Table~\ref{Tab:GM_partial-cantor} show that the von zur Gathen--Gerhard AFFT requires fewer additions and multiplications than the first Gao--Mateer AFFT in every listed case. For example, over $\mathbb{F}_{2^{10}}$ with $\ell=2$ and $m=4$, the multiplication count decreases from $81$ to $72$, while the addition count decreases from $112$ to $104$. The savings become more substantial for larger parameters. Over $\mathbb{F}_{2^{48}}$ with $\ell=16$ and $m=18$, the multiplication count decreases from $6{,}815{,}745$ to $4{,}849{,}664$, and the addition count decreases from $24{,}772{,}608$ to $15{,}597{,}568$. Thus, a partial Cantor special basis can reduce both arithmetic costs when it does not span the entire evaluation subspace.

% \begin{table}[htb]
%     \centering
%     \renewcommand{\arraystretch}{1.2}
%     \caption{Comparison of addition ($A$) and multiplication ($M$) costs in the FFT of length $2^m$ over different finite fields.}
%     \label{Tab:GM_partial-cantor}
%     \renewcommand{\arraystretch}{1.2}
%     \begin{tabular}{|c|c|cc|cc|}
%         \hline
%         {Finite Field} & $m$ & \multicolumn{2}{c|}{{Gao--Mateer}} & \multicolumn{2}{c|}{{von zur Gathen--Gerhard}} \\
%         \cline{3-6}
%         & & \#A & \#M & \#A & \#M \\
%         \hline
%         \multirow{2}{*}{$\mathbb{F}_{2^{10}}$} 
%         & 3  & 36 & 29 & 32 & 16 \\
%         & 4  & 112 & 81 & 104 & 72 \\
%         \hline
%         \multirow{2}{*}{$\mathbb{F}_{2^{12}}$} 
%         & 5  & 320 & 209 & 256 & 96 \\
%         & 6  & 864 & 513 & 736 & 416 \\
%         \hline
%         \multirow{2}{*}{$\mathbb{F}_{2^{24}}$} 
%         & 9  & 13824 & 6401 & 9728 & 2560 \\
%         & 10  & 33280 & 14337 & 25088 & 10752 \\
%         \hline
%         \multirow{2}{*}{$\mathbb{F}_{2^{48}}$}
%         & 17 & 11141120 & 3211265 & 6553600 & 1179648 \\
%         & 18 & 24772608 & 6815745 & 15597568 & 4849664 \\
%         \hline
%     \end{tabular}
% \end{table}

\subsection{Comparing Gao--Mateer and Algorithm~\ref{Algo:general-basis}}\label{Sec:partial-GMvAlgo1}
Following the comparison between the von zur Gathen--Gerhard and first Gao--Mateer algorithms, we now compare our proposed Algorithm~\ref{Algo:general-basis} with the first Gao--Mateer algorithm under the same partial Cantor-basis setting. Recall that, for evaluating a polynomial of degree less than $n=2^m$ over an affine space $\theta+W_m$, the Gao--Mateer algorithm requires $\frac{1}{4}n(\log_2 n)^2+\frac{3}{4}n\log_2 n$ additions and $\frac{3}{2}n\log_2 n-n+1$ multiplications, respectively~\cite{BSG2026}. Hence, by the cost analysis in Section~\ref{Sec:costNewAFFT}, Algorithm~\ref{Algo:general-basis} and the Gao--Mateer algorithm perform the same number of additions, whereas the Gao--Mateer algorithm requires fewer multiplications. We now turn to the case of a partial Cantor special basis in $\mathbb{F}_{2^k}$.

Suppose that $f(x) \in \mathbb{F}_{2^k}[x]$ be a polynomial of degree less than $n=2^m$, and we want to evaluate $f(x)$ over the affine subspace $$\theta+W_m=\theta+\langle \beta_0,\beta_1,\ldots, \beta_{m_1-1},\beta_{m_1},\ldots, \beta_{m_1+m_2-1} \rangle,$$ where $\set{\beta_0,\beta_1,\ldots, \beta_{m_1-1}}$ is a set of Cantor special basis in the field $\mathbb{F}_{2^k}$. In this case, we have $\mathbb{Z}_{W_{m_1}}(x)=S^{m_1}(x)$. Consequently, the Taylor expansion of $f(x)$ with respect to $Z_{W_{m_1}}(x)$ incurs no multiplication cost.

In the following theorem, we determine the multiplication cost of Algorithm~\ref{Algo:general-basis} when the first $m_1$ element of the ordered basis of $W_m$ form a Cantor special basis.

\begin{theorem}\label{Th:cost-mult-algo1-partial}
    Let $m=m_1+m_2$, where $m_2\geq1$, and let $\theta+W_m =\theta+\langle\beta_0,\ldots,\beta_{m-1}\rangle \subseteq\mathbb{F}_{2^k}$ be an $m$-dimensional affine subspace. Suppose that the first $m_1$ basis elements form a Cantor special basis of maximum dimension admitted by $\mathbb{F}_{2^k}$. Then Algorithm~\ref{Algo:general-basis}, with split $m=m_1+m_2$, requires
    \[
    2^{m-2}\bigl(m_2^2+3m_2+2m_1\bigr)
    \]
    finite field multiplications to evaluate a polynomial of degree less than $2^m$ over $\theta+W_m$. Moreover, Algorithm~\ref{Algo:general-basis} requires fewer multiplications than the first Gao--Mateer AFFT if and only if $m_2^2-3m_2+4 \leq 4 m_1$.
\end{theorem}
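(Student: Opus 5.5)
The plan is to count multiplications stage by stage in one level of Algorithm~\ref{Algo:general-basis} with split $m=m_1+m_2$. The $m_1$ is the length of the Cantor prefix. Each sub-AFFT is charged the cost already established earlier in the paper, and the resulting expression is then compared with the first Gao--Mateer count.

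\textbf{Step 1: the Taylor stage.} Since $\{\beta_0,\ldots,\beta_{m_1-1}\}$ is a Cantor special basis, we have $Z_{W_{m_1}}(x)=S^{m_1}(x)\in\mathbb{F}_2[x]$. By Remark~\ref{Remark:cantorVanishing}(i), the Taylor expansion therefore uses no field multiplications. This replaces the term $\tfrac12 2^m m_1m_2$ that Proposition~\ref{Prop:costTE} would contribute in the general setting.

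\textbf{Step 2: the column AFFTs.} There are $2^{m_1}$ column AFFTs of dimension $m_2$ over $U'=Z_{W_{m_1}}(\theta)+\langle Z_{W_{m_1}}(\beta_{m_1}),\ldots,Z_{W_{m_1}}(\beta_{m-1})\rangle$. By Theorem~\ref{Th:injectivity} this projected space has dimension $m_2$. Its basis carries no Cantor structure we may exploit; this is where maximality of the prefix is used, so we treat it as generic. We run Algorithm~\ref{Algo:general-basis} recursively on each column. By Theorem~\ref{Th:costAlgo1}, whose count is valid for any split, each costs $\tfrac14 2^{m_2}m_2^2+\tfrac34 2^{m_2}m_2$. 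The total is
\[
2^{m_1}\cdot 2^{m_2-2}\bigl(m_2^2+3m_2\bigr)=2^{m-2}\bigl(m_2^2+3m_2\bigr).
\]

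\textbf{Step 3: the row AFFTs.} There are $2^{m_2}$ row AFFTs over the cosets $\theta_k+W_{m_1}$, and $W_{m_1}$ is spanned by a full Cantor special basis. We evaluate these with the Cantor specialization (Algorithm~\ref{Algo:AFFT-Cantor-any-m1} or~\ref{Algo:AFFT-Cantor-fixed-m1}). Proposition~\ref{Prop:cost-Algo1-mult} gives $\tfrac12 2^{m_1}m_1$ multiplications per affine coset. The total is $2^{m_2}\cdot\tfrac12 2^{m_1}m_1=2^{m-2}\cdot 2m_1$.

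Adding Steps 1 to 3 gives $2^{m-2}(m_2^2+3m_2+2m_1)$, as claimed.

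\textbf{Step 4: the comparison.} We use the Gao--Mateer multiplication count $\tfrac32 2^m m-2^m+1$ from~\cite{BSG2026}. As in the proof of Theorem~\ref{Th:cost-mult-GG-partial}, we work with $\tfrac32 2^m m-2^m$; keeping the $+1$ converts the non-strict inequality into a strict comparison of integers. Multiplying by $4/2^m$ turns the condition into
\[
m_2^2+3m_2+2m_1\le 6(m_1+m_2)-4,
\]
which is equivalent to $m_2^2-3m_2+4\le 4m_1$. Every step is reversible, so this gives the ``if and only if''.

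\textbf{Main obstacle.} The delicate point is justifying which sub-algorithm is charged in each pass. The row sub-AFFTs must be over affine cosets, not subspaces, so we must confirm that the Cantor count $\tfrac12 2^{m_1}m_1$ holds for arbitrary $\theta_k$; Proposition~\ref{Prop:cost-Algo1-mult} states this for affine subspaces. The column sub-AFFTs are charged the general-basis cost precisely because the projected basis has no guaranteed Cantor structure. If needed, I would add a remark that any incidental structure there could only lower the count, and the theorem is stated for the generic charge.
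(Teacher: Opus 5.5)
Your proposal is correct and follows the paper's proof. The top-level Taylor step is multiplication-free. The $2^{m_1}$ column sub-AFFTs are charged the general-basis cost: the paper splits this into a Taylor part $\frac14 2^m(m_2^2-m_2)$ and a butterfly part $2^m m_2$, which you bundle into a single per-column charge via Theorem~\ref{Th:costAlgo1}. The $2^{m_2}$ row sub-AFFTs are charged $2^{m-1}m_1$, and the comparison reduces to the same inequality.

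Your handling of the $+1$ in the Gao--Mateer count tightens the paper's argument. Since both counts are integers, dropping the $+1$ makes the non-strict inequality equivalent to the strict ``fewer'' comparison in the statement. The paper omits the $+1$ ``for simplicity'' and only derives a ``no greater than'' condition.
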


\begin{proof}
Since the first $m_1$ basis elements form a Cantor special basis, the corresponding vanishing polynomial has coefficients in $\mathbb{F}_2$. Hence, the Taylor expansion on the $m_1$-side requires no finite field multiplications. Hence, the $A_{TE}(m_1)$ term in \eqref{Eqn:TE-cost} vanishes, and the multiplication cost of the Taylor expansion step in Algorithm~\ref{Algo:general-basis} satisfies
\[
A_{TE}(m)=2^{m_1}A_{TE}(m_2).
\]
It follows that the total multiplication cost of the Taylor expansion step is
\[
2^{m_1}\cdot \frac{1}{4}2^{m_2}(m_2^2-m_2)
=
\frac{1}{4}2^m(m_2^2-m_2).
\]

We now derive the multiplication performed by the recursive sub-FFTs. The $2^{m_1}$ column sub-AFFTs have dimension $m_2$ and are evaluated over a general basis. Since each requires $2^{m_2}m_2$ multiplications, their total cost is $2^{m_1}(2^{m_2}m_2)=2^m m_2$. The $2^{m_2}$ row sub-AFFTs have dimension $m_1$ and are evaluated over the Cantor special basis. Since each requires $\frac{1}{2} 2^{m_1}m_1$ multiplications, their total cost is $2^{m_2}\left(\frac{1}{2} 2^{m_1}m_1\right)= 2^{m-1} m_1$.

Therefore, the total multiplication cost of Algorithm~\ref{Algo:general-basis} is
\[
\frac{1}{4}2^m(m_2^2-m_2)+2^m m_2+ 2^{m-1} m_1
=
2^{m-2}\bigl(m_2^2+3m_2+2m_1\bigr).
\]

We now determine for which values of $m_1$ the multiplication cost of Algorithm~\ref{Algo:general-basis} is no greater than that of the Gao--Mateer algorithm. For simplicity, we use the expression
\[
\frac{3}{2}2^m m - 2^m = 2^{m-2}(6m-4)
\]
for the multiplication cost of the Gao--Mateer algorithm. Therefore,
\begin{equation*}
    \begin{aligned}
        & 2^{m-2}(m_2^2+3m_2+2m_1) \leq 2^{m-2}(6m-4) \\
        \iff & m_2^2+3m_2+2m_1 \leq 6m-4 \\
        \iff & m_2^2-3m_2+4  \leq 4 m_1.
    \end{aligned}
\end{equation*}

This completes the proof.
\end{proof}

% Therefore, the multiplication cost of the Taylor expansion step in Algorithm~\ref{Algo:general-basis} is reduced in this setting. More precisely, the Taylor expansion on the $m_1$-side incurs no multiplication cost, since division by $h(x)$ or its powers involves only coefficients in $\{0,1\}$. 

% Thus, for each fixed $m_1$, the inequality above yields an admissible range for $m_2$, and consequently for $m=m_1+m_2$. Table~\ref{Tab:Algo-1_partial-cantor} lists these ranges for the values of $m_1$ arising from the example fields under consideration.

\noindent Hence, for each fixed $m_1$, the condition $m_2^2-3m_2+4\leq 4m_1$ determines the range of $m_2$ for which Algorithm~\ref{Algo:general-basis} has less multiplication cost than the Gao--Mateer count. We next determine the addition cost of Algorithm~\ref{Algo:general-basis} under the same partial Cantor special basis assumption.

\begin{theorem}\label{Th:add-cost-Algo1-partial}
    Let $m=m_1+m_2$, where $m_2\geq1$, and let $\theta+W_m =\theta+\langle\beta_0,\ldots,\beta_{m-1}\rangle \subseteq\mathbb{F}_{2^k}$ be an $m$-dimensional affine subspace. Suppose that the first $m_1$ basis elements form a Cantor special basis of maximum dimension admitted by $\mathbb{F}_{2^k}$. Then Algorithm~\ref{Algo:general-basis}, with split $m=m_1+m_2$, requires
    \[
    2^{m-2}\left( m_2^2 + 5m_2 +4m_1 + m_1 \log_2 m_1  \right)
    \]
    finite field additions to evaluate a polynomial of degree less than $2^m$ over $\theta+W_m$.
\end{theorem}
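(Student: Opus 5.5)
The plan is to split the addition count of Algorithm~\ref{Algo:general-basis}, with the top-level split $m=m_1+m_2$, into three independent contributions and add them. The three parts are the single top-level Taylor expansion with respect to $Z_{W_{m_1}}(x)$, the $2^{m_1}$ column sub-AFFTs of dimension $m_2$, and the $2^{m_2}$ row sub-AFFTs of dimension $m_1$. This mirrors the multiplication argument of Theorem~\ref{Th:cost-mult-algo1-partial}, with addition counts substituted for multiplication counts.

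\emph{Structural observation.} Since the first $m_1$ basis elements form a Cantor special basis of \emph{maximum} dimension admitted by $\mathbb{F}_{2^k}$, Proposition~\ref{Prop:partial-cantor} forces $m_1$ to be a power of two. Indeed, if $2^{\lceil\log_2 m_1\rceil}\mid k$, then a Cantor special basis of dimension $2^{\lceil \log_2 m_1\rceil}$ also exists, so maximality gives $m_1=2^{\lceil\log_2 m_1\rceil}$. Consequently $Z_{W_{m_1}}(x)=S^{m_1}(x)=x^{2^{m_1}}+x$, by Remark~\ref{Remark:cantorVanishing}(ii). This is the step I expect to need the most care, since the term $m_1\log_2 m_1$ in the target formula only appears under this power-of-two structure.

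\emph{Top-level Taylor expansion.} Using the Gao--Mateer count quoted in the proof of Theorem~\ref{Theorem:cost-Algo-fixed-m1}, the Taylor expansion of a polynomial of degree $<2^m$ with respect to $x^{2^{m_1}}+x$ costs $\tfrac12 2^m\log_2(2^m/2^{m_1})=2^{m-2}\cdot 2m_2$ additions.

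\emph{Column sub-AFFTs.} These are over the projected space $Z_{W_{m_1}}(\theta)+Z_{W_{m_1}}(U)$, whose basis is generic. By Theorem~\ref{Th:costAlgo1}, each costs $\tfrac14 2^{m_2}m_2^2+\tfrac34 2^{m_2}m_2$ additions. Multiplying by $2^{m_1}$ gives $2^{m-2}(m_2^2+3m_2)$.

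\emph{Row sub-AFFTs.} These are over the cosets $\theta_k+W_{m_1}$, which are spanned by the Cantor special basis $\beta_0,\ldots,\beta_{m_1-1}$. They are run with the Cantor specialization (Algorithm~\ref{Algo:AFFT-Cantor-fixed-m1}) of the same framework. Since $m_1$ is a power of two, Theorem~\ref{Theorem:cost-Algo-fixed-m1} gives $2^{m_1}m_1+\tfrac14 2^{m_1}m_1\log_2 m_1$ additions per row. Multiplying by $2^{m_2}$ gives $2^{m-2}(4m_1+m_1\log_2 m_1)$. This is consistent with the multiplication proof, where the rows were charged the Cantor cost $\tfrac12 2^{m_1}m_1$.

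\emph{Conclusion.} Summing the three contributions gives $2^{m-2}\bigl(2m_2+m_2^2+3m_2+4m_1+m_1\log_2 m_1\bigr)=2^{m-2}\bigl(m_2^2+5m_2+4m_1+m_1\log_2 m_1\bigr)$, as claimed.
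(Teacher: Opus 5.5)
Your proof is correct and follows the paper's argument. It uses the same top-level Taylor-expansion cost $2^{m-1}m_2$ and the same observation that $m_1$ is a power of two, which you justify more explicitly via Proposition~\ref{Prop:partial-cantor}. It also applies Theorems~\ref{Th:costAlgo1} and~\ref{Theorem:cost-Algo-fixed-m1} to the column and row sub-AFFTs. The only difference is bookkeeping: you charge each sub-AFFT its full cost, whereas the paper separates the sub-AFFTs' Taylor-expansion additions, $2^{m-2}(m_2^2-m_2)$ and $2^{m-2}m_1\log_2 m_1$, from the split-invariant $2^m m$ base-case additions, and these regroup into exactly your three terms.
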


\begin{proof}
    Since $m_1$ is the maximum dimension of a Cantor special basis admitted by $\mathbb{F}_{2^k}$, it is a power of two. Moreover, the first $m_1$ elements of the ordered basis generate a subspace $W_{m_1}$ whose vanishing polynomial is 
    \[
    Z_{W_{m_1}}(x)= S^{m_1}(x)=x^{2^{m_1}}+x.
    \]
    Thus, the first Taylor expansion in Algorithm~\ref{Algo:general-basis} is performed with respect to a binomial having coefficients in $\mathbb{F}_2$. Consequently, each coefficient update requires one addition and no finite field multiplication. More specifically, the Taylor expansion of a polynomial of degree less than $2^m$ with respect to $S^{m_1}(x)$ requires
    \[\frac{1}{2} 2^m \log(2^m / 2^{m_1})= 2^{m-1} m_2\]
    additions. Also, since $W_{m_1}$ is generated by a full Cantor special basis, the $2^{m_2}$ row sub-AFFTs of dimension $m_1$ recursively enjoy the same addition savings established in Theorem~\ref{Theorem:cost-Algo-fixed-m1}, contributing
    \[
    2^{m_2}\cdot\frac{1}{4}2^{m_1}m_1\log_2m_1= 2^{m-2}m_1\log_2m_1
    \]
    additions in total. Whereas, the $2^{m_1}$ column sub-AFFTs of dimension $m_2$ contribute the generic Taylor expansion step. Therefore, by Theorem~\ref{Th:costAlgo1}, the total addition cost due to Taylor expansion of the column sAFFTs is given by
    \[
    2^{m_1}\cdot\frac{1}{4}2^{m_2}(m_2^2-m_2)= 2^{m-2}(m_2^2-m_2).
    \]

    Finally, the additions performed by the recursive sub-FFTs are independent of the dimension split and contributing total $2^mm$ additions. Adding the four contributions, the total number of additions is given by
    \begin{equation*}
        \begin{aligned}
            & 2^{m-1} m_2 + 2^{m-2}m_1\log_2m_1 + 2^{m-2}(m_2^2-m_2) + 2^m m\\
            =&  2^{m-2}\left( m_2^2+5m_2 +4m_1 +m_1\log_2m_1 \right).
        \end{aligned}
    \end{equation*}

This proves the stated addition count.
\end{proof}

\begin{remark}\label{Remark:add-Algo1-GM}
    Comparing against the Gao--Mateer addition cost $2^{m-2}(m^2+3m)$ (Table~\ref{Table:AFFT_cost-cmp}) and substituting $m=m_1+m_2$, this algorithm requires no more additions than the Gao--Mateer algorithm exactly when
    \[
    2m_2(1-m_1) \;\leq\; m_1(m_1-\log_2m_1-1).
    \]
    For every $m_1\geq 2$, the right-hand side is nonnegative while the left-hand side is negative for every $m_2\geq 1$, so this inequality holds unconditionally. Hence, whenever a Cantor special basis of dimension $m_1\geq 2$ is available, Algorithm~\ref{Algo:general-basis} requires strictly fewer additions than the Gao--Mateer algorithm for \emph{every} admissible $m_2$, not only within the multiplication-limited range above. Consequently, the multiplication condition $m_2^2-3m_2+4\leq 4 m_1$ is the sole binding constraint for Algorithm~\ref{Algo:general-basis} to dominate Gao--Mateer in \emph{both} additions and multiplications simultaneously.
\end{remark}

% \noindent Thus, for each fixed $m_1$, the multiplication inequality derived in Theorem~\ref{Th:cost-mult-algo1-partial} yields an admissible range for $m_2$, and consequently for $m=m_1+m_2$. Table~\ref{Tab:Algo-1_partial-cantor} lists these ranges for the values of $m_1$ arising from the example fields under consideration.

\begin{table}[htb]
\centering
\caption{Admissible parameter ranges for which Algorithm~\ref{Algo:general-basis}, by exploiting a partial Cantor special basis, requires fewer finite field multiplications and additions than the first Gao--Mateer AFFT. The ranges satisfy the simplified sufficient condition $m_2^2-3m_2+4\leq 4 m_1$, where $m=m_1+m_2$.}
\label{Tab:Algo-1_partial-cantor}
\renewcommand{\arraystretch}{1.2}
\begin{tabular}{|c|c|c|c|}
\hline
Finite Field & $m_1$ & Admissible range of $m_2$ & Admissible range of $m$ \\
\hline
$\mathbb{F}_{2^{10}}$ & 2  & $1 \leq m_2 \leq 4$ & $3 \leq m \leq 6$ \\
$\mathbb{F}_{2^{12}}$ & 4  & $1 \leq m_2 \leq 5$ & $5 \leq m \leq 9$ \\
$\mathbb{F}_{2^{24}}$ & 8  & $1 \leq m_2 \leq 7$ & $9 \leq m \leq 15$ \\
$\mathbb{F}_{2^{48}}$ & 16 & $1 \leq m_2 \leq 9$ & $17 \leq m \leq 25$ \\
\hline
\end{tabular}
\end{table}

\noindent Table~\ref{Tab:Algo-1_partial-cantor} shows that the admissible range generally expands as the dimension $m_1$ of the available Cantor special basis increases. For example, over $\mathbb{F}_{2^{10}}$, where $m_1=2$, Algorithm~\ref{Algo:general-basis} outperforms the first Gao--Mateer algorithm in both additions and multiplications for $3\leq m\leq 6$. Over $\mathbb{F}_{2^{48}}$, where $m_1=16$, the corresponding range is $17\leq m\leq 25$. Thus, when the available Cantor special basis does not span the full evaluation subspace, its partial structure yields a nontrivial range of AFFT dimensions for which Algorithm~\ref{Algo:general-basis} has lower arithmetic costs than the first Gao--Mateer algorithm.

% We emphasize that the admissible ranges in Table~\ref{Tab:Algo-1_partial-cantor} are derived solely from the comparison of multiplication costs. In the general-basis setting, both Algorithm~\ref{Algo:general-basis} and the Gao--Mateer algorithm have the same addition cost, whereas the Gao--Mateer algorithm requires fewer multiplications. However, when a partial Cantor special basis exists, Algorithm~\ref{Algo:general-basis} has lower multiplication cost than the Gao--Mateer algorithm for the admissible values of $m$ listed in Table~\ref{Tab:Algo-1_partial-cantor}, while also requiring fewer additions.

\section{Implementation and Benchmarking}\label{Sec:Implementation}
We implemented Algorithm~\ref{Algo:AFFT-Cantor-fixed-m1} in the C programming language. We refer to the implementation of this algorithm as the \emph{Dyadic AFFT} because of its hierarchical subdivision into subproblems whose sizes are powers of two. The source code is publicly available\footnote{\url{https://github.com/mtbadakhshan/additive-fft/C/dyadic}}.

Over a Cantor special basis, Algorithm~\ref{Algo:AFFT-Cantor-fixed-m1} and the LCH AFFT both require $\frac{1}{2}n\log_2 n$ finite-field multiplications, where $n=2^m$. However, unlike the LCH AFFT, which consists of separate basis-conversion and evaluation stages, our proposed Algorithm~\ref{Algo:AFFT-Cantor-fixed-m1} has a fully recursive structure with no separate stages. This structure provides better cache locality by design, thereby reducing memory-access overhead.

The memory overhead of the Dyadic AFFT is $O(1)$, as the algorithm operates in place without requiring auxiliary buffers. Each subproblem is laid out within the input array such that the distance between successive coefficients is determined by a run-time stride parameter. Moreover, all subproblems within a given column or row loop in Algorithm~\ref{Algo:AFFT-Cantor-fixed-m1} share the same Taylor-expansion pattern and stride, differing only in their starting offsets. This organization results in a regular and predictable memory-access pattern throughout the recursive computation. Consequently, once the working set of a subproblem fits within the cache, its descendant subproblems can be processed largely from cache, thereby reducing accesses to main memory and improving data locality.

\paragraph*{Benchmark}
We benchmark our implementation of Algorithm~\ref{Algo:AFFT-Cantor-fixed-m1} (The Dyadic AFFT) and compared its execution time with that of the LCH AFFT implementation~\cite{LCH-FFT2016} over a Cantor special basis. The LCH AFFT is a widely used method for evaluation over a Cantor special basis (e.g., \cite{LCH-Fast_Mult2018,LCH-Frobenius2018_2,LCH-Frobenius2018,BSG2026,chen2026HQC}). Both AFFT algorihtms require $\frac{1}{2}n\log_2 n$ finite-field multiplications, making Algorithm~\ref{Algo:AFFT-Cantor-fixed-m1} the most direct arithmetic comparison with LCH among the proposed algorithms in this paper. Both implementations operate over the binary extension field $\mathbb{F}_{2^{128}}$. To ensure a fair comparison, our implementation directly reuses the routines provided by the LCH implementation for field multiplication and for deriving the twiddle factors required in the base case (i.e., $m=1$).

\paragraph*{Experimental Setup} The experiments were conducted on two systems. Platform~A is equipped with an Intel Core i5-1250P processor operating at up to 4.4~GHz, 16~GB of LPDDR5-5200 memory, and Arch Linux (kernel 6.18.37-1-lts). Platform~B is equipped with an AMD Ryzen 9 9950X processor operating at up to 5.7~GHz, 64~GB of DDR5 memory, and Debian GNU/Linux~12 (kernel 6.12.12). 

All benchmarks were executed in a single thread. For each AFFT dimension~$m$, each algorithm was executed 300 times. The results are reported as the arithmetic mean running time over these iterations, together with the standard deviation. The AFFT length is $n=2^m$, and all running times are reported in milliseconds. 

\paragraph*{Benchmark Results}
Let $T_{\mathrm{LCH}}(m)$ and $T_{\mathrm{Algo~\ref{Algo:AFFT-Cantor-fixed-m1}}}(m)$ denote the mean running times of the LCH AFFT and the Dyadic AFFT, respectively, at dimension $m$. We define the speedup of the Dyadic AFFT relative to the LCH AFFT as $S_m =\frac{T_{\mathrm{LCH}}(m)}{T_{\mathrm{Algo~\ref{Algo:AFFT-Cantor-fixed-m1}}}(m)}$. Thus, $S_m>1$ indicates that the Algorithm~\ref{Algo:AFFT-Cantor-fixed-m1} is faster, whereas $S_m<1$ indicates that the LCH AFFT is faster.
Table~\ref{Table:benchmark-results} reports the mean running times of the LCH AFFT and Algorithm~\ref{Algo:AFFT-Cantor-fixed-m1} (Dyadic AFFT), together with
the corresponding speedups, on both platforms. Platform~A was tested for $9\leq m\leq28$, while Platform~B was tested for $9\leq m\leq 30$.

\begin{table*}[htb]
\centering
\caption{Benchmark comparison of the LCH AFFT and Algorithm~\ref{Algo:AFFT-Cantor-fixed-m1} across input sizes $m = 9,\dots,30$, evaluated on two hardware platforms. Reported values are mean running times with standard deviations, in milliseconds, averaged over 300 iterations. The speedup factor $S_m = T_{\mathrm{LCH}}/T_{\mathrm{Algo~\ref{Algo:AFFT-Cantor-fixed-m1}}}$ quantifies the performance advantage of Algorithm~\ref{Algo:AFFT-Cantor-fixed-m1} relative to the LCH AFFT at each size.}
\label{Table:benchmark-results}
\scriptsize
\setlength{\tabcolsep}{3.5pt}
\renewcommand{\arraystretch}{1.3}
\begin{tabular}{c|ccr|ccr}
\hline
&
\multicolumn{3}{c|}{Platform A} &
\multicolumn{3}{c}{Platform B} \\
$m$ &
LCH (ms) &
Algorithm~\ref{Algo:AFFT-Cantor-fixed-m1} (ms) &
$S_m$ &
LCH (ms) &
Algorithm~\ref{Algo:AFFT-Cantor-fixed-m1} (ms) &
$S_m$ \\
\hline
 9 & $0.0068 \pm 0.0005$ & $0.0053 \pm 0.0003$ & 1.283
   & $0.0058 \pm 0.0002$ & $0.0049 \pm 0.0001$ & 1.184 \\
10 & $0.0144 \pm 0.0004$ & $0.0115 \pm 0.0003$ & 1.252
   & $0.0126 \pm 0.0002$ & $0.0109 \pm 0.0001$ & 1.156 \\
11 & $0.0314 \pm 0.0006$ & $0.0257 \pm 0.0005$ & 1.222
   & $0.0271 \pm 0.0003$ & $0.0244 \pm 0.0002$ & 1.111 \\
12 & $0.0689 \pm 0.0008$ & $0.0580 \pm 0.0009$ & 1.188
   & $0.0590 \pm 0.0004$ & $0.0546 \pm 0.0002$ & 1.081 \\
13 & $0.1487 \pm 0.0012$ & $0.1283 \pm 0.0018$ & 1.159
   & $0.1269 \pm 0.0013$ & $0.1209 \pm 0.0003$ & 1.050 \\
14 & $0.3209 \pm 0.0049$ & $0.2861 \pm 0.0059$ & 1.122
   & $0.2724 \pm 0.0027$ & $0.2630 \pm 0.0005$ & 1.036 \\
15 & $0.6852 \pm 0.0074$ & $0.6020 \pm 0.0017$ & 1.138
   & $0.6169 \pm 0.0052$ & $0.6093 \pm 0.0023$ & 1.012 \\
16 & $1.5652 \pm 0.0057$ & $1.4351 \pm 0.0051$ & 1.091
   & $1.3188 \pm 0.0126$ & $1.3108 \pm 0.0024$ & 1.006 \\
17 & $3.7949 \pm 0.0088$ & $3.5196 \pm 0.0131$ & 1.078
   & $2.8022 \pm 0.0253$ & $2.8355 \pm 0.0058$ & 0.988 \\
18 & $8.1758 \pm 0.0465$ & $8.0434 \pm 0.0496$ & 1.016
   & $5.7276 \pm 0.0490$ & $5.8051 \pm 0.0064$ & 0.987 \\
19 & $18.9797 \pm 0.4522$ & $16.9004 \pm 0.2734$ & 1.123
   & $11.9291 \pm 0.0674$ & $12.2396 \pm 0.0178$ & 0.975 \\
20 & $46.1808 \pm 1.1991$ & $40.7112 \pm 0.7746$ & 1.134
   & $25.3817 \pm 0.0932$ & $26.1196 \pm 0.0254$ & 0.972 \\
21 & $183.9020 \pm 48.9503$ & $142.2784 \pm 34.6881$ & 1.293
   & $56.2517 \pm 0.2424$ & $58.4476 \pm 0.1072$ & 0.962 \\
22 & $429.4120 \pm 18.4772$ & $336.0054 \pm 9.5468$ & 1.278
   & $136.0706 \pm 0.4392$ & $130.3026 \pm 0.2912$ & 1.044 \\
23 & $916.0992 \pm 30.9932$ & $676.2897 \pm 16.7546$ & 1.355
   & $297.3184 \pm 0.7391$ & $284.1438 \pm 0.4391$ & 1.046 \\
24 & $2117.3995 \pm 24.1954$ & $1491.1270 \pm 17.4921$ & 1.420
   & $644.5942 \pm 1.1382$ & $611.6455 \pm 0.8414$ & 1.054 \\
25 & $4672.1950 \pm 42.1372$ & $3207.3658 \pm 50.6881$ & 1.457
   & $1358.2545 \pm 1.7262$ & $1301.5135 \pm 1.0899$ & 1.044 \\
26 & $10102.0037 \pm 48.7173$ & $6837.3803 \pm 87.0749$ & 1.477
   & $2861.3927 \pm 2.5597$ & $2734.3270 \pm 1.4908$ & 1.046 \\
27 & $21627.2300 \pm 69.4215$ & $14943.3218 \pm 123.8313$ & 1.447
   & $6053.9774 \pm 83.5571$ & $5871.4243 \pm 89.2968$ & 1.031 \\
28 & $46802.5562 \pm 293.2676$ & $33453.8079 \pm 390.4190$ & 1.399
   & $12825.6158 \pm 194.5572$ & $12417.0028 \pm 210.6376$ & 1.033 \\
29 & \multicolumn{3}{c|}{--}
   & $26757.3777 \pm 304.3647$ & $26187.5678 \pm 329.0410$ & 1.022 \\
30 & \multicolumn{3}{c|}{--}
   & $56927.2574 \pm 743.1407$ & $55624.5156 \pm 806.8231$ & 1.023 \\
\hline

\end{tabular}
\end{table*}

On Platform~A, Algorithm~\ref{Algo:AFFT-Cantor-fixed-m1} is uniformly faster than the LCH AFFT across all tested dimensions. This improvement is more significant on Platform~A due to its slower memory subsystem, which amplifies the benefits of the lower memory-access overhead of our algorithm. The observed speedups range from $1.02\times$ at $m=18$ to $1.48\times$ at $m=26$. At the largest tested dimension, $m=28$, our algorithm reduces the running time from $46.803$ to $33.454$ seconds, corresponding to a speedup of $1.40\times$ and an absolute reduction of $13.349$ seconds.

On Platform~B, the Algorithm~\ref{Algo:AFFT-Cantor-fixed-m1} is faster for $9\leq m\leq16$ and $22\leq m\leq30$, whereas the LCH AFFT is slightly faster for $17\leq m\leq21$. The largest LCH advantage is observed at $m=21$, where its running time is approximately $3.8\%$ lower. The Algorithm~\ref{Algo:AFFT-Cantor-fixed-m1} regains the advantage at $m=22$ and remains faster through $m=30$. At $m=30$, it reduces the mean running time from $56.927$ to $55.625$ seconds, corresponding to a speedup of $1.02$.
Overall, Algorithm~\ref{Algo:AFFT-Cantor-fixed-m1} is faster in 37 of the 42 measured cases, including all 20 cases on Platform~A and 17 of the 22 cases on Platform~B. The results therefore show that the proposed algorithm generally achieves lower running times than the LCH
AFFT, although the relative performance depends on the transform dimension and the target platform.

\paragraph*{Discussion of Other Proposed Algorithms} We implemented Algorithm~\ref{Algo:AFFT-Cantor-fixed-m1} for the experimental comparison. Although Algorithm~\ref{Algo:AFFT-Cantor-any-m1} provides the more general framework, supporting any ordered basis and any decomposition $m=m_1+m_2$, over a general ordered basis it has the same addition count as the first Gao--Mateer algorithm~\cite{Gao2010FFT} but requires more finite field multiplications. Nevertheless, Algorithm~\ref{Algo:general-basis} offers two important advantages. First, its arithmetic complexity is independent of the dimension split (Theorem~\ref{Th:costAlgo1}), allowing $m_1$ and $m_2$ to be selected according to the parallelism, cache hierarchy, memory capacity, and data-access characteristics of the target device. Second, when a partial Cantor special basis is available, Algorithm~1 can exploit this structure (see Section~\ref{Sec:partial-GMvAlgo1}) to reduce both additions and multiplications, whereas the first Gao--Mateer algorithm cannot generally obtain the same benefit because its evaluation basis changes during the recursion.

Algorithm~\ref{Algo:AFFT-Cantor-any-m1} specializes this matrix-decomposition framework to a Cantor special basis and retains the freedom to choose an arbitrary split $m=m_1+m_2$. Thus, it also permits the dimensions of the column and row sub-AFFTs to be selected according to device-dependent considerations such as parallelism, cache utilization, and memory organization. However, an arbitrary split may increase addition count in the Taylor expansion stage. Algorithm~\ref{Algo:AFFT-Cantor-fixed-m1} instead sets $m_1= 2^{\floor{\log_2(m-1)}}$ and $m_2=m-m_1$ at each recursive step. Since $m_1$ is a power of two, the corresponding subspace polynomial has a binomial form, so each division step in the Taylor expansion requires only one addition. This algorithm therefore trades splitting flexibility for a lower addition count determined by the binary representation of (see Theorem~\ref{Theorem:cost-Algo-fixed-m1}).

%%%%%%%%%%%%%%%%%%%%%%
%%%LCH
%%%%%%%%%%%%%%%%%%%%%%
% \input{LCH-butterfly}

\section{Concluding Remarks}\label{Sec:conclusion}
In this paper, we developed a matrix-decomposition framework for additive fast Fourier transforms (AFFTs) over binary extension fields. Motivated by the Bailey's four-step FFT~\cite{Bailey1990}, the proposed framework uses Taylor expansion with respect to a subspace vanishing polynomial to arrange the coefficients of the input polynomial into a matrix and decompose the original evaluation problem into independent column and row sub-AFFTs. This formulation applies to any ordered basis and any decomposition $m=m_1+m_2$. The resulting general-basis algorithm requires $\frac{1}{4}n(\log_2 n)^2+\frac{3}{4}n\log_2 n$ additions and the same number of multiplications, independently of the chosen dimension split. The Bailey-style matrix decomposition provides structural and practical advantages. In particular, this invariance permits the split to be selected according to implementation requirements, such as parallelism and memory organization, without changing the arithmetic complexity.

We then specialized the framework to evaluation subspaces admitting a Cantor special basis. In this setting, the relevant subspace vanishing polynomials have binary coefficients, allowing the Taylor expansion stage to be performed without finite field multiplications. We first presented an algorithm supporting an arbitrary decomposition $m=m_1+m_2$, which provides flexibility in choosing the dimensions of the column and row sub-AFFTs. We then derived a recursive algorithm in which $m_1$ is chosen as the largest power of two smaller than $m$. This choice preserves the binomial form of the relevant subspace vanishing polynomial throughout the recursion and minimizes the addition cost of the Taylor expansion stage. The algorithm requires exactly $\frac{1}{2}n\log_2 n$ finite field multiplications, together with a closed-form number of additions determined by the binary representation of $m$. Our implementation and benchmark results show that Algorithm~\ref{Algo:AFFT-Cantor-fixed-m1} outperforms the LCH AFFT~\cite{LCH-basis2014,LCH-FFT2016} over a Cantor special basis in 37 of the 42 tested configurations, and at every tested dimension on one of the two platforms.

We further generalized the butterfly phase of the LCH AFFT in the novel polynomial basis to an arbitrary decomposition $m = m_1 + m_2$, showing in Theorem~\ref{Th:costGenLCH} that its cost is invariant under the split, at exactly $n \log_2 n$ additions and $\tfrac{1}{2} n \log_2 n$ multiplications.
Finally, we formalized the notion of a partial Cantor special basis, in which only a prefix of dimension $\ell \le m$ of the ordered basis satisfies the Cantor recursion. We showed that this partial structure reduces the arithmetic costs of both the von zur Gathen--Gerhard AFFT~\cite{zurGathenFFT} and our general-basis algorithm (Algorithm~\ref{Algo:general-basis}), whereas the first Gao--Mateer algorithm~\cite{Gao2010FFT} cannot exploit the same advantage because its evaluation basis changes during the recursion. The advantage of the von zur Gathen--Gerhard AFFT is limited to $m\in\{\ell+1,\ell+2\}$ by the condition $\ell\ge m-2$ in Theorem~\ref{Th:cost-mult-GG-partial}. In contrast, setting $m_1=\ell$ allows Algorithm~\ref{Algo:general-basis} to outperform the first Gao--Mateer AFFT over substantially wider ranges. For example, over $\mathbb{F}_{2^{48}}$ with $m_1=16$, it outperforms the first Gao--Mateer AFFT for $17\le m\le 25$ (Table~\ref{Tab:Algo-1_partial-cantor}).

This work suggests several promising directions for future research. The first problem is to extend the proposed matrix-decomposition framework to truncated AFFTs. This would require determining how to restrict the Taylor coefficient matrix and the row and column sub-AFFTs to prescribed subsets of evaluation points, as well as deriving arithmetic complexity bounds that depend on the number and structure of the requested evaluations. 
The second problem is to develop Frobenius variants~\cite{LCH-Frobenius2018} of Algorithms~\ref{Algo:AFFT-Cantor-any-m1} and~\ref{Algo:AFFT-Cantor-fixed-m1}. For polynomials over $\mathbb{F}_2$, the relation $f(a^2)=f(a)^2$  allows the algorithms to evaluate the polynomial only at representatives of the Frobenius orbits and to recover the remaining values by repeated squaring. This requires determining how such orbit representatives relate to the Taylor coefficient matrix and to the induced row and column sub-AFFTs. 
The third  problem is to reduce the $\mathcal{O}(n(\log_2 n)^2)$ arithmetic complexity of Algorithm~\ref{Algo:general-basis}. This could be achieved through a more efficient Taylor-expansion stage or an alternative algebraic decomposition, while preserving the row--column structure for arbitrary ordered bases. Establishing lower bounds on the required numbers of additions and multiplications over a Cantor special basis would further clarify whether the operation counts achieved by Algorithm~\ref{Algo:AFFT-Cantor-fixed-m1} can be reduced.

% Based on this work, several promising directions remain for future research. The split-invariant arithmetic complexity of the proposed Algorithm~\ref{Algo:general-basis} makes it natural to investigate architecture-aware choices of $m_1$ and $m_2$, since different decompositions may provide different trade-offs in parallelism, cache locality, and memory traffic. Algorithm~\ref{Algo:AFFT-Cantor-any-m1} is another promising candidate for implementation. It operates directly on polynomials represented in the standard monomial basis and avoids the separate basis-conversion stage required by LCH. Its freedom to choose an arbitrary split $m=m_1+m_2$ could be used to match the subproblem dimensions to the target cache hierarchy, potentially reducing cache misses and memory traffic. Such architecture-aware choices may incur more additions than choosing $m_1$ recursively as in Algorithm~\ref{Algo:AFFT-Cantor-fixed-m1}. Implementations would therefore help determine whether improved memory-access behavior can compensate for any additional arithmetic cost in practice. It would also be useful to investigate efficient constructions of partial Cantor special bases and optimize the interaction between the available Cantor prefix and the dimension split. Finally, evaluating the proposed AFFTs in applications such as polynomial arithmetic, error-correcting codes, and proof systems over binary extension fields would reveal how their arithmetic costs and memory-access characteristics affect end-to-end performance.

% \medskip

% \bibliographystyle{plain}
\bibliographystyle{IEEEtran}
\bibliography{Refined_Ref,biblio}

@misc{STARK2018,
      author = {Eli Ben-Sasson and Iddo Bentov and Yinon Horesh and Michael Riabzev},
      title = {Scalable, transparent, and post-quantum secure computational integrity},
      howpublished = {Cryptology {ePrint} Archive, Paper 2018/046},
      year = {2018},
      url = {https://eprint.iacr.org/2018/046}
}

@article{CooleyTukey1965,
    ISSN = {00255718, 10886842},
    URL = {http://www.jstor.org/stable/2003354},
    author = {James W. Cooley and John W. Tukey},
    journal = {Mathematics of Computation},
    number = {90},
    pages = {297--301},
    publisher = {American Mathematical Society},
    title = {An Algorithm for the {M}achine {C}alculation of {C}omplex {F}ourier {S}eries},
    volume = {19},
    year = {1965}
}

@ARTICLE{Gao2010FFT,
  author={Gao, Shuhong and Mateer, Todd},
  journal={IEEE Transactions on Information Theory}, 
  title={Additive {F}ast {F}ourier {T}ransforms {O}ver {F}inite {F}ields}, 
  year={2010},
  volume={56},
  number={12},
  pages={6265-6272},
  doi={10.1109/TIT.2010.2079016},
  url     = {https://doi.org/10.1109/TIT.2010.2079016}
}

@ARTICLE{WangZhu1988,
    author={Wang, Yao and Zhu, Xuelong},
    journal={IEEE Journal on Selected Areas in Communications}, 
    title={A fast algorithm for the {F}ourier transform over finite fields and its {VLSI} implementation}, 
    year={1988},
    volume={6},
    number={3},
    pages={572-577},
    doi     = {10.1109/49.1926},
    url     = {https://ieeexplore.ieee.org/document/1926}
}

@article{Cantor1989FFT,
  title = {On arithmetical algorithms over finite fields},
  journal = {Journal of Combinatorial Theory, Series A},
  volume = {50},
  number = {2},
  pages = {285-300},
  year = {1989},
  issn = {0097-3165},
  doi = {https://doi.org/10.1016/0097-3165(89)90020-4},
  url = {https://www.sciencedirect.com/science/article/pii/0097316589900204},
  author = {David G. Cantor}
}

@inproceedings{zurGathenFFT,
  author = {Gathen, Joachim von zur and Gerhard, J\"{u}rgen},
  title = {Arithmetic and factorization of polynomial over $\mathbb{F}_2$ (extended abstract)},
  year = {1996},
  isbn = {0897917960},
  publisher = {Association for Computing Machinery},
  address = {New York, NY, USA},
  url = {https://doi.org/10.1145/236869.236882},
  doi = {10.1145/236869.236882},
  booktitle = {Proceedings of the 1996 International Symposium on Symbolic and Algebraic Computation},
  pages = {1–9},
  numpages = {9},
  location = {Zurich, Switzerland},
  series = {ISSAC 1996}
}

@InProceedings{Aurora2019,
    author={Ben-Sasson, Eli
    and Chiesa, Alessandro
    and Riabzev, Michael
    and Spooner, Nicholas
    and Virza, Madars
    and Ward, Nicholas P.},
    editor={Ishai, Yuval
    and Rijmen, Vincent},
    title={Aurora: {T}ransparent {S}uccinct {A}rguments for {R1CS}},
    booktitle={Advances in Cryptology -- EUROCRYPT 2019},
    year={2019},
    publisher={Springer International Publishing},
    address={Cham},
    doi  = {10.1007/978-3-030-17653-2_4},
    url       = {https://link.springer.com/chapter/10.1007/978-3-030-17653-2_4},
    pages={103-128}
}

@article{Polaris2022,
  title={{Polaris: Transparent Succinct Zero-Knowledge Arguments for {R1CS} with Efficient Verifier}},
  author={Fu, Shihui and Gong, Guang},
  journal={Proceedings on Privacy Enhancing Technologies},
  year={2022},
  url          = {https://doi.org/10.2478/popets-2022-0027},
  doi          = {10.2478/POPETS-2022-0027}
}

@Inbook{Gao2003,
author="Gao, Shuhong",
editor="Bhargava, Vijay K.
and Poor, H. Vincent
and Tarokh, Vahid
and Yoon, Seokho",
title="A {N}ew {A}lgorithm for {D}ecoding {Reed-S}olomon {C}odes",
bookTitle="Communications, Information and Network Security",
year="2003",
publisher="Springer US",
address="Boston, MA",
pages="55--68",
isbn="978-1-4757-3789-9",
doi="10.1007/978-1-4757-3789-9_5",
url="https://doi.org/10.1007/978-1-4757-3789-9_5"
}

@article{LCH-basis2014,
    title={Novel {P}olynomial {B}asis and {I}ts {A}pplication to {R}eed-{S}olomon {E}rasure {C}odes},
    author={Sian-Jheng Lin and Wei-Ho Chung and Yunghsiang Sam Han},
    journal={2014 IEEE 55th Annual Symposium on Foundations of Computer Science},
    year={2014},
    pages={316-325},
    url = {https://doi.org/10.1109/FOCS.2014.41},
    doi = {10.1109/FOCS.2014.41}
}

@ARTICLE{LCH-conv2016,
    author={Sian-Jheng Lin and Tareq Y. {Al-Naffouri} and Yunghsiang Sam Han and Wei-Ho Chung},
    journal={IEEE Transactions on Information Theory}, 
    title={Novel {P}olynomial {B}asis {W}ith {F}ast {F}ourier {T}ransform and {I}ts {A}pplication to {Reed–S}olomon {E}rasure Codes}, 
    year={2016},
    volume={62},
    number={11},
    pages={6284-6299},
    doi     = {10.1109/TIT.2016.2608892},
    url     = {https://ieeexplore.ieee.org/abstract/document/7565465/}
}

@ARTICLE{LCH-FFT2016,
    author={Sian-Jheng Lin and Tareq Y. {Al-Naffouri} and Yunghsiang Sam Han},
    journal={IEEE Transactions on Information Theory}, 
    title={{FFT} {A}lgorithm for {B}inary {E}xtension {F}inite {F}ields and {I}ts {A}pplication to {R}eed–{S}olomon {C}odes}, 
    year={2016},
    volume={62},
    number={10},
    pages={5343-5358},
    doi={10.1109/TIT.2016.2600417},
    url     = {https://ieeexplore.ieee.org/document/7543456/}
}

@article{COXON2021,
    title = {Fast transforms over finite fields of characteristic two},
    journal = {Journal of Symbolic Computation},
    volume = {104},
    pages = {824-854},
    year = {2021},
    issn = {0747-7171},
    doi = {https://doi.org/10.1016/j.jsc.2020.10.002},
    url = {https://www.sciencedirect.com/science/article/pii/S0747717120301127},
    author = {Nicholas Coxon}
}

@InProceedings{BernsteinChou2014,
    author="Bernstein, Daniel J.
    and Chou, Tung",
    editor="Joux, Antoine
    and Youssef, Amr",
    title="Faster {B}inary-{F}ield {M}ultiplication and {F}aster {B}inary-{F}ield {MACs}",
    booktitle="Selected Areas in Cryptography -- SAC 2014",
    year="2014",
    publisher="Springer International Publishing",
    address="Cham",
    pages="92--111",
    doi       = {10.1007/978-3-319-13051-4_6},
    url       = {https://doi.org/10.1007/978-3-319-13051-4_6}
}

@Article{Ames2017Ligero,
author={Ames, Scott
and Hazay, Carmit
and Ishai, Yuval
and Venkitasubramaniam, Muthuramakrishnan},
title={Ligero: {L}ightweight {S}ublinear {A}rguments {W}ithout a {T}rusted {S}etup},
journal={Designs, Codes and Cryptography},
year={2023},
month={Nov},
day={01},
volume={91},
number={11},
pages={3379-3424},
issn={1573-7586},
doi={10.1007/s10623-023-01222-8},
url={https://doi.org/10.1007/s10623-023-01222-8}
}

@InProceedings{Chiesa2020Fractal,
author="Chiesa, Alessandro
and Ojha, Dev
and Spooner, Nicholas",
editor="Canteaut, Anne
and Ishai, Yuval",
title="{Fractal: Post-quantum and Transparent Recursive Proofs from Holography}",
booktitle="Advances in Cryptology -- EUROCRYPT 2020",
year="2020",
publisher="Springer International Publishing",
address="Cham",
pages="769--793",
doi       = {10.1007/978-3-030-45721-1_27},
url       = {https://link.springer.com/chapter/10.1007/978-3-030-45721-1_27},
isbn="978-3-030-45721-1"
}

@misc{LCH-Fast_Mult2018,
    title={Faster {M}ultiplication for {L}ong {B}inary {P}olynomials}, 
    author={Ming-Shing Chen and Chen-Mou Cheng and Po-Chun Kuo and Wen-Ding Li and Bo-Yin Yang},
    year={2018},
    archivePrefix={arXiv},
    howpublished  = {arXiv:~1708.09746},
    primaryClass={cs.SC},
    url={https://arxiv.org/abs/1708.09746}
    
}

@misc{LCH-Frobenius2018_2,
  author = {{M.-S Chen and C.-M Cheng and P.-C Kuo and W.-D Li and B.-Y Yang}},
  title  = {Multiplying {B}oolean Polynomials with {F}robenius Partitions in Additive Fast {F}ourier Transform},
  year   = {2018},
  note   = {arXiv:1803.11301},
  url    = {https://arxiv.org/abs/1803.11301}
}

@inproceedings{LCH-Frobenius2018,
    author = {Li, Wen-Ding and Chen, Ming-Shing and Kuo, Po-Chun and Cheng, Chen-Mou and Yang, Bo-Yin},
    title = {Frobenius {A}dditive {F}ast {F}ourier {T}ransform},
    year = {2018},
    url = {https://doi.org/10.1145/3208976.3208998},
    doi = {10.1145/3208976.3208998},
    booktitle = {Proceedings of the 2018 ACM International Symposium on Symbolic and Algebraic Computation},
    pages = {263–270},
    numpages = {8},
    series = {ISSAC 2018}
}

@InProceedings{ECFFT1_2023,
author = {Eli Ben-Sasson and Dan Carmon and Swastik Kopparty and David Levit},
title = {Elliptic {C}urve {F}ast {F}ourier {T}ransform ({ECFFT}) {Part I}: {L}ow-degree {E}xtension in {T}ime $O(n \log n)$ over all Finite Fields},
booktitle = {Proceedings of the 2023 Annual ACM-SIAM Symposium on Discrete Algorithms (SODA)},
year={2023},
pages = {700-737},
doi = {10.1137/1.9781611977554.ch30},
URL = {https://epubs.siam.org/doi/abs/10.1137/1.9781611977554.ch30},
eprint = {https://epubs.siam.org/doi/pdf/10.1137/1.9781611977554.ch30},
}

@misc{CircleFFT2024,
    author = {Ulrich Haböck and David Levit and Shahar Papini},
    title = {Circle {STARKs}},
    howpublished = {Cryptology {ePrint} Archive, Paper 2024/278},
    year = {2024},
    url = {https://eprint.iacr.org/2024/278}
}

@inproceedings{GFFT2024,
  author    = {Songsong Li and Chaoping Xing},
  title     = {Fast {F}ourier Transform via Automorphism Groups of Rational Function Fields},
  booktitle = {Proceedings of the 2024 Annual ACM-SIAM Symposium on Discrete Algorithms ({SODA})},
  pages     = {3836--3859},
  year      = {2024},
  publisher = {Society for Industrial and Applied Mathematics ({SIAM})},
  doi       = {10.1137/1.9781611977912.135},
  url       = {https://doi.org/10.1137/1.9781611977912.135}
}

@InProceedings{BSG2026,
	author="Badakhshan, Mohammadtaghi
	and Samanta, Susanta
	and Gong, Guang",
	editor="Boura, Christina
	and Mashatan, Atefeh
	and Miri, Ali",
	title="Accelerating Post-quantum Secure {zkSNARKs} by Optimizing Additive {FFT}",
	booktitle="Selected Areas in Cryptography -- SAC 2025",
	year="2026",
	publisher="Springer Nature Switzerland",
	address="Cham",
	pages="339--368",
	isbn="978-3-032-10536-3",
    doi       = {10.1007/978-3-032-10536-3_13},
    url       = {https://link.springer.com/chapter/10.1007/978-3-032-10536-3_13}
}

@article{Bailey1990,
  author  = {David H. Bailey},
  title   = {{FFTs} in External or Hierarchical Memory},
  journal = {The Journal of Supercomputing},
  volume  = {4},
  number  = {1},
  pages   = {23--35},
  year    = {1990},
  doi     = {10.1007/BF00162341},
  url     = {https://doi.org/10.1007/BF00162341}
}

@article{chen2026HQC,
    author  = {Chen, Ming-Shing and Chiu, Chun-Ming and Peng, Chun-Tao and Yang, Bo-Yin},
    title   = {Accelerating {HQC} with {A}dditive {FFT}},
    journal = {IACR Transactions on Cryptographic Hardware and Embedded Systems},
    year    = {2026},
    volume  = {2026},
    number  = {2},
    pages   = {520--544},
    month   = apr,
    doi     = {10.46586/tches.v2026.i2.520-544},
    url     = {https://tches.iacr.org/index.php/TCHES/article/view/12898}
}

% \medskip

\appendices

\section{Von zur Gathen--Gerhard and Cantor AFFT Algorithms}\label{Sec:GG-Cantor-AFFT}
This section reviews the classical additive FFTs of Cantor~\cite{Cantor1989FFT} and von zur Gathen and Gerhard~\cite{zurGathenFFT}, focusing on the features used in the partial Cantor basis analysis of Section~\ref{Sec:partial-Cantor}. Cantor introduced the additive FFT in 1989 for evaluation subspaces generated by a Cantor special basis. In 1996, von zur Gathen and Gerhard generalized Cantor's approach to general ordered bases, at the cost of additional finite field operations.

Let $f(x) \in \mathbb{F}_{2^k}[x]$ be a polynomial of degree less than $n=2^m$ and we want to evaluate $f(x)$ over the affine subspace $\theta+W_m=\theta+\langle \beta_0,\beta_1,\ldots,\beta_{m-1} \rangle$. The evaluation of $f(x)$ using the Von zur Gathen--Gerhard algorithm proceeds as follows: first, compute two polynomials $f_0(x)$ and $f_1(x)$ such that $f_0(x)=f(x)$ for all $x\in \theta+W_{m-1}$ and $f_1(x)=f(x)$ for all $x\in \theta+\beta_{m-1}+W_{m-1}$. The polynomials $f_0(x)$ and $f_1(x)$ can be obtained by taking the remainders of $f(x)$ when divided by the vanishing polynomials of the affine subspaces. Specifically, we have
\begin{equation*}
    \begin{aligned}
        f_0(x) &= f(x) \mod \mathbb{Z}_{W_{m-1}}(x+ \theta) \quad \text{and} \quad
        f_1(x) &= f(x) \mod \mathbb{Z}_{W_{m-1}}(x + \theta + \beta_{m-1}).
    \end{aligned}
\end{equation*}
By applying this reduction step again to $\fft(f_0,\theta + W_{m-1})$ and $\fft(f_1,\theta + \beta_{m-1} + W_{m-1})$, we continue until all the resulting polynomials $f_0(x)$ and $f_1(x)$ are constants.

\begin{algorithm2e}[htb]
    \caption{Von zur Gathen--Gerhard additive FFT of length $n = 2^m$}\label{Algo:Gathen--Gerhard}
    \SetAlgoLined
    { \scriptsize
    \KwIn{$f(x) \in \mathbb{F}_{2^k}[x]$ of degree $< n = 2^m$ and an affine subspace $\theta+W_m=\theta+ \langle \beta_0,\beta_1,\ldots,\beta_{m-1} \rangle$, where $\set{\beta_0=1,\beta_1,\ldots,\beta_{m-1}}$ is any ordered basis of $\mathbb{F}_{2^k}$.}
    \KwOut{$\fft(f, \theta+W_m)$.}

    \If{$m=0$}
    {
        \Return $f(\theta)$.\
    }
    Compute 
    \[
    \begin{aligned}
        f_0(x) &= f(x) \mod \mathbb{Z}_{W_{m-1}}(x + \theta)\ \quad \text{and} \quad
        f_1(x) = f(x) \mod \mathbb{Z}_{W_{m-1}}(x + \theta + \beta_{m-1}).
    \end{aligned}
    \]
    \Return $\fft(f_0,\theta+W_{m-1})||\fft(f_1, \theta+\beta_{m-1}+ W_{m-1})$.\
    }
\end{algorithm2e}

\noindent Cantor's AFFT is obtained by specializing the recursion in Algorithm~\ref{Algo:Gathen--Gerhard} to affine subspaces generated by a Cantor special basis. Under a Cantor special basis, the vanishing polynomials take a particularly simple form. More specifically, if $\{\beta_0 = 1, \beta_1, \ldots, \beta_{m-1}\}$ is a Cantor special basis, then the vanishing polynomial of $W_i = \langle \beta_0, \ldots, \beta_{i-1}\rangle$ is
\(
Z_{W_i}(x) = S^i(x),
\)
i.e. the $i$-fold composition of the mapping $S$ (see Section~\ref{Sec:Preliminaries}). In particular, $Z_{W_i}(x)$ is linearized with all coefficients in $\mathbb{F}_{2}$, so polynomial division by $Z_{W_i}$ requires \emph{zero} $\mathbb{F}_{2^k}$-multiplications.

% \begin{algorithm}[h]
%     \caption{Cantor additive FFT of length $n = 2^m$}\label{Algo:Cantor}
%     \SetAlgoLined
%     { \scriptsize
%     \KwIn{$f(x) \in \mathbb{F}_{2^k}[x]$ of degree $< n = 2^m$, where $k=2^{\ell}$ and the affine subspace $\theta+W_m=\theta+ \langle \beta_0,\beta_1,\ldots,\beta_{m-1} \rangle$, where $\set{\beta_0=1,\beta_1,\ldots,\beta_{m-1}}$ is a Cantor special basis.}
%     \KwOut{$\fft(f, \theta+W_m)$.}

%     \If{$m=0$}
%     {
%         \Return $f(\theta)$.\
%     }
%     Compute 
%     \[
%     \begin{aligned}
%         f_0(x) &= f(x) \mod S^{m-1}(x + \theta),\  \text{ and } \\
%         f_1(x) &= f(x) \mod S^{m-1}(x + \theta + \beta_{m-1}).
%     \end{aligned}
%     \]
%     \Return $\fft(f_0,\theta+W_{m-1})||\fft(f_1, \theta+\beta_{m-1}+ W_{m-1})$.\
%     }
% \end{algorithm}

\end{document}